\algnewcommand{\IIf}[1]{\State\algorithmicif\ #1\ \algorithmicthen}
\algnewcommand{\EndIIf}{\unskip\ \algorithmicend\ \algorithmicif}
        \algrenewcommand\algorithmicrequire{\textbf{Input:}}
        \algrenewcommand\algorithmicensure{\textbf{Output:}}
	\theoremstyle{definition}
	\newtheorem{coro}{Corollary}
	\newtheorem{asm}{Assumption}
\newcommand{\mspan}[2][]{\ensuremath{\textrm{span}_{#1}(#2)}} 
\newcommand{\cI}{\ensuremath{\mathcal{I}}}
\newcommand{\cM}{\ensuremath{\mathcal{M}}}
\title{Optimal Verification of a Minimum-Weight Basis in an Uncertainty Matroid}
\author{Haya Diwan}{Department of Computer Science and Engineering, New York University, USA}{hd2371@nyu.edu}{https://orcid.org/0009-0009-4490-2384}{Partially supported by NSF Grant 1909335.}
\author{Lisa Hellerstein}{Department of Computer Science and Engineering, New York University, USA}{lisa.hellerstein@nyu.edu}{https://orcid.org/0000-0002-3743-7965}{Partially supported by NSF grant 1909335.}
\author{Nicole Megow}{Faculty of Mathematics and Computer Science, University of Bremen, Germany}{nicole.megow@uni-bremen.de}{https://orcid.org/0000-0002-3531-7644}{Supported by DFG grant no.\ 547924951.}
\author{Jens Schlöter}{Centrum Wiskunde \& Informatica (CWI), Amsterdam, The Netherlands}{jens.schloter@cwi.nl}{https://orcid.org/0000-0003-0555-4806}{Supported by the research project \emph{Optimization for and with Machine Learning (OPTIMAL)}, funded by the Dutch Research Council (NWO), grant number OCENW.GROOT.2019.015.}
\authorrunning{H.\ Diwan et al.} 
\keywords{Matroid verification, minimum weight basis, minimum spanning tree, query, uncertainty} 
\begin{document}

\maketitle
 
\begin{abstract}

Research in explorable uncertainty addresses combinatorial optimization problems where there is partial information about the values of numeric input parameters, and exact values of these parameters can be determined by performing costly queries. 
The goal is to design an adaptive query strategy that minimizes the query cost incurred in computing an optimal solution. Solving such problems generally requires that we be able to solve the associated verification problem: given the answers to all queries in advance, find a minimum-cost set of queries that certifies an optimal solution to the combinatorial optimization problem.
We present a polynomial-time algorithm for verifying a minimum-weight basis of a matroid, where each weight lies in a given uncertainty area. These areas may be finite sets, real intervals, or unions of open and closed intervals, strictly generalizing previous work by Erlebach and Hoffman which only handled the special case of open intervals. Our algorithm introduces new techniques to address the resulting challenges.  

Verification problems are of particular importance in the area of explorable uncertainty, as the structural insights and techniques used to solve the verification problem often heavily influence work on the corresponding online problem and its stochastic variant. In our case, we use structural results from the verification problem to give a best-possible algorithm for a promise variant of the  corresponding adaptive online problem. Finally, we show that our algorithms can be applied to two learning-augmented variants of the minimum-weight basis problem under explorable uncertainty.

\end{abstract}

\newpage
\setcounter{page}{1}

\section{Introduction}

Consider the problem of finding a minimum-weight basis (MWB) of a matroid, under uncertainty. We consider a weighted matroid $M = (E, \mathcal{I}, w)$, where each element $e \in E$ has a weight $w_e$ that is known to be contained within a given range, called an \emph{uncertainty area} $A_e$. The exact value of $w_e$ can be obtained by performing a query on element $e$ that incurs a non-negative cost of $c_e$. The problem is to determine an MWB while minimizing the total query cost.  In the special case where $M$ is the graphic matroid of a connected graph, so the bases of $M$ correspond to spanning trees, this is the minimum spanning tree (MST) problem under uncertain edge weights.  
This was one of the earliest studied problems in explorable uncertainty, motivated by network design: spanning trees capture fundamental connectivity tasks, and uncertainty areas naturally model edge weights known only within ranges, such as costs, latencies, or capacities.

The verification version of the MWB problem under uncertainty is an offline problem that, given access to both the uncertainty areas $A_e$ and the exact weights $w_e$, seeks a minimum-cost set of queries for verifying some minimum-weight basis of $M$. More particularly, these queries and their answers, together with the uncertainty areas for all elements $e \in E$ not queried, would be sufficient to determine an MWB 
of $M$.

To illustrate the MWB verification problem, consider the special case of a graphic matroid of a connected graph $G$. The elements of this matroid are the edges in $G$, the independent sets are the cycle-free edge sets, and an MWB is an MST.  
Figure~\ref{fig: open vs closed} shows three uncertainty graphs with different types of intervals as their uncertainty areas, and their corresponding minimum-cost query sets, assuming unit query cost per edge.

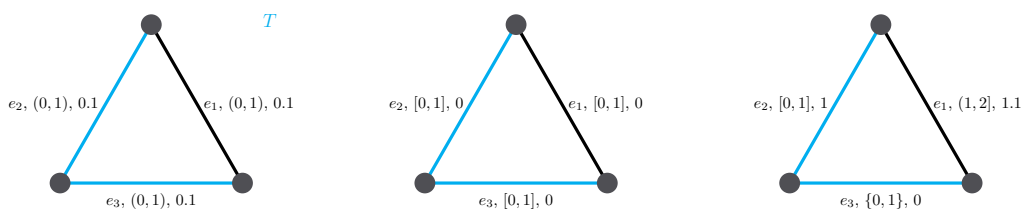
\begin{figure}[h!]
    \centering
    \scalebox{0.6}{
        \begin{tikzpicture}
            \tikzstyle{vertex} = [circle, fill=darkgray, minimum size=13pt, inner sep=0pt]
            \tikzstyle{edge} = [line width=2pt] 

            \begin{scope}[shift={(-8, 0)}]
                \coordinate (A) at (0, 0);
                \coordinate (B) at (2, 3.5);
                \coordinate (C) at (4, 0);

                \draw[edge, cyan] (A) -- (B) node[midway, left, black] {$e_2$, \((0,1)\), 0.1};
                \draw[edge, cyan] (A) -- (C) node[midway, below, black, yshift=-2pt] {$e_3$, \((0,1)\), 0.1};
                \draw[edge] (C) -- (B) node[midway, right, black] {$e_1$, \((0,1)\), 0.1};

                \foreach \p in {A,B,C} {
                    \node[vertex] at (\p) {};
                }
            \end{scope}

            \begin{scope}[shift={(0, 0)}]
                \coordinate (A) at (0, 0);
                \coordinate (B) at (2, 3.5);
                \coordinate (C) at (4, 0);

                \draw[edge, cyan] (A) -- (B) node[midway, left, black] {$e_2$, \([0,1]\), 0};
                \draw[edge, cyan] (A) -- (C) node[midway, below, black, yshift=-2pt] {$e_3$, \([0,1]\), 0};
                \draw[edge] (C) -- (B) node[midway, right, black] {$e_1$, \([0,1]\), 0};

                \foreach \p in {A,B,C} {
                    \node[vertex] at (\p) {};
                }
            \end{scope}

            \begin{scope}[shift={(8, 0)}]
                \coordinate (A) at (0, 0);
                \coordinate (B) at (2, 3.5);
                \coordinate (C) at (4, 0);

                \draw[edge, cyan] (A) -- (B) node[midway, left, black] {$e_2$, \([0, 1]\), 1};
                \draw[edge, cyan] (A) -- (C) node[midway, below, black, yshift=-2pt] 
                {$e_3$, $\{0,1\}$, 0};
                \draw[edge] (C) -- (B) node[midway, right, black] {$e_1$, \((1,2]\), 1.1};

                \foreach \p in {A,B,C} {
                    \node[vertex] at (\p) {};
                }
            \end{scope}

            \node[cyan, font=\Large\bfseries] at (-3.4, 3.6) {$T$};

        \end{tikzpicture}
    }
    \caption{
    Let $T$ (blue) be an MST of the given graph. The figure shows the minimum-cost query sets for instances with different types of uncertainty intervals: open \((0, 1)\), closed \([0, 1]\), or mixed. Edge labels indicate (name, interval, weight), and each query has unit cost. The corresponding minimum-cost query sets are $Q = \{e_1, e_2, e_3\}$ (open), $Q = \{e_2, e_3\}$ (closed), and $Q = \emptyset$ (mixed).}
    \label{fig: open vs closed}
\end{figure}

Our main result is a polynomial-time algorithm 
for this verification problem, producing both a minimum-cost verification set (we also say {\em certificate}), and an associated MWB~$B$, for uncertainty areas that are a finite union of intervals.  Each of these intervals can be open or closed.  Note that this includes the case where each uncertainty area is a finite set of discrete values. For example, 
$\{0,1\}$ is equivalent to $[0,0] \cup [1,1]$.



In addition to being an interesting combinatorial problem in its own right, the verification problem is fundamental to solving variants of the online MWB problem in settings beyond the worst-case, such as stochastic settings and adversarial learning-augmented variants. 
In the stochastic setting, each weight $w_e$ is assumed to be drawn from a known probability distribution, and the goal is to minimize the \emph{expected} query cost.    
The verification problem can be viewed as either an analog or a special case of such a stochastic formulation, where each distribution has support of size~1, but an element $e$ still must be queried in order to ``use'' the value $w_e$. 
At the same time, our results have direct consequences for variants of the adversarial adaptive online problem: in Section~\ref{sec: applications} we  use insights from the verification problem to obtain new results for a variant of the online adaptive MWB problem, and show how it provides a foundation for learning-augmented algorithms. Thus, verification serves both as a crucial first step toward stochastic formulations and as a direct tool for online problems under uncertainty.

A special case (in two senses) of our verification problem was studied before by Erlebach and Hoffman~\cite{erlebach_verification}. They considered finding the MST of a graph under weight uncertainty, with uncertainty areas that are either open intervals, i.e., $A_e = (L_e,U_e)$, or trivial, i.e., $A_e=\{w_e\}$ (meaning the weight of $e$ is given). They gave an efficient algorithm that computes an optimal solution.  

From a technical standpoint, handling uncertainty areas that are closed intervals and finite sets, as we do in our work, posed new challenges and required new techniques as well as structural insights.  We discuss 
differences and challenges 
in more detail~below.

\subsection*{Related Work}

\noindent\textbf{Verification problems\ } In the context of verifying optimal structures under uncertainty, the above-mentioned polynomial-time algorithm of Erlebach and Hoffman~\cite{erlebach_verification}, for MST verification with open and trivial uncertainty areas, stands out as a notable exception in terms of computational complexity.
In contrast, other verification problems are known to be NP-hard and even inapproximable. 
For instance, verification is NP-hard for identifying the set of maximal points under geometric uncertainty~\cite{CharalambousH13}, and for selecting the cheapest set from a collection of sets with elements of uncertain weight~\cite{Erlebach0K16}. 
The latter is even NP-hard to approximate within a factor of $o(\log m)$, where $m$ is the number of sets~\cite{MegowS23}. Recently, verification for knapsack under explorable uncertainty was shown to be $\Sigma_2^p$-complete~\cite{schloter25}.


\medskip 
\noindent\textbf{Online algorithms\ } The line of research on exploring uncertain numerical values under query costs was initiated by Kahan~\cite{kahan}, who studied problems such as identifying the minimum or $k$-th smallest value from a set of uncertain values. Since then, many problems have been investigated~\cite{erlebach15querysurvey}, mainly in an \emph{adaptive} online setting, where queries can be performed sequentially based on prior outcomes. The goal is to minimize the worst-case competitive ratio, comparing the total cost of an online algorithm to the minimum-cost of a verification set. Favorable constant (and often matching) upper and lower bounds are known for selection-type problems~\cite{kahan,feder_median}, MST and matroid bases~\cite{erlebach08steiner_uncertainty,MegowMS17,MerinoS19} and  sorting~\cite{HalldorssonL21sorting}. However, strong lower bounds have been shown for more complex problems such as cheapest set \cite{Erlebach0K16}, knapsack, matchings, linear programming~\cite{meissner18querythesis}, which rule out any constant competitive ratio.

The online adaptive MWB problem under uncertainty was studied by Erlebach et al.~\cite{Erlebach0K16} for open and trivial intervals. They gave a deterministic algorithm with a best-possible competitive ratio of 2.  This result extended earlier work~\cite{erlebach08steiner_uncertainty} on finding the MST in a graph with 
the same types of uncertainty areas. Subsequently, Megow et al.\ gave  a randomized algorithm with an improved 
bound of $1.707$ for the online versions of both MST and MWB problems, 
again with open and trivial intervals~\cite{MegowMS17}.
In~\cite{MathwieserC24}, Mathwieser and {\c{C}}ela gave improved randomized algorithms for special cases of the problem.
Note that it is crucial to omit closed intervals in this online setting as otherwise no constant competitive ratio is possible~\cite{erlebach08steiner_uncertainty}.

In contrast to the above, \emph{non-adaptive} online variants have received little attention, with the notable exception of the work of Merino and Soto~\cite{MerinoS19}. They gave an algorithm that
computes a  minimum-cost  \enquote{universal} set of queries that can be used to find an MWB of an uncertainty matroid, assuming no prior knowledge of any of the actual element weights. 
Their algorithm, like ours, allows for uncertainty areas that are finite unions of bounded real intervals that can be either open or closed.
However, they require a query set to be a certificate for \emph{every} weight vector $w'$ such that $w'_e \in A_e$ for all elements $e$, while  the verification problem requires that it be a certificate \emph{only} for the given weights $w_e$. Because of the stronger requirements, the minimum-cost certificate for their problem can have significantly higher cost.
In fact, using their algorithm for the verification problem (ignoring access to the weights~$w_e$) only yields an $n$-approximation for uniform query costs and an unbounded approximation factor for general query costs (cf.~\Cref{app:further:discussion}). Thus using the given weights $w_e$ is crucial to optimally solving the verification problem.

\medskip 
\noindent\textbf{Beyond-worst-case models and the relevance of verification methods\ } The strong lower bounds for more complex problems suggest that the adversarial online model may be overly pessimistic. 
This has motivated the study of beyond-worst-case approaches in the context of explorable uncertainty, such as stochastic models~\cite{BampisDEdLMS21,ChaplickHLT20,MegowS23}, where query outcomes are drawn from (possibly unknown) probability distributions, and learning-augmented frameworks~\cite{ErlebachLMS22,ErlebachLMS23}, where algorithms have access to imperfect predictions of the actual weights. 
In all of these works, a solid understanding of the corresponding verification problem was fundamental to the design and analysis of algorithms.

\subsection*{Our Results}

We give a polynomial-time algorithm that solves the MWB verification problem, producing both a minimum-cost verification set (a certificate) and an associated MWB, for uncertainty areas that are a finite union of bounded intervals. Each interval can be either open or closed.

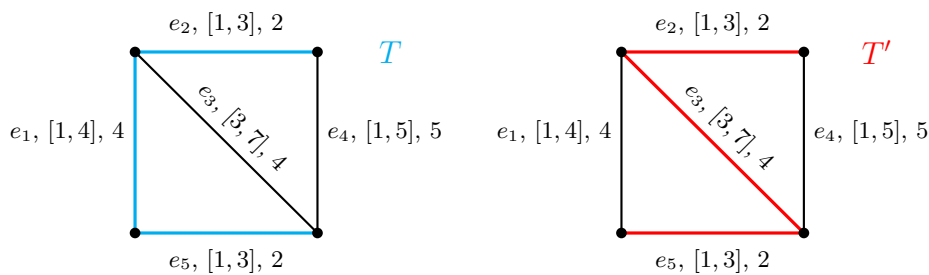
\begin{figure}[h!]
    \centering
    \scalebox{1}{
        \begin{tikzpicture}[scale=0.8, every node/.style={font=\small}]
            \tikzset{vertex/.style={circle, fill=black, minimum size=4pt, inner sep=0pt}}
            \tikzset{edge/.style={thick}}

            \begin{scope}[shift={(0, 0)}]
                \coordinate (A) at (0,0);
                \coordinate (B) at (0,3);
                \coordinate (C) at (3,3);
                \coordinate (D) at (3,0);

                \draw[edge, cyan, very thick] (A) -- (B) node[midway, left, black, yshift=4pt] {$e_1$, \([1,4]\), 4};
                \draw[edge, cyan, very thick] (B) -- (C) node[midway, above, black, yshift=2pt] {$e_2$, \([1,3]\), 2};
                \draw[edge, cyan, very thick] (A) -- (D) node[midway, below, black, yshift=-2pt] {$e_5$, \([1,3]\), 2};
                \draw[edge] (C) -- (D) node[midway, right, black, yshift=4pt] {$e_4$, \([1,5]\), 5};
                \draw[edge] (B) -- (D) node[midway, sloped, above, black] {$e_3$, \([3,7]\), 4};

                \foreach \p in {A,B,C,D} {
                    \node[vertex] at (\p) {};
                }

                \node[cyan, font=\Large\bfseries] at (4.2, 3) {$T$};
            \end{scope}

            \begin{scope}[shift={(8, 0)}] 
                \coordinate (A) at (0,0);
                \coordinate (B) at (0,3);
                \coordinate (C) at (3,3);
                \coordinate (D) at (3,0);

                \draw[edge] (A) -- (B) node[midway, left, black, yshift=4pt] {$e_1$, \([1,4]\), 4};
                \draw[edge, red, very thick] (B) -- (C) node[midway, above, black, yshift=2pt] {$e_2$, \([1,3]\), 2};
                \draw[edge, red, very thick] (A) -- (D) node[midway, below, black, yshift=-2pt] {$e_5$, \([1,3]\), 2};
                \draw[edge] (C) -- (D) node[midway, right, black, yshift=4pt] {$e_4$, \([1,5]\), 5};
                \draw[edge, red, very thick] (B) -- (D) node[midway, sloped, above, black] {$e_3$, \([3,7]\), 4};

                \foreach \p in {A,B,C,D} {
                    \node[vertex] at (\p) {};
                }

                \node[red, font=\Large\bfseries] at (4.2, 3) {$T'$};
            \end{scope}
        \end{tikzpicture}
    }
    \caption{Let $T$ (blue) and $T’$ (red) be MSTs of the same uncertainty graph. Edge labels show (name, interval, weight), and all queries have unit cost. While both are MSTs, $T$ has a cheaper minimum-cost query set, $Q = \{e_3, e_4\}$, compared to $T’$ with $Q = \{e_1, e_3, e_4\}$.}
    \label{fig: diff queries for diff MST}
\end{figure}

Our algorithm has two phases, motivated by a key difference from the verification problem for open intervals: If uncertainty areas contain~closed intervals, the cost of verification may vary across MWBs; 
see \Cref{fig: diff queries for diff MST}. 
In contrast, 
for open intervals, all MWBs have the same minimum verification cost\footnote{This fact is only implicit in previous works~\cite{erlebach_verification,ErlebachLMS22}. 
However, it is also a corollary of our results in~\Cref{sec: identify MWB}.}.
This raises a natural question for general uncertainty areas: Can we efficiently identify an MWB whose minimum verification cost is as small as possible? (We make the standard assumption that the matroid is given by an independence oracle, and that oracle queries are answered in time polynomial in the number of matroid elements.)

Our first main result answers this question in the affirmative.  The first phase of our algorithm is a polynomial-time procedure that identifies an MWB which has minimum possible verification cost (\Cref{sec: identify MWB}).  Our algorithm is based on carefully designed contraction and deletion rules that allow us to delete and contract \emph{extreme case} elements, i.e., elements $e$ with $w_e = \inf A_e$ or $w_e = \sup A_e$, while maintaining the invariant that there exists an MWB consistent with the deletions and contractions having  minimum possible verification cost. 
As a corollary, we show that,
for uniform query costs and uniform areas $A_e =\{L,U\}$ with $L < U$ (i.e., each $w_e$ is either equal to $L$ or $U$), all MWBs have the same minimum verification cost.

In the second phase (\Cref{sec: construct cert}), our algorithm efficiently computes a minimum-cost certificate to verify a given MWB.\
Together, the two phases of our algorithm yield a polynomial-time algorithm
for our MWB verification problem.

Motivating the second phase, we give a full structural characterization of certificates for verifying an MWB.\ In contrast to previous work, we have to carefully handle  extreme case elements. Intuitively, the presence or absence of such elements can make a huge difference for the certificates. For example, if we want to find an element of maximum weight in a set of elements with identical uncertainty areas, then querying a single element $e$ with $w_e = \sup A_e$ is sufficient to identify an element of maximum weight, whereas the absence of extreme case elements forces us to query all elements. Based on our certificate characterization, we compute a minimum-cost certificate by solving a minimum-weight vertex cover problem in a bipartite auxiliary graph; this technique  has also been used in previous work~\cite{erlebach_verification,MegowMS17,ErlebachLMS22,ErlebachLMS23,BampisDEdLMS21}.  
We remark that our results imply an alternative, arguably simpler, algorithm for the verification problem with open intervals (and other special cases where all MWBs have the same verification cost): Fix any MWB and 
run the algorithm's second phase.

Finally, we use insights from the second phase of our verification algorithm to give new results for the online adaptive MWB problem with general uncertainty areas. For open uncertainty intervals, the competitive ratio of the online adaptive MWB problem is~$2$~\cite{erlebach08steiner_uncertainty}. If closed uncertainty areas are allowed, the competitive ratio increases to $n$~\cite{GuptaSS16}. We show that the competitive ratio of $2$ can be recovered in a promise variant of the online MWB problem with general uncertainty areas, where the algorithm is given an MWB and only has to verify this MWB. This shows that the increase in the competitive ratio from {\em open} to {\em general} uncertainty areas stems from the algorithm's task of \emph{finding} rather than merely \emph{verifying} an MWB $B$. Based on this insight, we design a learning-augmented algorithm for a setting where algorithms have access to an untrusted prediction of an MWB.

\section{Preliminaries} \label{sec: prelim}

We assume familiarity with matroids and only briefly define basic concepts and notation. For a comprehensive  introduction, 
see Schrijver's book~\cite{Schrijver2003book}. Readers might find it helpful 
to keep the graphic matroid and the MST problem on a connected graph in mind. 
For a set~$X$ and an element $e$, we use the short notation $X+e \coloneqq X\cup\{e\}$ and $X-e \coloneqq X\setminus{\{e\}}$.

\subsection{Matroid basics}

A {\em matroid} is a non-empty, downward-closed set system $(E, \cI)$ with element set~$E$ and a family of subsets $\cI\subseteq 2^{E}$, which satisfies the {\em augmentation property}: if $I,J \in \cI $ and $|I|<|J|$, then $I+e\in  \mathcal{I}$ for some $e\in J\setminus{I}$.

Given a matroid $M=(E,\cI)$, a set $I\subseteq E$
is called \emph{independent} if $I\in \cI$, and \emph{dependent} otherwise. We refer to the elements of $M$ by $E(M)$.
An inclusion-wise maximal independent subset is called a \emph{basis} of~$M$. With a matroid $M$, we associate a rank function $r: 2^{E} \rightarrow \mathbb Z_{\ge 0}$, where $r(X)$ describes the maximal cardinality of an independent subset of $X$.

The {\em span} of some $X \subseteq E$ contains all elements that  do not increase the rank when added to $X$, i.e., $\mspan[M]{X} = \{e \in E \mid r(X) = r(X + e)\}$. If $X$ is a basis and $e' \not\in \mspan[M]{X-e}$ for some $e \in X$, then the definition of the span implies that $X-e+e'$ is a basis.

A {\em circuit} $C$ in a matroid $M=(E,\cI)$ is a minimally dependent set, that is, $C \not\in I$ whereas $C\setminus \{e\} \in I$ for each $e \in C$. Any independent set $I\in \cI$ of $M$ and an $e \in E$ such that  $I +e$ is dependent, form a unique circuit  $C_e^I \subseteq I + e$. This circuit is called the {\em fundamental circuit} of $e$ with respect to  $I$. 
If $B$ is a basis and $e \in C_{e'}^B$ for some $e' \not\in B$, then the definition of fundamental circuits implies that  $B' = B-e+e'$ is also a basis of $M$.

Given a matroid $M=(E, \cI)$, we obtain a matroid $M'=(E\setminus X, \cI')$ from $M$ by applying the operations `contraction' and `deletion' for subsets $X\subseteq E$. The {\em deletion} of $X$ from $M$ yields the matroid $M'$ with $\cI':= \{I \,|\, I \subseteq E\setminus X, I \in \cI \}$. In this paper, we only consider deletions that do not decrease the rank of the matroid, i.e., satisfy $r(M) = r(M')$.
{\em Contracting} an independent set $X$ in $M$ yields a matroid $M'$ with independent sets $\cI'$ defined as subsets $I\subseteq E\setminus X$ with $I\cup X \in \cI$. A matroid $M'$ obtained from $M$ by a series of contractions and deletions is called a {\em minor of $M$}.

We consider {\em weighted} matroids $M = (E, \mathcal{I}, w)$, in which each element $e\in E$ has an associated weight $w_e\in \mathbb{R}$. 
The weight of a basis $B$ of $M$ is $w(B)\coloneqq \sum_{e\in B}w_e$.  We seek a minimum-weight basis (MWB) of $M$. Note that an MWB of $M$ may not~be~unique.

\subsection{Properties of bases}

Throughout the paper, we use 
well-known properties to argue that a basis $B$ is an MWB. The 
propositions follow, e.g., from~\cite[Thm.~6.1]{lawler2001combinatorial}, and the observation is proved in~\cref{app:prelim}.

\begin{proposition}
    \label{prop:min:cut}
    Let $B$ be a basis of the matroid $M=(E,\cI,w)$. If $w_e \le w_{e'}$ holds for all $e \in B$ and all $e' \in E\setminus \mspan[M]{B-e}$, then $B$ is an MWB of $M$.
\end{proposition}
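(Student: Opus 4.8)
The plan is a minimal-counterexample argument. Suppose, for contradiction, that $B$ satisfies the stated condition but is not an MWB. Among all MWBs of $M$, fix one, call it $B^*$, that maximizes $|B \cap B^*|$. Since $B$ is not an MWB we have $B \neq B^*$, and since all bases have size $r(M)$, both $B \setminus B^*$ and $B^* \setminus B$ are nonempty; fix $e^* \in B^* \setminus B$. The aim is to produce an MWB that agrees with $B$ on strictly more elements than $B^*$ does, contradicting the choice of $B^*$.

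The crucial step is to find an element $e \in B \setminus B^*$ that is exchangeable in both directions, i.e., such that $B - e + e^*$ and $B^* - e^* + e$ are both bases. I would extract such an $e$ from the fundamental circuit $C := C^B_{e^*} \subseteq B + e^*$. Every $e \in C - e^*$ lies in $B$ and, by the fundamental-circuit fact in the preliminaries, makes $B - e + e^*$ a basis. It then remains to find one such $e$ lying outside $\mspan[M]{B^* - e^*}$: for such $e$, the span fact in the preliminaries (applied to the basis $B^*$ and $e^* \in B^*$) makes $B^* - e^* + e$ a basis, and $e \notin B^*$ follows since $e \neq e^*$ and $\mspan[M]{B^* - e^*} \supseteq B^* - e^*$. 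Such an $e$ must exist: $C$ is a circuit, so $e^* \in \mspan[M]{C - e^*}$; hence if $C - e^* \subseteq \mspan[M]{B^* - e^*}$, then, by monotonicity and idempotence of the span operator, $e^* \in \mspan[M]{B^* - e^*}$, contradicting that $B^*$ is a basis containing $e^*$. (Alternatively, one may simply invoke the standard symmetric basis-exchange property.)

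Given such an $e$, the two weight inequalities close the argument. Since $B - e + e^*$ is a basis while $B - e$ is independent of size $r(M)-1$, we have $e^* \in E \setminus \mspan[M]{B - e}$, so the hypothesis of the proposition gives $w_e \le w_{e^*}$. Since $B^* - e^* + e$ is a basis and $B^*$ is an MWB, $w(B^*) \le w(B^* - e^* + e) = w(B^*) - w_{e^*} + w_e$, hence $w_e \ge w_{e^*}$. Therefore $w_e = w_{e^*}$, so $B^* - e^* + e$ is again an MWB; but it contains $e \in B$ and omits $e^* \notin B$, so $|B \cap (B^* - e^* + e)| = |B \cap B^*| + 1$, contradicting the maximality of $|B \cap B^*|$. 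Hence $B$ is an MWB.

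I expect the two-directional exchange to be the main obstacle: the one-sided exchanges are immediate from the preliminaries, but coupling them into a single element $e$ is exactly where genuine matroid structure (circuit--cocircuit interaction, equivalently the symmetric exchange property) enters. The span-closure argument above is the cleanest route I see using only the tools already introduced; appealing to the symmetric exchange property is the alternative.
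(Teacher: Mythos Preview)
Your proof is correct. The paper itself does not give a proof of this proposition; it simply states that it follows from a standard reference (Theorem~6.1 in Lawler's book) and moves on. Your argument is a clean, self-contained minimal-counterexample proof, and every step checks out: the existence of a doubly exchangeable element $e$ follows from your closure argument (this is precisely the content of the paper's Observation~1, which the paper proves in the appendix via the augmentation property rather than via closure idempotence, but your route is equally valid and arguably more direct), and the two weight inequalities then force $w_e = w_{e^*}$ and yield the contradiction exactly as you describe.

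One small remark on presentation: when you verify $|B \cap (B^* - e^* + e)| = |B \cap B^*| + 1$, it is worth noting explicitly that $e^* \notin B$ ensures removing $e^*$ does not lose any element of $B \cap B^*$; you have all the ingredients but the sentence could be one step more explicit.
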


\begin{proposition}
    \label{prop:max:circuit}
    Let $B$ be a basis of the matroid $M=(E,\cI,w)$. If $w_e \ge w_{e'}$ holds for all $e \in E\setminus B$ and all $e' \in C^B_e$, then $B$ is an MWB of $M$.
\end{proposition}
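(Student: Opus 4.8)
The plan is to deduce \Cref{prop:max:circuit} from the already-established \Cref{prop:min:cut}, using the standard correspondence between the fundamental circuits of $B$ (which appear in the hypothesis of \Cref{prop:max:circuit}) and the fundamental cocircuits of $B$, i.e.\ the sets $E\setminus\mspan[M]{B-a}$ (which appear in the hypothesis of \Cref{prop:min:cut}). Concretely, it suffices to show that the hypothesis of \Cref{prop:max:circuit} for $B$ implies the hypothesis of \Cref{prop:min:cut} for the same basis $B$, which then yields that $B$ is an MWB.

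The key lemma I would first establish is the following exchange equivalence: for the basis $B$, an element $a\in B$, and an element $b\in E\setminus B$, one has $a\in C^B_b$ if and only if $b\notin\mspan[M]{B-a}$. The forward direction is immediate from the facts recalled in the preliminaries: if $a\in C^B_b$ then $B-a+b$ is a basis, so $r((B-a)+b)=r(M)>r(M)-1=r(B-a)$, whence $b\notin\mspan[M]{B-a}$. For the reverse direction I would argue by contraposition: if $a\notin C^B_b$, then $C^B_b\subseteq(B-a)+b$; since $C^B_b$ is a circuit, $(B-a)+b$ is dependent, so $r((B-a)+b)\le r(M)-1=r(B-a)$, which together with monotonicity of $r$ forces $r((B-a)+b)=r(B-a)$, i.e.\ $b\in\mspan[M]{B-a}$. (Here I use that $B-a$ is independent of size $r(M)-1$, so $r(B-a)=r(M)-1$, and that $B+b$ is dependent, so that the fundamental circuit $C^B_b$ is well defined and contains $b$.)

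With the equivalence in hand, I would verify the hypothesis of \Cref{prop:min:cut} for $B$: take any $a\in B$ and any $b\in E\setminus\mspan[M]{B-a}$, and show $w_a\le w_b$. If $b=a$ this is trivial. Otherwise $b\neq a$, and since $B-a\subseteq\mspan[M]{B-a}$ we also have $b\notin B$; the equivalence then gives $a\in C^B_b$. Applying the hypothesis of \Cref{prop:max:circuit} with $e\coloneqq b\in E\setminus B$ and $e'\coloneqq a\in C^B_b$ yields $w_b\ge w_a$. Thus $B$ satisfies the hypothesis of \Cref{prop:min:cut}, and that proposition shows $B$ is an MWB of $M$.

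I expect the only mildly delicate points to be the degenerate cases in the translation between the two conditions (handling $b=a$ and verifying $b\notin B$) and making the circuit/span equivalence precise; everything else is routine. Alternatively, one could give a self-contained proof by taking an arbitrary MWB $B^{*}$, choosing a bijection $\phi\colon B\setminus B^{*}\to B^{*}\setminus B$ for which every $B-e+\phi(e)$ is a basis (bijective basis exchange), using the equivalence to note $e\in C^B_{\phi(e)}$ and hence $w_e\le w_{\phi(e)}$ by the hypothesis, and summing over $e\in B\setminus B^{*}$ to get $w(B)\le w(B^{*})$; but the reduction above is shorter given what is already available.
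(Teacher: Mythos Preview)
Your proof is correct. The paper, however, does not give its own proof of \Cref{prop:max:circuit}: it simply cites the statement as a standard fact, pointing to \cite[Thm.~6.1]{lawler2001combinatorial}. Your reduction to \Cref{prop:min:cut} via the exchange equivalence $a\in C^B_b \iff b\notin\mspan[M]{B-a}$ (for $a\in B$, $b\in E\setminus B$) is a clean, self-contained argument that stays entirely within the paper's own framework and avoids the external reference; the handling of the degenerate case $b=a$ and the check that $b\notin B$ are both fine. The alternative you sketch via bijective basis exchange with an arbitrary MWB is closer to the classical textbook route and would also work, but is indeed longer given that \Cref{prop:min:cut} is already available.
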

\begin{restatable}{obs}{ObsMatroidCuts}
    \label{obs:matroid:cuts}
    Let $B$ be a basis of the matroid $M = (E, \cI,w)$. If $C$ is a circuit of $M$ with $e \in B \cap C$, then there is an $e' \in C-e$ such that $e' \notin \mspan[M]{B-e}$. 
\end{restatable}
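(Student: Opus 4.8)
The plan is to argue by contradiction: I would assume that \emph{every} element of $C-e$ lies in $\mspan[M]{B-e}$ and derive the impossible conclusion $e \in \mspan[M]{B-e}$. Note at the outset that $e \notin \mspan[M]{B-e}$, since $(B-e)+e = B$ is a basis and hence $r((B-e)+e) = r(M) > r(M) - 1 = r(B-e)$; the last equality holds because $B$ is an independent set of size $r(M)$, so deleting one element lowers its rank by exactly one.

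First I would observe that $e \in \mspan[M]{C - e}$. This is immediate from $C$ being a circuit: $C-e$ is independent of size $|C|-1$ whereas $C$ itself is dependent, so $r(C-e) = |C|-1 \ge r(C) \ge r(C-e)$, giving $r((C-e)+e) = r(C) = r(C-e)$, which is exactly the definition of $e \in \mspan[M]{C-e}$.

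Then I would combine the two facts using that $X \mapsto \mspan[M]{X}$ is a matroid closure operator, i.e.\ it is monotone ($Z \subseteq W$ implies $\mspan[M]{Z}\subseteq\mspan[M]{W}$) and idempotent ($\mspan[M]{\mspan[M]{X}} = \mspan[M]{X}$); both properties follow directly from submodularity of $r$. Under the contradiction hypothesis $C - e \subseteq \mspan[M]{B-e}$, monotonicity and idempotence give $\mspan[M]{C-e} \subseteq \mspan[M]{\mspan[M]{B-e}} = \mspan[M]{B-e}$, and since $e \in \mspan[M]{C-e}$ we conclude $e \in \mspan[M]{B-e}$, contradicting the opening observation. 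Hence some $e' \in C-e$ satisfies $e' \notin \mspan[M]{B-e}$, as claimed.

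I expect no real obstacle here; the only care needed is to pin down that $r(B-e)$ is exactly $r(M)-1$ and to invoke the closure-operator properties cleanly, rather than reaching for a more involved basis-exchange argument. As an alternative one could phrase everything in terms of cocircuits — $E\setminus\mspan[M]{B-e}$ is a cocircuit containing $e$, and a circuit and a cocircuit never meet in exactly one element — but that route requires importing those notions, so I would prefer the self-contained argument above.
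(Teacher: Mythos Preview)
Your proof is correct, and it takes a somewhat different route from the paper's. Both arguments start with the same contradiction hypothesis, $C-e \subseteq \mspan[M]{B-e}$, but finish differently. The paper augments the independent set $C-e$ to a basis $B'$ using only elements of $B$ (via the augmentation axiom), notes that $e \notin B'$ because $C$ is dependent, and concludes $B' \subseteq \mspan[M]{B-e}$, contradicting $r(B') = r(E) > r(B-e)$. You instead reduce everything to closure-operator properties: from $e \in \mspan[M]{C-e}$ and $C-e \subseteq \mspan[M]{B-e}$, monotonicity and idempotence of $\mathrm{span}$ give $e \in \mspan[M]{B-e}$, which directly contradicts $e \notin \mspan[M]{B-e}$. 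Your route is shorter and more conceptual once one is willing to invoke the standard closure axioms; the paper's argument is slightly longer but stays closer to the raw matroid axioms (independence and augmentation), avoiding any appeal to properties of $\mathrm{span}$ beyond its definition. One minor quibble: monotonicity and idempotence of the span are standard matroid facts, but saying they ``follow directly from submodularity of $r$'' is a bit loose---monotonicity of $r$ is also used---so you might simply cite them as the usual closure-operator axioms.
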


\subsection{Uncertainty matroids and certificates}

In an uncertainty matroid, weights $w_e$ are not known, but we are given uncertainty areas guaranteed to contain them.

\begin{definition}[Uncertainty Matroid] An uncertainty matroid $\cM=(E,\cI,A)$ is a matroid $M=(E,\cI)$ and a function $A: E \rightarrow \mathbb{R}$ such that for each $e \in E$, $A(e)$ is a non-empty finite union of bounded real intervals (either open or closed).   
We call $\cM=(E,\cI,A,w)$ a {\em weighted} uncertainty matroid if $\cM=(E,\cI,A)$ is an uncertainty matroid and
$w:E \rightarrow \mathbb{R}$ is such that $w_e := w(e) \in A_e$ for all $e \in E$. We define $A_e:=A(e)$, $L_e := \inf A_e$, and $U(e) :=\sup A_e$.
We call $A_e$ the uncertainty area of $e$ and say $e$ is {\em trivial} if $A_e$ consists of the single value $w_e$.  
\end{definition}

Consider the problem of computing an MWB in a weighted uncertainty matroid, where the exact weights are  initially unknown 
but the true weight of an element $e$ can be revealed through a {\em query}, which incurs a cost $c_e \geq 0$.
We are now faced with the problem of determining an MWB at the lowest possible total query cost. We use the notion of \emph{certificates}, where a certificate is a set $Q \subseteq E$ such that querying $Q$ reveals sufficient information about the uncertain weights in $A$ to identify a basis $B$ that is an MWB, irrespective of the exact weights of the elements not in $Q$. We make this precise in  the definitions below.  
Note that the formal definition of an uncertainty matroid only specifies uncertainty areas for the matroid elements $e$, not specific weights $w_e$. 
In what follows, we sometimes identify a weighted uncertainty matroid $\cM(E,\cI,A,w)$ with its associated weighted matroid $M=(E,I,w)$.  For example, when we refer to a basis $B$ of $\cM(E,\cI,A,w)$, we mean a basis of $M=(E,I,w)$.

\begin{definition}[Consistent weight assignment]
    \label{def: weight assignment consistent with certificate}
    Consider a weighted uncertainty matroid $\mathcal{M} = (E, \mathcal{I}, A, w)$.
    A \emph{weight assignment} $w^*$ for $\mathcal{M}$ is a function $w^*\colon E \rightarrow \mathbb{R}$ such that $w^*_e \in A_e$ holds for each $e \in E$.  We say a weight assignment $w^*$ for $\mathcal{M}$ is \emph{consistent} with $Q$ if it additionally satisfies  $w_e = w^*_e$ for each $e \in Q$.
\end{definition}

\begin{definition}[Verification and certificates]
    \label{def: what it means for S to verify T}
    Consider a weighted uncertainty matroid $\mathcal{M} = (E, \mathcal{I}, A,w)$.
    Let $B$ be an MWB of $\cM$. We say that a certificate $Q$ {\em verifies} $B$ if for every weight assignment $w^*$ for $\cM$ that is consistent with $Q$ (Definition \ref{def: weight assignment consistent with certificate}), $B$ is an MWB of the weighted matroid $M'=(E,\cI,w^*)$. 
\end{definition}

\begin{definition}[Minimum-cost certificate] 
Consider a weighted uncertainty matroid $\cM = (E, \mathcal{I}, A,w)$. 
Let $\mathcal{Q}$ be the set of all certificates that verify a basis of $\cM$. Then, $Q\in \mathcal{Q}$ is a {\em minimum-cost certificate} for $\cM$, if 
$c(Q) \coloneqq \sum_{e \in Q} c_e$ is minimal among all certificates in $\mathcal{Q}$.
\end{definition}

\subsubsection{Characterization of certificates}
To further characterize certificates, define $L_e(Q) = w_e$ if $e \in Q$ and $L_e(Q) = L_e$ if $e \not\in Q$. Similarly, define $U_e(Q) = w_e$ if $e \in Q$ and $U_e(Q) = U_e$ if $e \not\in Q$.
Intuitively, $L_e(Q)$ and $U_e(Q)$ denote the upper and lower limits of $e$ \emph{after} querying the certificate $Q$.

\begin{lemma}
    \label{lem:certificate:characterization:cuts}
    Let $B$ be an MWB of 
    $\mathcal{M}= (E,\cI, A,w)$. A set $Q \subseteq E$ is a certificate that verifies $B$  if and only if $U_e(Q) \le L_f(Q)$ for all $e \in B$ and $f \in (E\setminus \mspan[M]{B-e})-e$.
\end{lemma}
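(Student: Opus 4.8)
The plan is to prove both implications directly from the cut characterization of minimum-weight bases in \Cref{prop:min:cut}, keeping in mind two simple facts. First, for a fixed $e\in B$, every $f\in(E\setminus\mspan[M]{B-e})-e$ satisfies $f\notin B$ and $B-e+f$ is again a basis --- this is the span property recalled in the preliminaries --- and, crucially, the set $(E\setminus\mspan[M]{B-e})-e$ depends only on the matroid $(E,\cI)$ and not on the weights, so it is unchanged if we replace $w$ by a consistent weight assignment $w^*$ (and $\mspan[M']{B-e}=\mspan[M]{B-e}$ for $M'=(E,\cI,w^*)$). Second, for every $g\in E$ and every weight assignment $w^*$ consistent with $Q$ one has $w^*_g\le U_g(Q)$ and $w^*_g\ge L_g(Q)$: if $g\in Q$ both hold with equality since $w^*_g=w_g$, and if $g\notin Q$ they hold because $w^*_g\in A_g$ while $L_g=\inf A_g$ and $U_g=\sup A_g$.

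For the ``if'' direction, assume $U_e(Q)\le L_f(Q)$ for all $e\in B$ and $f\in(E\setminus\mspan[M]{B-e})-e$, and let $w^*$ be an arbitrary weight assignment consistent with $Q$. Chaining the bounds above, for each such $e$ and $f$ we get $w^*_e\le U_e(Q)\le L_f(Q)\le w^*_f$; together with the trivial inequality $w^*_e\le w^*_e$, this verifies the hypothesis of \Cref{prop:min:cut} for the matroid $M'=(E,\cI,w^*)$ (whose spans coincide with those of $M$), so $B$ is an MWB of $M'$. As $w^*$ was arbitrary, $Q$ verifies $B$.

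For the ``only if'' direction I would argue contrapositively: suppose $U_e(Q)>L_f(Q)$ for some $e\in B$ and $f\in(E\setminus\mspan[M]{B-e})-e$, and set $\delta\coloneqq U_e(Q)-L_f(Q)>0$. Define $w^*$ by $w^*_g\coloneqq w_g$ for all $g\notin\{e,f\}$, pick $w^*_e\in A_e$ with $w^*_e>U_e(Q)-\delta/2$ (possible because $U_e(Q)-\delta/2<\sup A_e$ when $e\notin Q$, while $w^*_e\coloneqq w_e$ works when $e\in Q$), and symmetrically pick $w^*_f\in A_f$ with $w^*_f<L_f(Q)+\delta/2$. Then $w^*$ is consistent with $Q$ and, using $U_e(Q)=L_f(Q)+\delta$, one checks $w^*_e>w^*_f$. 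Since $f\notin\mspan[M]{B-e}$ and $f\neq e$, the set $B'\coloneqq B-e+f$ is a basis of $(E,\cI,w^*)$, and $w^*(B')=w^*(B)-w^*_e+w^*_f<w^*(B)$, so $B$ is not an MWB of $(E,\cI,w^*)$ --- contradicting that $Q$ verifies $B$.

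The only genuinely delicate point --- and the reason the argument is not a one-liner --- is that uncertainty areas may be open, so $U_e=\sup A_e$ and $L_f=\inf A_f$ need not be attained; this is precisely why the contrapositive argument introduces the $\delta/2$ slack and selects interior points of $A_e$ and $A_f$ rather than simply setting $w^*_e=U_e$ and $w^*_f=L_f$. A secondary bookkeeping point worth stating explicitly is that $e$ itself always lies in $E\setminus\mspan[M]{B-e}$ (because $r(B-e)<r(B)$), so the quantification in the statement over $(E\setminus\mspan[M]{B-e})-e$ is the right one: imposing the inequality for $f=e$ would be unnecessary and, in general, false (it would force $U_e(Q)=L_e(Q)$).
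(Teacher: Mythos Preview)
Your proof is correct and follows essentially the same approach as the paper: the ``if'' direction is identical (chain $w^*_e\le U_e(Q)\le L_f(Q)\le w^*_f$ and apply \Cref{prop:min:cut}), and the ``only if'' direction is the same contrapositive swap argument, with the paper merely asserting the existence of a consistent $w^*$ with $w^*_e>w^*_f$ where you spell out the $\delta/2$ construction explicitly. Your added remarks about spans being weight-independent and about excluding $f=e$ are correct and make the write-up more careful than the paper's.
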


\begin{proof}
    First, assume $U_e(Q) \le L_f(Q)$ for all $e \in B$ and $f \in (E\setminus \mspan[M]{B-e})-e$. Consider any weight assignment $w^*$ consistent with $Q$.
    Then, for each $e \in B$, we have $w^*_e \le U_e(Q) \le L_f(Q) \le w^*_f$ for all $f \in (E\setminus \mspan[M]{B-e})-e$. This implies $B$ is an MWB for $w^*$, by~\Cref{prop:min:cut}. 
    Thus $Q$ verifies $B$ by~\Cref{def: what it means for S to verify T}.

    Next, assume it does not hold that  $U_e(Q) \le L_f(Q)$ for all $e \in B$ and $f \in (E\setminus \mspan[M]{B-e})-e$. Let $e \in B$ and $f \in (E\setminus \mspan[M]{B-e})-e$ be such that $L_f(Q) < U_e(Q)$. Then, there exists a weight assignment $w^*$, consistent with $Q$, such that $w^*_e > w^*_f$. Thus $B' = B - e + f$ is independent and satisfies $w^*(B') < w^*(B)$, and so $Q$ does not verify $B$  by~\Cref{def: what it means for S to verify T}.
\end{proof}

\noindent The lemma implies the following alternative characterization, which we 
prove in~\Cref{app:prelim}.

\begin{restatable}{coro}{coroCharacterization}
            \label{coro:certificate:characterization:circuits}
        Let $B$ be an MWB of $\mathcal{M}= (E,\cI, A,w)$.
        A set $Q \subseteq E$ is a certificate that verifies~$B$ if and only if the following property holds for every $f \not \in B$, and the fundamental circuit $C_f^B$: 
        for every $e \in C_f - f$, $U_e(Q) \leq L_f(Q)$.
\end{restatable}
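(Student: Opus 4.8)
The plan is to derive the corollary directly from \Cref{lem:certificate:characterization:cuts} by showing that the family of pairs $(e,f)$ over which that lemma quantifies the inequality $U_e(Q)\le L_f(Q)$ coincides with the family appearing in the corollary; the corollary then follows simply by swapping the order of the two universal quantifiers.

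The key ingredient is the standard correspondence between the complement of $\mspan[M]{B-e}$ and fundamental circuits. I would first establish the claim: for a basis $B$, an element $e\in B$, and an element $f\in E$, one has $f\in\bigl(E\setminus\mspan[M]{B-e}\bigr)-e$ if and only if $f\notin B$ and $e\in C_f^B$. For the forward direction, assume $f\notin\mspan[M]{B-e}$ and $f\neq e$. By the span property recalled in \Cref{sec: prelim}, $B-e+f$ is a basis; since it is a basis and $f\neq e$, we must have $f\notin B$ (otherwise $B-e+f$ would equal $B-e$, which has size $r(M)-1$). Hence $B+f$ is dependent and contains the fundamental circuit $C_f^B\subseteq B+f$; since $B-e+f$ is independent it cannot contain $C_f^B$, and $e$ is the only element of $B+f$ absent from $B-e+f$, so $e\in C_f^B$. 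For the reverse direction, assume $f\notin B$ and $e\in C_f^B$. By the fundamental-circuit exchange property recalled in \Cref{sec: prelim}, $B-e+f$ is a basis, so $r(B-e+f)=r(B-e)+1$ and therefore $f\notin\mspan[M]{B-e}$; moreover $f\neq e$ since $e\in B$ and $f\notin B$.

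Given the claim, I would observe that $C_f^B-f\subseteq B$ for every $f\notin B$, so requiring $e\in C_f^B-f$ already forces $e\in B$. Hence
\[
\{(e,f)\;:\;e\in B,\ f\in(E\setminus\mspan[M]{B-e})-e\}\;=\;\{(e,f)\;:\;f\notin B,\ e\in C_f^B-f\},
\]
so the condition of \Cref{lem:certificate:characterization:cuts} ($U_e(Q)\le L_f(Q)$ for every pair in the left set) is logically equivalent to the condition of the corollary ($U_e(Q)\le L_f(Q)$ for every pair in the right set); the two differ only in which coordinate carries the outer quantifier. Combining this with \Cref{lem:certificate:characterization:cuts} proves the corollary. (Alternatively one could mimic the proof of the lemma, using \Cref{prop:max:circuit} in place of \Cref{prop:min:cut} for the ``if'' direction and a single exchange step for the ``only if'' direction, but the reduction above is cleaner.) I do not expect a substantive obstacle here: the only care needed is in the degenerate cases of the claim --- namely that $e\notin\mspan[M]{B-e}$ (so the explicit removal ``$-e$'' is meaningful), that every $f\in B$ with $f\neq e$ lies in $\mspan[M]{B-e}$ (since $B-e\subseteq\mspan[M]{B-e}$), and that $C_f^B$ is well defined for every $f\notin B$ because $B$ is a basis --- and all of these are immediate from the definitions of rank and span.
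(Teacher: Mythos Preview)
Your proposal is correct and follows essentially the same approach as the paper: both reduce the corollary to \Cref{lem:certificate:characterization:cuts} by establishing that the pair $(e,f)$ with $e\in B$, $f\in(E\setminus\mspan[M]{B-e})-e$ is exactly the pair with $f\notin B$, $e\in C_f^B-f$, and then swap the outer quantifier. Your version is in fact more careful than the paper's, which simply asserts the two containments (``Then, $f\in(E\setminus\mspan[M]{B-e})-e$'' and ``Then, $e\in C_f-f$'') without the detailed justification you supply via the span/fundamental-circuit exchange properties.
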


\subsubsection{Exchange properties of certificates} We continue by showing some exchange properties of certificates. 
\label{sec:exchange:properties}

\begin{lemma}
    \label{lem:extreme:upperlimit:exchange}
    Let $B$ be an MWB of $\mathcal{M}= (E,\cI, A,w)$
    and let $Q$ be a certificate that verifies $B$. Let $e \in B$ and $e' \not\in B$ with $U_e = U_{e'} = w_e = w_{e'}$ be such that $e \in C_{e'}$ for the fundamental circuit $C_{e'}$ of $e'$ with respect to $B$. Then $Q' = Q - e' + e$ is a certificate for $B' = B - e + e'$.  
\end{lemma}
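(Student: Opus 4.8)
The plan is to verify the fundamental-circuit characterization of \Cref{coro:certificate:characterization:circuits} directly for $B'$. First note that $B'=B-e+e'$ is an MWB: it is a basis because $e\in C_{e'}$ (throughout I write $C_g^B$ for the fundamental circuit of $g\notin B$ with respect to $B$, so $C_{e'}=C_{e'}^B$), and $w(B')=w(B)-w_e+w_{e'}=w(B)$ since $w_e=w_{e'}$. Next I record how the limit functions change. Because $Q'=Q-e'+e$ agrees with $Q$ outside $\{e,e'\}$, we have $U_x(Q')=U_x(Q)$ and $L_x(Q')=L_x(Q)$ for every $x\notin\{e,e'\}$; and using the hypotheses $U_e=w_e$, $U_{e'}=w_{e'}$, $w_e=w_{e'}$ together with $e\in Q'$ and $e'\notin Q'$, each of $U_e(Q)$, $L_e(Q')$ and $U_{e'}(Q')$ equals the common value $w_e\,(=w_{e'})$. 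With these identities, every instance of the inequality required by \Cref{coro:certificate:characterization:circuits} for $B'$ will be reduced to an instance of the same inequality for $B$, which holds because $Q$ verifies $B$.

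The non-basis elements of $B'$ are $e$ itself and every $f\notin B$ with $f\ne e'$, and I treat these in turn. If $f=e$, then $B'+e=B+e'$, so $C_e^{B'}=C_{e'}^B$; for $g\in C_{e'}^B-e$ with $g\ne e'$ we have $g\in B$ and $g\in C_{e'}^B-e'$, whence $U_g(Q')=U_g(Q)\le L_{e'}(Q)\le w_{e'}=w_e=L_e(Q')$, and for $g=e'$ we directly have $U_{e'}(Q')=w_e=L_e(Q')$. If $f\notin B$ and $f\ne e'$, I compare $C_f^{B'}$ with $C_f^B$. When $e\notin C_f^B$, then $C_f^B\subseteq(B-e)+f\subseteq B'+f$, so $C_f^{B'}=C_f^B$ and the desired inequality for every $g\in C_f^{B'}-f$ is literally the one guaranteed for $B$. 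When $e\in C_f^B$, then $C_f^B$ and $C_{e'}^B$ are distinct circuits sharing the element $e$, so the circuit-elimination axiom produces a circuit contained in $(C_f^B\cup C_{e'}^B)-e\subseteq B'+f$, which by uniqueness of the circuit in $B'+f$ must equal $C_f^{B'}$; hence $C_f^{B'}-f\subseteq(C_f^B-\{e,f\})\cup(C_{e'}^B-e)$. For $g$ in the first part the inequality again transfers verbatim from $B$. For $g$ in the second part, note that $e\in C_f^B-f$ gives $U_e(Q)=w_e\le L_f(Q)=L_f(Q')$; then $g=e'$ is handled since $U_{e'}(Q')=w_e$, and for $g\in C_{e'}^B-\{e,e'\}\subseteq B-e$ we chain $U_g(Q')=U_g(Q)\le L_{e'}(Q)\le w_{e'}=w_e\le L_f(Q)=L_f(Q')$.

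The step I expect to be the main obstacle is the subcase $f\notin B$, $f\ne e'$, $e\in C_f^B$: there the exchange genuinely changes the fundamental circuit ($C_f^{B'}\ne C_f^B$), so one must first identify $C_f^{B'}$ via circuit elimination and then control the circuit elements newly inherited from $C_{e'}^B$; this is exactly where the hypotheses $U_e=w_e=w_{e'}=U_{e'}$ are used, through the chain above. (An alternative route uses the cut characterization \Cref{lem:certificate:characterization:cuts} instead. The analogous obstacle is that for $a\in B-e$ the set $\mspan[M]{B'-a}$ may differ from $\mspan[M]{B-a}$; one shows by a short circuit-elimination argument that every $b\notin\mspan[M]{B'-a}$ with $b\notin\{a,e\}$ satisfies $b\notin\mspan[M]{B-a}$ or $b\notin\mspan[M]{B-e}$ — otherwise one would obtain a circuit inside an independent set — and this again reduces matters to inequalities provided by $Q$ verifying $B$, now invoking both \Cref{lem:certificate:characterization:cuts} and \Cref{coro:certificate:characterization:circuits}.)
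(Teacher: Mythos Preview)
Your proof is correct and follows essentially the same overall strategy as the paper: verify the circuit characterization (\Cref{coro:certificate:characterization:circuits}) for $B'$, split into $f=e$ versus $f\ne e$, and in the latter case split further according to whether the swap actually changes the fundamental circuit of $f$. Your two subcases $e\notin C_f^B$ and $e\in C_f^B$ are equivalent to the paper's split on $(C'_f-f)\subseteq B$ versus $(C'_f-f)\not\subseteq B$, since $e'\in C_f^{B'}$ iff $f\notin\mspan[M]{B'-e'}=\mspan[M]{B-e}$ iff $e\in C_f^B$.

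The only real technical difference is in the hard subcase. You invoke the circuit-elimination axiom to obtain $C_f^{B'}\subseteq(C_f^B\cup C_{e'}^B)-e$ and then handle elements inherited from $C_f^B$ and from $C_{e'}^B$ separately, chaining through $U_e(Q)=w_e=w_{e'}=U_{e'}(Q')$. The paper instead, for each $f'\in C_f^{B'}\setminus\{f,e'\}$, applies \Cref{obs:matroid:cuts} to produce some $g\in C_f^{B'}-f'$ with $g\notin\mspan[M]{B-f'}$, observes $g\in\{e',f\}$, and reduces to the certificate inequalities for $B$ via that $g$. Both arguments are short and exploit the same hypothesis chain; yours is slightly more self-contained in that it uses only the circuit-elimination axiom rather than the auxiliary \Cref{obs:matroid:cuts}, at the price of having to argue that the eliminated circuit is exactly $C_f^{B'}$.
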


\begin{proof}
    During this proof, we use $C'_f$ to refer to the fundamental circuit of $f$ with respect to $B'$ for an $f \not\in B'$.
    To show that $Q'$ is a certificate for $B'$, we show for each $f \not\in B'$ that $L_f(Q') \ge U_{f'}(Q')$ for each $f' \in C'_f - f$. 
    By~\Cref{coro:certificate:characterization:circuits}, this implies $Q'$ is a certificate for $B'$. 
    
    We distinguish between the two cases (1) $f=e$ and (2) $f \not= e$.

    \textbf{Case (1):} Assume $f = e$ and consider circuit $C'_e$. Since $B' = B - e + e'$, we have $C'_e = C_{e'}$.  
    Since $Q$ is a certificate for $B$ with $e' \not\in B$, we have $w_{e'} \ge L_{e'}(Q) \ge U_{f'}(Q)$ for all $f' \in C_{e'} - e'$ by~\Cref{coro:certificate:characterization:circuits}. For every $f' \in C_{e'} - \{e,e'\}$, the membership of $f$ and $f'$ in $Q$ and $Q'$ is identical,  so $U_{f'}(Q') = U_{f'}(Q)$.
    Since $C_{e'} = C'_e$, $e \in Q'$ and by definition $w_e = w_{e'}=U_{e'}$, this implies $L_e(Q')=w_e=w_{e'} \ge L_{e'}(Q) \ge U_{f'}(Q) = U_{f'}(Q')$ for all $f' \in C'_{e} - \{e,e'\}$.  Further, $L_e(Q')=w_{e'}=U_{e'} \geq U_{e'}(Q')$.
    
    \textbf{Case (2):} Assume $f \not= e$ and consider the circuit $C'_f$. We distinguish two subcases. 

    \begin{itemize}
        \item If $(C'_f -f) \subseteq B$, then $C'_f = C_f$ and $e,e' \not\in C'_f$. In this case, $L_f(Q') = L_f(Q) \ge U_{f'}(Q) = U_{f'}(Q')$ holds for all $f' \in C'_f-f$ by definition of $Q'$ and by our assumption that $Q$ is a certificate for $B$ with $f \not\in B$ (cf.~\Cref{coro:certificate:characterization:circuits}).
        
        \item If  $(C'_f -f) \not\subseteq B$, then we must have $e' \in C'_f - f$.  Consequently, $f \in E\setminus \mspan[M]{B'-e'} = E\setminus \mspan[M]{B-e}$.  Note that $w_{e'} = U_{e'}(Q') = U_{e}(Q)$ holds by definition because of $w_e=w_{e'}=U_e=U_{e'}$. Since $Q$ is a certificate for $B$ with $e\in B$, we must have $w_{e'} = U_{e'}(Q') = U_{e}(Q) \le L_f(Q) = L_f(Q')$ by~\Cref{lem:certificate:characterization:cuts}. 

        If $C'_f = \{f,e'\}$, then the above argument already shows $L_f(Q') \ge U_{f'}(Q')$ for each $f' \in C'_f - f$, and we are done with the proof for this case. Thus, assume $C'_f \setminus \{f,e'\} \not= \emptyset$ and consider an arbitrary $f' \in C'_f - f - e'$.
        Then $f' \in B \cap C'_f$ and, by~\Cref{obs:matroid:cuts}, there must be an element $g \in C'_f - f'$ with $g \not\in \mspan[M]{B-f'}$. The only two elements of $C_f' - f'$ that are not in $B-f' \subseteq \mspan[M]{B-f'}$ are $e'$ and $f$.

        If $f \not\in \mspan[M]{B-f'}$, then our assumption that $Q$ is a certificate for $B$ implies $U_{f'}(Q') = U_{f'}(Q) \le L_f(Q) = L_f(Q')$.
        If $e' \not\in \mspan[M]{B-f'}$, then our assumption that $Q$ is a certificate for $B$ implies $U_{f'}(Q') = U_{f'}(Q) \le L_{e'}(Q) \le w_{e'} = U_{e}(Q) \le L_f(Q')$, where the last inequality uses 
        $U_{e}(Q) \le L_f(Q')$ as argued above. In all cases, 
        $U_{f'}(Q') \le L_f(Q')$. \qedhere
    \end{itemize}
\end{proof}

The next lemma is a dual version of~\Cref{lem:extreme:upperlimit:exchange}.  The proof is in~\Cref{app:prelim}.

\begin{restatable}{lemma}{lemLowerLimitExchange}
    \label{lem:extreme:lowerlimit:exchange}
    Let $B$ be an MWB of $\mathcal{M}= (E,\cI, A,w)$
    and let $Q$ be a certificate that verifies $B$. Let $e \notin B$ and $e' \in B$ with $L_e = L_{e'}=w_e = w_{e'}$ be such that $e' \in C_{e}$ for the fundamental circuit $C_{e}$ of $e$ with respect to $B$. Then $Q' = Q - e' + e$ is a certificate for $B' = B + e - e'$.
\end{restatable}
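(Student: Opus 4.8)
The plan is to mirror the proof of Lemma~\ref{lem:extreme:upperlimit:exchange} almost verbatim, exploiting the self-duality of the characterization in Corollary~\ref{coro:certificate:characterization:circuits} (equivalently, Lemma~\ref{lem:certificate:characterization:cuts}) under swapping the roles of lower and upper limits, of $B$ and $E \setminus B$, of fundamental circuits and fundamental cuts. Concretely, let $C'_f$ denote the fundamental circuit of $f$ with respect to $B' = B + e - e'$, and using Corollary~\ref{coro:certificate:characterization:circuits} it suffices to verify, for every $f \notin B'$ and every $f'' \in C'_f - f$, the inequality $U_{f''}(Q') \le L_f(Q')$. Since $Q' = Q - e' + e$ differs from $Q$ only on $\{e,e'\}$, and by hypothesis $w_e = w_{e'} = L_e = L_{e'}$, we have $L_{f}(Q') = L_{f}(Q)$ for every $f \notin \{e,e'\}$ and $L_e(Q') = w_e$ (since $e \in Q'$, but also $L_e = w_e$ anyway), while $U_{e'}(Q')$ becomes $U_{e'}$ rather than $w_{e'}$ — this is the only place where a limit can increase, which is why the argument is nontrivial.

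I would split into cases exactly as in the proof of Lemma~\ref{lem:extreme:upperlimit:exchange}. In the first case, $f = e'$: note $e' \notin B'$, and since $B' = B - e' + e$ one checks $C'_{e'} = C_e$ (the fundamental circuit of $e$ with respect to the original $B$, which contains $e'$ by hypothesis). The required inequalities $U_{f''}(Q') \le L_{e'}(Q')$ for $f'' \in C_e - e'$ follow from the fact that $Q$ is a certificate for $B$ (so $U_{f''}(Q) \le L_e(Q) \le w_e = w_{e'}$), together with $L_{e'}(Q') = w_{e'}$ because $e' \in Q'$, and from $U_e(Q') = w_e \le w_{e'} = L_{e'}(Q')$ for the element $e$ itself. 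In the second case, $f \neq e'$: I would further subcase on whether $(C'_f - f) \subseteq B$, in which case $C'_f = C_f$, neither $e$ nor $e'$ lies in it, and all the relevant limits in $Q'$ agree with those in $Q$, so the certificate property for $B$ transfers directly; or $(C'_f - f) \not\subseteq B$, which forces $e \in C'_f - f$, hence $f \in E \setminus \mathrm{span}_M(B' - e) = E \setminus \mathrm{span}_M(B - e')$, so by Lemma~\ref{lem:certificate:characterization:cuts} applied to the certificate $Q$ and the pair $(e', f)$ we get $L_e(Q') = w_e = w_{e'} = L_{e'}(Q) \le \ldots$ — wait, more precisely $L_f(Q') = L_f(Q) \ge U_{e'}(Q)$, and since $w_{e'} = L_{e'}$ one has $U_{e'}(Q) \ge w_{e'}$, so we must instead get the bound the other way: $Q$ being a certificate for $B$ with $e' \in B$ gives $U_{e'}(Q) \le L_f(Q)$ only if $f \in E\setminus \mathrm{span}_M(B-e')$, which is exactly what we have, and $U_e(Q') = w_e$ is the troublesome term to bound — it equals $w_e = w_{e'} = L_{e'}$, and we need $w_e \le L_f(Q')$. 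To handle a general $f'' \in C'_f - f - e$ with $f'' \in B \cap C'_f$, I invoke Observation~\ref{obs:matroid:cuts} to find $g \in C'_f - f''$ with $g \notin \mathrm{span}_M(B - f'')$; the only candidates are $f$ and $e$, and in either case the certificate property of $Q$ for $B$ (via Lemma~\ref{lem:certificate:characterization:cuts} applied to $f'' \in B$) yields $U_{f''}(Q) \le L_f(Q)$ or $U_{f''}(Q) \le L_e(Q) \le w_e$, and combining with $w_e = w_{e'} \le L_f(Q')$ gives the claim.

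The main obstacle — and the one subtlety distinguishing this from a purely mechanical dualization — is controlling the term $U_{e'}(Q')$, which jumps from $w_{e'}$ up to $U_{e'}$: wherever $e'$ appears as an element of some $C'_f - f$ (i.e. $e' \in C'_f - f$), we must argue $U_{e'}(Q') \le L_f(Q')$, and since $e' \notin B'$, this situation only arises when $f = e'$, which is exactly Case~(1); so $e'$ never appears on the upper-limit side of any other constraint, and $U_{e'}$ never causes trouble. The genuinely delicate bookkeeping is therefore confined to checking that the auxiliary element $e$ (now a basis element of $B'$ rather than outside it) interacts correctly, i.e. that $L_e(Q') = w_e$ is large enough and $U_e(Q') = w_e$ is small enough wherever $e$ shows up, using the hypothesis $L_e = w_e = w_{e'} = L_{e'}$ at each step. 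I expect the write-up to follow the template of the proof of Lemma~\ref{lem:extreme:upperlimit:exchange} with every $U$ replaced by $L$, every ``$\ge$'' by ``$\le$'', $B-e$ by $B-e'$, and so on; because the paper defers this proof to the appendix, a short remark that it is ``the dual of Lemma~\ref{lem:extreme:upperlimit:exchange}, obtained by exchanging the roles of upper and lower limits and of basis and non-basis elements'' plus the case analysis above suffices.
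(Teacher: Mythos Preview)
Your approach is correct and essentially the same as the paper's: both do a two-case split ($f$ equals the swapped element or not), with a further subcase on whether the fundamental circuit stays inside the old basis, and both invoke Observation~\ref{obs:matroid:cuts} to locate the element $g \in \{e,f\}$ (resp.\ $\{e',f\}$) that lets the certificate property of $Q$ for $B$ kick in. The only stylistic difference is that the paper proves this lemma via the cut characterization (Lemma~\ref{lem:certificate:characterization:cuts}), checking $U_f(Q') \le L_{f'}(Q')$ for $f \in B'$ and $f' \in E \setminus \mspan[M]{B'-f}$, whereas you stick with the circuit characterization (Corollary~\ref{coro:certificate:characterization:circuits}) as in the proof of Lemma~\ref{lem:extreme:upperlimit:exchange}; the two are equivalent, and your choice actually makes the ``dual'' structure more transparent, since the case analysis then looks line-for-line like that of Lemma~\ref{lem:extreme:upperlimit:exchange}.

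One small slip to fix: in Case~(1) you write ``$L_{e'}(Q') = w_{e'}$ because $e' \in Q'$'', but $Q' = Q - e' + e$, so $e' \notin Q'$. The conclusion $L_{e'}(Q') = w_{e'}$ is nonetheless correct, since $e' \notin Q'$ gives $L_{e'}(Q') = L_{e'}$, and $L_{e'} = w_{e'}$ by hypothesis. This is exactly the point where the extreme-case hypothesis $L_{e'} = w_{e'}$ does its work (and dually to how $U_{e'} = w_{e'}$ was used in Lemma~\ref{lem:extreme:upperlimit:exchange}), so you should make that explicit rather than appeal to a membership in $Q'$ that does not hold.
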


It is easy to show the following exchange property for trivial elements.

\begin{restatable}{lemma}{lemTrivialExchange}
     \label{lem:trivial:exchange}
    Let $B$ be an MWB of $\mathcal{M}= (E,\cI, A,w)$
    and let $Q$ be a certificate that verifies~$B$.
    Let $e \in B$ and $e' \not\in B$ with $w_e = w_{e'}$ be such that $B' = B-e+e'$ is independent.
    If each $f \in \{e,e'\}$ is either trivial or contained in $Q$, then $Q$ also verifies $B'$.
\end{restatable}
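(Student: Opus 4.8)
The plan is to argue directly from the definition of verification (\Cref{def: what it means for S to verify T}) rather than from the certificate characterizations \Cref{lem:certificate:characterization:cuts} or \Cref{coro:certificate:characterization:circuits}. Going through the circuit characterization would force a case analysis on the fundamental circuits of $B'$ (how $C_f^{B'}$ relates to $C_f^B$ and $C_{e'}^B$, which element of $B$ is swapped out, etc.), whereas the definitional route is essentially immediate and explains the paper's remark that the statement is easy.

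The first step is to note that the hypothesis pins down the value of $e$ and $e'$ in every consistent weight assignment: for $f \in \{e,e'\}$, if $f$ is trivial then $A_f = \{w_f\}$, so any weight assignment $w^*$ for $\cM$ has $w^*_f \in A_f = \{w_f\}$; and if $f \in Q$, then $w^*_f = w_f$ holds by \Cref{def: weight assignment consistent with certificate} whenever $w^*$ is consistent with $Q$. Either way, $w^*_e = w_e$ and $w^*_{e'} = w_{e'}$ for every $w^*$ consistent with $Q$. Now I would fix an arbitrary such $w^*$: since $Q$ verifies $B$, the set $B$ is an MWB of $(E,\cI,w^*)$. The set $B' = B-e+e'$ is independent by hypothesis and has $|B'| = |B|$ (because $e \in B$, $e' \notin B$), hence is a basis of $M$. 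Finally, using the pinning and $w_e = w_{e'}$, we get $w^*(B') = w^*(B) - w^*_e + w^*_{e'} = w^*(B) - w_e + w_{e'} = w^*(B)$, so $B'$ is a basis of minimum $w^*$-weight, i.e.\ an MWB of $(E,\cI,w^*)$. Taking $w^* = w$ shows $B'$ is an MWB of $\cM$ in the first place, and since $w^*$ was arbitrary among assignments consistent with $Q$, $Q$ verifies $B'$ by \Cref{def: what it means for S to verify T}.

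I do not expect any genuine obstacle here; the only points requiring a moment of care are (i) justifying that every consistent $w^*$ must agree with $w$ on $e$ and $e'$ — this is exactly the trivial-or-queried case split — and (ii) recalling that an independent set whose size equals that of a basis is itself a basis, so that $B'$ is eligible to be an MWB at all.
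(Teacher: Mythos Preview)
Your proposal is correct and follows essentially the same route as the paper's proof: both argue directly from \Cref{def: what it means for S to verify T}, fixing an arbitrary weight assignment consistent with $Q$, observing that $w^*_e = w_e = w_{e'} = w^*_{e'}$, and concluding that $B'$ has the same weight as $B$ and hence is also an MWB. Your write-up is in fact a bit more explicit than the paper's (spelling out the trivial-or-in-$Q$ case split and why $B'$ is a basis), but the argument is the same.
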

\section{Computing an MWB that has a minimum-cost certificate} \label{sec: identify MWB}

Given a weighted uncertainty matroid $\mathcal{M} = (E, \mathcal{I}, A,w)$, we compute an MWB~$B$ that admits a minimum-cost certificate, though we do not yet compute the  certificate itself. To this end, we introduce a set of contraction and deletion rules. Starting from $B= \emptyset$, these rules allow us to safely add elements to $B$ (contraction) and ban elements from ever entering $B$ (deletion) while maintaining the invariant that there is a minimum-cost certificate which verifies a basis $B'$ that contains $B$ 
and does not contain any deleted elements. We formalize this invariant by introducing the notion of a ``compatible minor" which intuitively is the remaining (uncertainty) matroid after applying contractions and deletions.
We refer to sets of contracted and deleted elements as $K$ and $D$, respectively.

\begin{definition}
    Let $\cM=(E,\cI,A,w)$ be a weighted uncertainty matroid with a minimum-cost certificate of cost $c^*$ and let $D,K \subseteq E$. Let $M[D,K]$ be the matroid obtained from $M=(E,\cI)$ by deleting $D$ and contracting $K$. 
    We say $M[D,K]$ is a {\em compatible minor} of  $\cM$ if there is a query set for $\cM$ with cost $c^*$ that verifies an MWB $B$ of 
    $\cM$ with $K \subseteq B$ and $D \cap B = \emptyset$.
    Furthermore, we say that a certificate $Q$ for $\cM$ is \emph{compatible with} $M[D,K]$ if~$Q$ verifies a basis $B$ of 
    $\cM$ with $K \subseteq B$ and $D \cap B = \emptyset$.
\end{definition}

Our goal is to iteratively create sets $D$ and $K$ until $K$ becomes an MWB while maintaining the invariant that there is a minimum-cost certificate compatible with $D$ and $K$. To this end, we first introduce a series of observations and lemmas that, given a compatible minor $M[D,K]$, allow us to compute a \emph{larger} compatible minor $M[D',K']$ (\Cref{sec:matroid:rules}). Formally, larger means $D' \supseteq D$, $K' \supseteq K$ and $D' \supset D$ or $K' \supset K$. We also refer to these observations and lemmas as \emph{contraction and deletion rules}. We use the phrase \emph{applying a rule} to refer to the process of computing the larger compatible minor $M[D',K']$ and replacing the given one, i.e.~we set $K:=K'$ and $D:=D'$. We will show that exhaustively applying these rules in a certain order until $K$ is a basis and $D = E \setminus K$ yields an MWB that can be verified with minimum-cost.
Afterwards, we show that the rules can be implemented with a polynomial-time algorithm (\Cref{sec:matroid:algorithm}), which will imply the following theorem.

\begin{theorem}
    \label{thm:basis-computation}
    There is a polynomial-time algorithm that computes, for any given weighted uncertainty matroid~$\cM$, an MWB $B$ that can be verified by a minimum-cost certificate.
\end{theorem}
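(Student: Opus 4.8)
The plan is to run the contraction and deletion rules of \Cref{sec:matroid:rules} as a loop, starting from the trivial compatible minor $M[\emptyset,\emptyset]$. First I would check that $M[\emptyset,\emptyset]$ really is a compatible minor of $\cM$: by definition $\cM$ has a minimum-cost certificate $Q$ of some cost $c^*$, and $Q$ verifies some MWB $B$; since vacuously $\emptyset \subseteq B$ and $\emptyset \cap B = \emptyset$, the condition in the definition of compatible minor is satisfied. Maintaining the loop invariant ``$M[D,K]$ is a compatible minor of $\cM$'', I would then repeatedly apply a rule to pass from the current $M[D,K]$ to a strictly larger compatible minor $M[D',K']$, in the order prescribed in \Cref{sec:matroid:rules}. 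Since $D$ and $K$ are disjoint (any basis $B$ witnessing compatibility has $K \subseteq B$ and $D \cap B = \emptyset$), we have $|D| + |K| \le |E|$, and each application increases $|D| + |K|$ by at least one, so the loop halts after at most $|E|$ iterations in a state where $K$ is a basis of $M$ and $D = E \setminus K$.

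The key structural claim making this work is that the rules are \emph{exhaustive}: whenever $M[D,K]$ is a compatible minor with $K$ not yet a basis (equivalently, $M[D,K]$ still has elements outside $D\cup K$, using that we only delete rank-preservingly), some rule applies. This is the step I expect to be the main obstacle, and it is where essentially all of the content of \Cref{sec:matroid:rules} lies: one must show that the contraction rules (handling minimum-weight ``cut'' elements and upper-extreme elements, i.e. elements with $w_e = U_e$) together with the deletion rules (handling maximum-weight ``circuit'' elements and lower-extreme elements, i.e. elements with $w_e = L_e$) together cover every configuration of the remaining matroid, and that each rule preserves the invariant. The invariant preservation in turn is argued from the exchange properties \Cref{lem:extreme:upperlimit:exchange}, \Cref{lem:extreme:lowerlimit:exchange}, and \Cref{lem:trivial:exchange}, and from the optimality conditions \Cref{prop:min:cut} and \Cref{prop:max:circuit}, which let us transform a given minimum-cost certificate and its MWB into one consistent with the enlarged $D'$ and $K'$ without increasing cost.

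Once the loop halts, correctness of the output is immediate. We have that $K$ is a basis of $M$ and, because $M[D,K]$ is a compatible minor of $\cM$, there is a query set of cost $c^*$ that verifies an MWB $B$ of $\cM$ with $K \subseteq B$; since $K$ is a maximal independent set and $B$ is independent with $K \subseteq B$, necessarily $B = K$. Hence $K$ itself is an MWB admitting a certificate of minimum cost $c^*$, and the algorithm returns $B := K$. Finally, the polynomial running time is supplied by \Cref{sec:matroid:algorithm}: each rule application can be realized with polynomially many independence-oracle calls plus polynomial additional bookkeeping (computing spans, fundamental circuits, element weights, and identifying the relevant extreme-case and trivial elements), and there are at most $|E|$ applications, so the total running time is polynomial in $|E|$. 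The subtlety in this last part is purely algorithmic — efficiently recognizing which rule to fire and on which element — and is independent of the conceptual core, which remains the exhaustiveness-and-invariant argument above.
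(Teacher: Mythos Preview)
Your overall scheme---maintain the invariant that $M[D,K]$ is a compatible minor, apply the rules of \Cref{sec:matroid:rules} until $K$ is a basis, then output $K$---is the paper's approach. The genuine gap is your exhaustiveness claim: it is \emph{not} true that whenever $M[D,K]$ still has elements, one of the contraction/deletion rules you list (the unique-max-in-circuit and unique-min-in-cut rules of \Cref{obs:matroid:unique:max,obs:matroid:unique:min}, together with the extreme-case rules of \Cref{lem:matroid:cut:non-trivial,lem:matroid:circuit:non-trivial,lem:matroid:cut:trivial,lem:matroid:circuit:trivial}) applies. Consider a uniform matroid of rank~$2$ on three elements, each with $A_e=(0,1)$ and $w_e=\tfrac12$: no element is extreme ($L_e<w_e<U_e$), no element is trivial, and every circuit and every cut has all-equal weights, so none of the six rules fires. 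Your loop is stuck with $K=D=\emptyset$.

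What these rules \emph{do} achieve is exactly \Cref{asm:matroids:no-extrem-elements}: after they are exhausted, the remaining minor $M'$ contains no extreme-case elements (every surviving $e$ has $L_e<w_e<U_e$). The paper then needs a separate, non-trivial argument---\Cref{lem:unique:MWB}---to show that in this open-interval-like situation, \emph{every} MWB of $M'$ can be verified at minimum cost, so one may simply compute an arbitrary MWB of $M'$, contract it, and delete the rest. This is the last line of \Cref{alg:basis}, and it is not a single-element rule; it disposes of the entire remaining minor in one step. Proving \Cref{lem:unique:MWB} in turn relies on identifying mandatory query elements via the lower-limit-basis argument (\Cref{lem:matroid:mandatory}) and showing that once all such mandatory elements are treated as trivial, the MWB becomes unique (\Cref{lem:unique}). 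So the ``exhaustiveness'' you flag as the main obstacle is false as stated, and the actual obstacle is the content of \Cref{lem:unique:MWB}, which your plan does not address.
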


\subsection{Contraction and Deletion Rules}
\label{sec:matroid:rules}

Consider a weighted uncertainty matroid $\cM = (E, \cI, A,w)$.  
Elements that appear in every MWB of $\cM$ or in none can be safely contracted (added to $K$) or deleted (added to $D$). The next observations follow from standard arguments
(proofs are in~\Cref{app:identify}).

\begin{restatable}{obs}{obsUniqueMax}
    \label{obs:matroid:unique:max}
    Let $M':=M[D,K]$ be a compatible minor of $\cM$.  If there is a circuit $C$ in $M[D,K]$ with an element $e \in C$ that has the unique maximum weight in $C$, 
    then $M[D+e,K]$ is a compatible minor of $\cM$. 
\end{restatable}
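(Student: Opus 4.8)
The plan is to show that no MWB of $\cM$ uses the element $e$, so that deleting $e$ cannot destroy any MWB, and in particular cannot destroy the MWB witnessing that $M[D,K]$ is a compatible minor. First I would recall that, since $M[D,K]$ is a compatible minor, there is a minimum-cost certificate $Q$ (of cost $c^*$) verifying some MWB $B$ of $\cM$ with $K \subseteq B$ and $D \cap B = \emptyset$. It suffices to exhibit an MWB $B'$ of $\cM$ with $K \subseteq B'$, $(D+e) \cap B' = \emptyset$, still verified by a certificate of cost $c^*$; taking $B' := B$ will work once we argue $e \notin B$.

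So the heart of the argument is: \emph{the element $e$ of unique maximum weight in the circuit $C$ of $M[D,K]$ cannot lie in any MWB $B$ of $\cM$ with $K \subseteq B$, $D \cap B = \emptyset$.} Here I would use the standard matroid exchange argument. A circuit $C$ of the minor $M[D,K] = M[D] / K$ corresponds to a circuit of $M[D] = M \setminus D$ contained in $C \cup K$ (adding back some subset of the contracted elements), and since $K \subseteq B$ and $D \cap B = \emptyset$, this is a circuit of $M$ whose elements other than those in $C$ all lie in $B$. Suppose for contradiction $e \in B$. Since $e$ lies on a circuit $\tilde C \subseteq C \cup K$ of $M$ with $e \in \tilde C \cap B$, by \Cref{obs:matroid:cuts} there is an element $e' \in \tilde C - e$ with $e' \notin \mspan[M]{B - e}$, so that $B - e + e'$ is again a basis of $M$. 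Now $e' \in \tilde C - e \subseteq (C - e) \cup K \subseteq (C - e) \cup B$; but $e' \notin \mspan[M]{B-e}$ forces $e' \notin B - e$, hence $e' \notin K$, so $e' \in C - e$. By the choice of $e$ as the \emph{unique} maximum-weight element of $C$ we get $w_{e'} < w_e$, so $w(B - e + e') < w(B)$, contradicting that $B$ is an MWB. Therefore $e \notin B$.

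With $e \notin B$ established, the same $B$, the same $K \subseteq B$, $D \cap B = \emptyset$ now also gives $(D+e) \cap B = \emptyset$, and the very same minimum-cost certificate $Q$ of cost $c^*$ verifies $B$; hence $M[D+e,K]$ is a compatible minor of $\cM$ by definition. I would also note that $M[D+e,K]$ is a well-defined matroid of the same rank: since $e$ lies on a circuit of $M[D,K]$, deleting $e$ does not drop the rank of $M[D,K]$ (nor, by the correspondence above, of $M$), so the deletion is of the kind the paper permits.

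The main obstacle I anticipate is the bookkeeping between $C$ as a circuit of the minor $M[D,K]$ and the corresponding circuit in $M$ itself — precisely, invoking the fact that a circuit of $M\setminus D / K$ pulls back to a circuit of $M$ contained in $C \cup K$, and checking that its non-$C$ part sits inside $B$ so that \Cref{obs:matroid:cuts} applies. Everything after that is the routine weight-exchange contradiction. If one prefers to avoid the minor-to-matroid pullback, an alternative is to argue directly inside $M[D,K]$: there $B \setminus K$ is an MWB of the (appropriately weighted) minor, $e \in B \setminus K$ would lie on the circuit $C$, and \Cref{obs:matroid:cuts} applied in $M[D,K]$ yields $e' \in C - e$ with $B \setminus K - e + e'$ a basis of the minor; uniqueness of the max again gives a strictly lighter basis, contradiction — but one must first confirm that the compatible-minor definition lets us treat $B \setminus K$ as an MWB of the minor, which is a small lemma worth stating explicitly.
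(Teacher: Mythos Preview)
Your proposal is correct and matches the paper's argument in substance: show that $e$ lies in no MWB compatible with $D,K$, hence the same witnessing MWB $B$ and certificate $Q$ certify that $M[D+e,K]$ is a compatible minor. The only difference is presentational. The paper argues entirely inside the minor $M'=M[D,K]$: it assumes $e$ lies in some MWB of $M'$, applies \Cref{obs:matroid:cuts} in $M'$ to the circuit $C$ to find $e'\in C-e$ outside $\mspan[M']{B-e}$, and derives the usual lighter-basis contradiction --- exactly your ``alternative'' route. Your primary route instead pulls the minor-circuit back to a circuit of $M$ contained in $C\cup K$ and runs the exchange there; this is also fine but adds a layer of bookkeeping the paper avoids. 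Your caveat that one should check $B\setminus K$ is an MWB of the minor is apt; the paper uses this implicitly (``not part of any MWB of $M[D,K]$ and thus of $\cM$'') without spelling it out.
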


\begin{restatable}{obs}{obsUniqueMin}
    \label{obs:matroid:unique:min}
    Let $M':=M[D,K]$ be a compatible minor of $\cM$.  If there is a basis $B$ of 
    $M'$ with an element $e\in B$ that has unique minimum weight in $E(M')\setminus \mspan[M']{B - e}$, 
    then $M[D,K+e]$ is a compatible minor of $\cM$.    
\end{restatable}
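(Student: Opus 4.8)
The plan is to reduce the statement to a single matroid fact: under the stated hypothesis, $e$ lies in \emph{every} MWB of the minor $M' \coloneqq M[D,K]$. This mirrors the preceding observation, in which a unique-maximum circuit element lies in \emph{no} MWB of $M'$. Granting this fact, the certificate that witnesses compatibility of $M[D,K]$ witnesses compatibility of $M[D,K+e]$ verbatim and at the same cost, so no new certificate argument is needed; in particular, none of the exchange lemmas of \Cref{sec:exchange:properties} come into play.

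First I would prove the matroid claim. Let $\tilde B$ be any MWB of $M'$ and suppose, for contradiction, that $e \notin \tilde B$. Let $C \coloneqq C_e^{\tilde B}$ be the fundamental circuit of $e$ with respect to $\tilde B$ in $M'$. Since $e \in B \cap C$, \Cref{obs:matroid:cuts}, applied in $M'$ to the basis $B$ and the circuit $C$, produces an element $e' \in C - e$ with $e' \in E(M') \setminus \mspan[M']{B - e}$. Because $e' \neq e$, the hypothesis that $e$ has the \emph{unique} minimum weight in this set gives $w_{e'} > w_e$. On the other hand, $e' \in C - e \subseteq \tilde B$ and $e' \in C_e^{\tilde B}$, so $\tilde B - e' + e$ is a basis of $M'$; optimality of $\tilde B$ then forces $w(\tilde B) \le w(\tilde B - e' + e) = w(\tilde B) - w_{e'} + w_e$, i.e., $w_{e'} \le w_e$, a contradiction. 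Hence $e \in \tilde B$.

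Next I would transfer this to $\cM$. Since $M[D,K]$ is a compatible minor, there is a query set $Q$ for $\cM$ of cost $c^*$ verifying an MWB $\hat B$ of $\cM$ with $K \subseteq \hat B$ and $D \cap \hat B = \emptyset$. As $\hat B$ is a basis of $M$ contained in $E \setminus D$ with $K \subseteq \hat B$, the set $\hat B \setminus K$ is a basis of $M' = (M \setminus D)/K$, and in fact an MWB of $M'$: for every basis $\tilde B$ of $M'$, the set $\tilde B \cup K$ is a basis of $M$, so $w(\hat B \setminus K) + w(K) = w(\hat B) \le w(\tilde B \cup K) = w(\tilde B) + w(K)$, hence $w(\hat B \setminus K) \le w(\tilde B)$. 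By the matroid claim, $e \in \hat B \setminus K$, so $K + e \subseteq \hat B$ while still $D \cap \hat B = \emptyset$. Therefore the same $Q$, at cost $c^*$, verifies an MWB of $\cM$ compatible with $M[D,K+e]$, which establishes that $M[D,K+e]$ is a compatible minor.

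The step that needs the most care in a full write-up is the bookkeeping at the interface between $M$ and its minor $M'$: verifying that $\hat B \setminus K$ is not merely a basis but an MWB of $M'$, and that \Cref{obs:matroid:cuts} and the fundamental-circuit basis exchange are invoked inside $M'$, with its inherited weights, rather than inside $M$. Conceptually the argument is short and uses nothing beyond \Cref{obs:matroid:cuts}; the only mild surprise is that contraction forces $e$ into every relevant MWB, which is precisely why the certificate side requires no work.
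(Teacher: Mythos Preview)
Your proof is correct and follows the same approach as the paper: both show by contradiction that $e$ lies in every MWB of $M'$ via the fundamental circuit of $e$ with respect to a hypothetical MWB avoiding $e$, invoking \Cref{obs:matroid:cuts} to find an $e'$ in $E(M')\setminus\mspan[M']{B-e}$ and then swapping. Your write-up is in fact more complete than the paper's, which simply asserts that ``$e$ is in every MWB of $M'$'' implies compatibility of $M[D,K+e]$; you spell out the transfer by showing $\hat B\setminus K$ is an MWB of $M'$, which is the right thing to do.
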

We continue by giving contraction and deletion rules for \emph{extreme case elements} $e$ satisfying  $w_e = U_e$ or $w_e = L_e$.  The existence of these elements separates our work from~\cite{erlebach_verification}.

\subsubsection{Rules for non-trivial extreme case elements}

We start by considering non-trivial extreme case elements.
For each $w \in \{ w_e \mid e \in E\}$, we define 
$E_w^L = \{ e \in E \mid L_e = w  \land e \text{ is non-trivial}\}$ and $E_w^ U = \{ e \in E \mid U_e = w \land e \text{ is non-trivial} \}$. 

\begin{lemma}
    \label{lem:matroid:cut:non-trivial}
    Let $M':=M[D,K]$ be a compatible minor of  $\cM$.
    Let $e$ be a \emph{non-trivial} element such that (i) $e\in B$ for some MWB $B$ of $M'$,
    (ii) $w_e = U_e$, (iii) $e$ has minimum weight in $E(M')\setminus \mspan[M']{B-e}$  and (iv) $e$ has maximum query cost in $(E(M')\setminus \mspan[M']{B-e}) \cap E_{w_e}^U$. Then $M[D,K+e]$ is a compatible minor of $\cM$.
\end{lemma}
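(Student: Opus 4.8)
Since $M[D,K]$ is a compatible minor, we may fix a minimum-cost certificate $Q$, say of cost $c^*$, that verifies some MWB $B^*$ of $\cM$ with $K\subseteq B^*$ and $B^*\cap D=\emptyset$. We may assume $e\notin K$ (otherwise $K+e=K$ and the claim is immediate), and if $e\in B^*$ then $B^*$ itself witnesses that $M[D,K+e]$ is a compatible minor; so from now on assume $e\notin B^*$.

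The first step is to move $e$ into the basis by a weight-preserving exchange. In the minor $M':=M[D,K]$, the set $\hat B:=B^*\setminus K$ is a basis of $M'$ with $e\notin\hat B$, and $e$ lies in its fundamental circuit $C_e^{\hat B}$. Since also $e\in B$ for the MWB $B$ of $M'$ from~(i), \Cref{obs:matroid:cuts} applied in $M'$ to the circuit $C_e^{\hat B}$ and the basis $B$ gives an element $g\in C_e^{\hat B}-e$, necessarily $g\in\hat B$, with $g\notin\mspan[M']{B-e}$. Then $\hat B-g+e$ is a basis of $M'$, so $B':=(\hat B-g+e)\cup K=B^*-g+e$ is a basis of $M$ with $K+e\subseteq B'$ and $B'\cap D=\emptyset$; since $B'$ is independent, $g$ must lie in the unique circuit $C_e^{B^*}$ of $B^*+e$. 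Because $g\in E(M')\setminus\mspan[M']{B-e}$, condition~(iii) gives $w_e\le w_g$, so $w(B')=w(B^*)-w_g+w_e\le w(B^*)$; as $B^*$ is an MWB this forces $w_g=w_e$ and makes $B'$ an MWB as well.

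Next I would show $e\in Q$. As $e$ is non-trivial and $w_e=U_e=\sup A_e$, the area $A_e$ contains a point strictly below $w_e$, hence $L_e<w_e$. If $e\notin Q$, then $L_e(Q)=L_e$, and \Cref{coro:certificate:characterization:circuits} (applied to the certificate $Q$ for $B^*$, the non-basis element $e$, and $g\in C_e^{B^*}-e$) gives $U_g(Q)\le L_e(Q)=L_e<w_e=w_g\le U_g(Q)$, a contradiction. So $e\in Q$, and the same corollary now gives $U_g(Q)\le L_e(Q)=w_e$; combined with $U_g(Q)\ge w_g=w_e$ this yields $U_g(Q)=w_e$, and in particular $U_g=w_e$ whenever $g\notin Q$.

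It remains to produce a cost-$c^*$ certificate for $B'$, by a two-way case split. If $g\in Q$ or $g$ is trivial, then each of $e$ and $g$ is either trivial or contained in $Q$, so \Cref{lem:trivial:exchange} shows that $Q$ itself verifies $B'$, and we take $Q':=Q$. Otherwise $g\notin Q$ and $g$ is non-trivial, so $U_g=w_e$ and thus $g\in E_{w_e}^U$; since also $g\in E(M')\setminus\mspan[M']{B-e}$, condition~(iv) yields $c_e\ge c_g$. As $U_g=U_e=w_g=w_e$ and $g\in C_e^{B^*}$, \Cref{lem:extreme:upperlimit:exchange} (with the roles of its ``$e$'' and ``$e'$'' taken by $g$ and $e$) shows $Q':=Q-e+g$ verifies $B'$, and $c(Q')=c^*-c_e+c_g\le c^*$; since $Q'$ is a certificate of $\cM$ this forces $c(Q')=c^*$. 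In both cases $Q'$ is a minimum-cost certificate verifying the MWB $B'$ with $K+e\subseteq B'$ and $B'\cap D=\emptyset$, so $M[D,K+e]$ is a compatible minor.

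The main obstacles are (a) selecting the exchange element $g$ inside the minor $M'$, so that conditions~(iii) and~(iv) -- which are phrased in terms of $M'$ -- are applicable, while still relating $g$ back to the fundamental circuit $C_e^{B^*}$ in $M$; and (b) ensuring the certificate cost does not increase when $e$ is swapped out of $Q$ and $g$ is swapped in, which is exactly what the maximality in condition~(iv) provides. Establishing $e\in Q$ (which crucially relies on $e$ being non-trivial with $w_e=U_e$) is what lets the final case analysis collapse neatly onto the two exchange lemmas.
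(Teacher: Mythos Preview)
Your proof is correct and follows essentially the same approach as the paper: fix a minimum-cost compatible certificate, perform an exchange along the fundamental circuit to bring $e$ into the verified basis, and then use the three exchange lemmas (\Cref{lem:trivial:exchange}, \Cref{lem:extreme:upperlimit:exchange}, and the non-triviality contradiction) to recover a minimum-cost certificate for the new basis. The only cosmetic difference is that you first establish $e\in Q$ and then do a two-way case split on $g$, whereas the paper does a three-way split in which the case $e\notin Q^*$ is the contradiction; the underlying reasoning and the role of condition~(iv) in bounding the cost are identical.
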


\begin{proof}
    Let $Q^*$ denote a minimum-cost certificate for $\cM$ that is compatible with $M'$. Then, this certificate must also verify some MWB $B'$ for $M'$. 
    
    To prove the lemma, we have to show that there exists a minimum-cost certificate that verifies an MWB $B''$ of $\cM$ with $K + e \subseteq B''$ and $D \cap B'' = \emptyset$. 
    If $e \in B'$, then this directly follows for $Q^*$  and the MWB $B'' = K \cup B'$ of $\cM$. Thus, assume $e \not\in B'$ and let $C$ denote the fundamental circuit of $e$ with respect to~$B'$.
    By \Cref{obs:matroid:cuts}, there exists an $e'\in C-e$ such that $e' \notin \mspan[M']{B-e}$.
    Also, $e' \in B'-e$ because $C'-e \subseteq B'$. By assumption that $e$ has minimum weight in $E(M')\setminus\mspan[M']{B-e}$ and that $B'$ is an MWB of~$M'$, we have $U_e = w_e = w_{e'}$.
    We distinguish the following cases:
    \begin{enumerate}
        \item If $e, e' \in Q^*$, then $Q^*$ also verifies that $B'' = K \cup (B' -e'+e)$ is an MWB by~\Cref{lem:trivial:exchange}. Note that $(B' -e'+e)$ is an MWB for $M'$ and $B''$ is an MWB for $\cM$.
    
        \item If $e \in Q^*$ and $e' \not\in Q^*$, then we must have $w_e \ge U_{e'}$. Otherwise, $Q^*$ would not verify that $e$ has maximum weight in $C$. Since $U_{e'} \ge w_{e'} = w_e = U_e$, we get $U_e = U_{e'}$. 
        By~\Cref{lem:extreme:upperlimit:exchange}, this implies that  $Q = (Q^*  - e + e')$ verifies the MWB $B'' = K \cup  (B' -e'+ e)$.
        Note that $e'$ can be trivial, in which case we can just use $Q = (Q^*  - e)$ instead.

        If $e'$ is trivial, then we clearly have $c(Q) \le c(Q^*)$. Otherwise, i.e., if $e'$ is non-trivial, we have $e' \in E_{w_e}^U$ as we already argued that $U_{e'} = w_{e'} = w_e$. By assumption (iv) of the lemma, this implies $c_e \ge c_{e'}$ and, thus, $c(Q) \le c(Q^*)$.

        \item If $e \not\in Q^*$, then we must have $w_{e'} \le L_e$. Otherwise $Q^*$ cannot verify that $e$ has maximum weight in $C$. However, this implies $w_{e'} = U_e \le L_e$, which can only be the case if $e$ is trivial; a contradiction to the requirements of the lemma.\qedhere
    \end{enumerate}
\end{proof}

The next lemma is a dual of  ~\Cref{lem:matroid:cut:non-trivial} and  can be shown analogously (\Cref{app:identify}). 

\begin{restatable}{lemma}{lemCircuitNontrivial}
    \label{lem:matroid:circuit:non-trivial}
    Let $M':=M[D,K]$ be a compatible minor of $\cM$.
    Let $e$ be a \emph{non-trivial} element such that (i) there exists a circuit $C$ in $M[D,K]$ with $e \in C$, (ii) $w_e = L_e$, (iii) $e$ has maximum weight in $C$ and (iv) $e$ has maximum query cost in $C \cap E_{w_e}^L$. Then $M[D+e,K]$ is a compatible minor of $\cM$.
\end{restatable}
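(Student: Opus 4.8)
The plan is to prove \Cref{lem:matroid:circuit:non-trivial} by mirroring the proof of \Cref{lem:matroid:cut:non-trivial} with the roles of circuits and cocircuits (i.e.\ cuts) interchanged, and with lower limits playing the role of upper limits. Concretely, let $Q^*$ be a minimum-cost certificate for $\cM$ compatible with $M' = M[D,K]$, so $Q^*$ verifies some MWB $B'$ of $M'$; set $B'' $ to be $K$ together with the basis of $M'$ we construct below. We must exhibit a minimum-cost certificate verifying an MWB containing $K$ and avoiding $D + e$. If $e \notin B'$, this is immediate for $Q^*$ and $B'' = K \cup B'$, since $e$ is already excluded. So assume $e \in B'$. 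The circuit $C$ from hypothesis (i) contains $e \in B' \cap C$, so by \Cref{obs:matroid:cuts} there is an $e' \in C - e$ with $e' \notin \mspan[M']{B'-e}$; in particular $e' \notin B'$, because $C - e \not\subseteq B'$ is forced (else $C \subseteq B'$, contradicting independence of $B'$). Since $e$ has maximum weight in $C$ (hypothesis (iii)) and $B'$ is an MWB of $M'$ with $e\in B'$, $e'\notin B'$, and $e' \notin \mspan[M']{B'-e}$, optimality of $B'$ forces $w_{e'} \ge w_e$; combined with $w_e = L_e \le L_{e'} \le w_{e'}$ and the fact that $B'$ is an MWB we get $w_e = w_{e'} = L_e$.

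With $e, e'$ fixed, I would split into the three cases exactly as in \Cref{lem:matroid:cut:non-trivial}. \textbf{Case 1:} if $e, e' \in Q^*$, then since $w_e = w_{e'}$ and $B' - e + e'$ is a basis of $M'$, \Cref{lem:trivial:exchange} shows $Q^*$ also verifies $B'' = K \cup (B' - e + e')$, which is an MWB with $e \notin B''$, and $c(Q^*)$ is of course minimum. \textbf{Case 2:} if $e \notin Q^*$ but $e' \in Q^*$: since $Q^*$ must certify that $e$ has minimum weight in the fundamental circuit (equivalently by \Cref{coro:certificate:characterization:circuits}), we need $L_e(Q^*) \ge U_{e''}(Q^*)$ for the relevant elements; the key point is that $e' \in C_e^{B'}$-type exchange is governed by $L_{e'}(Q^*) = w_{e'} = L_{e'}$, and since $e \notin Q^*$ we have $L_e(Q^*) = L_e = w_e = w_{e'} = L_{e'}$. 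This is precisely the hypothesis of \Cref{lem:extreme:lowerlimit:exchange} (with the roles $e \notin B$, $e' \in B$, $L_e = L_{e'} = w_e = w_{e'}$), so $Q = Q^* - e' + e$ verifies $B'' = K \cup (B' + e' - e)$. Wait --- I must be careful with which of $e,e'$ is in the basis; re-reading, since $e \in B'$ and $e' \notin B'$, I apply the dual lemma with its ``$e$'' instantiated as our $e'$ and its ``$e'$'' as our $e$, noting $e' \notin Q^*$ would be the wrong subcase, so in fact the correct split is: if $e' \notin Q^*$ we use $Q = Q^* - e$ directly (valid when $e'$ trivial) — but $e'$ may be non-trivial, and then we argue via hypothesis (iv): if $e \notin Q^*$ forces $U_{e'} \le \dots$, leading as in case (3) of the cut lemma to $L_{e'} = U_e$... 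The cleanest route is to dualize faithfully: Case 2 is $e \in Q^*$, $e' \notin Q^*$; then $Q^*$ certifying $e'$ has weight $\ge e$ in $C$ forces $L_{e'} \le w_e = L_e$, so $L_e = L_{e'}$, hence $e' \in E_{w_e}^L$, and by hypothesis (iv) $c_e \ge c_{e'}$; \Cref{lem:extreme:lowerlimit:exchange} then gives $Q = Q^* - e + e'$ verifying $B''$, with $c(Q) \le c(Q^*)$. \textbf{Case 3:} $e' \in Q^*$, $e \notin Q^*$ would force (by $Q^*$ certifying $e$ has minimum weight in $C$) $U_e \le w_{e'} = L_e$, so $U_e \le L_e$, meaning $e$ is trivial --- contradicting non-triviality of $e$ in hypothesis. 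Also the subcase where $e'$ is trivial in Case 2 is handled by $Q = Q^* - e$, which trivially has cost $\le c(Q^*)$.

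I expect the main obstacle to be getting the direction of the duality exactly right: one must consistently swap ``circuit $\leftrightarrow$ cocircuit/cut'', ``$U_e \leftrightarrow L_e$'', ``$\sup \leftrightarrow \inf$'', ``maximum weight in $C$ $\leftrightarrow$ minimum weight in $E\setminus\mspan[M']{B-e}$'', and correspondingly invoke \Cref{lem:extreme:lowerlimit:exchange} in place of \Cref{lem:extreme:upperlimit:exchange}, while keeping straight which of $e, e'$ lies in the basis — here $e \in B'$ (opposite to the cut lemma, where the contracted-candidate $e$ was being \emph{added} to a basis it was missing from). A careful restatement of \Cref{obs:matroid:cuts} and \Cref{coro:certificate:characterization:circuits} for the configuration $e \in B' \cap C$, $e' \in C - e$ with $e' \notin \mspan[M']{B'-e}$ resolves the bookkeeping, and the cost comparison in Case 2 goes through verbatim once we have established $e' \in E^L_{w_e}$ via $L_{e'} = w_{e'} = w_e$. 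The remaining verification — that $B' - e + e'$ is a basis of $M'$ and $B'' = K \cup (B' - e + e')$ is an MWB of $\cM$ avoiding $D + e$ — follows from the fundamental-circuit exchange property and $w_e = w_{e'}$, exactly as in the proof of \Cref{lem:matroid:cut:non-trivial}.
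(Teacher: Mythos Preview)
Your approach is exactly the paper's: assume $e\in B'$ for the MWB $B'$ verified by $Q^*$, use \Cref{obs:matroid:cuts} to find $e'\in C-e$ with $e'\notin\mspan[M']{B'-e}$, deduce $w_e=w_{e'}$ from optimality of $B'$, then case-split on membership of $e,e'$ in $Q^*$, invoking \Cref{lem:trivial:exchange}, \Cref{lem:extreme:lowerlimit:exchange}, and hypothesis~(iv) for the cost comparison. After your self-corrections, Cases~1 and~2 line up with the paper.

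A few bookkeeping points to clean up. The claim $L_e\le L_{e'}$ in your setup is unwarranted and unneeded; $w_e=w_{e'}$ follows just from $w_e\ge w_{e'}$ (max in $C$) and $w_{e'}\ge w_e$ (else $B'-e+e'$ beats $B'$). The reason $e'\notin B'$ is that $e'\notin\mspan[M']{B'-e}\supseteq B'-e$, not the argument you give. In Case~2 ($e\in Q^*$, $e'\notin Q^*$) the certificate condition from \Cref{lem:certificate:characterization:cuts} for $e\in B'$ reads $U_e(Q^*)\le L_{e'}(Q^*)$, i.e.\ $w_e\le L_{e'}$, the reverse of what you wrote; combined with $L_{e'}\le w_{e'}=w_e$ this still yields $L_{e'}=w_e=L_e$ as needed. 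Finally, your case split omits $e,e'\notin Q^*$: Case~3 should simply be ``$e\notin Q^*$'', and the contradiction comes from $Q^*$ certifying that $e$ has minimum weight in the \emph{cut} $E(M')\setminus\mspan[M']{B'-e}$ (not ``minimum weight in $C$''), which with $e\notin Q^*$ forces $U_e\le L_{e'}(Q^*)\le w_{e'}=L_e$ regardless of whether $e'\in Q^*$.
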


Iteratively and exhaustively applying the deletion rules in~\Cref{lem:matroid:circuit:non-trivial} and~\Cref{obs:matroid:unique:max}, and the contraction rules in~\Cref{lem:matroid:cut:non-trivial} and~\Cref{obs:matroid:unique:min},  yields $D$ and $K$ such that $M[D,K]$ satisfies the following assumption, enabling further rules depending on
this assumption.

\begin{asm}
    \label{asm:matroids:no:non-trivial}
    We may assume that a compatible minor $M':=M[D,K]$ of $\cM$ does not contain non-trivial elements $e$ with any of the following properties:
    \begin{enumerate}
        \item[(i)] $w_e=L_e$ and $e$ is a maximum weight element in a circuit $C$ of $M'$.
        
        \item[(ii)] $w_e = U_e$ and $e$ is a minimum weight element in $E(M')\setminus \mspan[M']{B -e}$ for a basis $B$ of $M'$ that contains $e$.
    \end{enumerate}
\end{asm}

\subsubsection{Rules for trivial extreme case elements}

We continue by introducing deletion and contraction rules for trivial extreme case elements in instances that satisfy~\Cref{asm:matroids:no:non-trivial}. We defer the following two proofs to~\Cref{app:identify}, but remark that they follow a similar proof strategy as the proof of~\Cref{lem:matroid:cut:non-trivial}.

\begin{restatable}{lemma}{lemCutTrivial}
    \label{lem:matroid:cut:trivial}
    Let $M':=M[D,K]$ be a compatible minor of $\cM$ and assume that~\Cref{asm:matroids:no:non-trivial} is satisfied. Let $e$ be a \emph{trivial} element such that (i) there exists basis $B$ of $M'$ with 
    $e\in B$ and (ii) $e$ has minimum weight among elements not in $\mspan[M']{B-e}$. 
    Then $M[D,K+e]$ is a compatible minor of $\cM$.
\end{restatable}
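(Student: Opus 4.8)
The goal is to mimic the structure of the proof of \Cref{lem:matroid:cut:non-trivial}, but now exploiting \Cref{asm:matroids:no:non-trivial} to rule out the problematic sub-cases. Let $Q^*$ be a minimum-cost certificate for $\cM$ compatible with $M'$; it verifies some MWB $B'$ of $M'$. I need to produce a minimum-cost certificate verifying an MWB $B''$ of $\cM$ with $K+e \subseteq B''$ and $D \cap B'' = \emptyset$. If $e \in B'$, take $B'' = K \cup B'$ and we are done with $Q^*$ itself. So assume $e \notin B'$, let $C$ be the fundamental circuit of $e$ with respect to $B'$ in $M'$, and by \Cref{obs:matroid:cuts} pick $e' \in C - e$ with $e' \notin \mspan[M']{B-e}$ (here $B$ is the basis from hypothesis (i)); as in the non-trivial proof, $e' \in B' - e$, and minimality of $w_e$ among elements outside $\mspan[M']{B-e}$ together with $B'$ being an MWB forces $w_e = w_{e'}$ (and $w_e = w_{e'} = L_e$ is not needed here; what we know is $w_e$ is the trivial value).

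\textbf{Case analysis on membership of $e,e'$ in $Q^*$.} Since $e$ is trivial, querying $e$ is ``free'' in the sense that $L_e(Q) = U_e(Q) = w_e$ whether or not $e \in Q^*$, so WLOG I may treat $e$ as behaving like an element of the certificate. The natural split is on whether $e' \in Q^*$. (1) If $e' \in Q^*$ (or $e'$ is trivial), then both $e$ and $e'$ are ``either trivial or in $Q^*$'' and $w_e = w_{e'}$, so by \Cref{lem:trivial:exchange} the same certificate $Q^*$ also verifies $B'' = K \cup (B' - e' + e)$, which is an MWB of $\cM$ of the required form, at no extra cost. (2) If $e'$ is non-trivial and $e' \notin Q^*$, then I argue this case cannot arise under \Cref{asm:matroids:no:non-trivial}, or else I exchange cheaply. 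Concretely: since $Q^*$ verifies that $e$ has maximum weight in its fundamental circuit $C$ (via \Cref{coro:certificate:characterization:circuits}), we need $w_e = L_e(Q^*) \ge U_{e'}(Q^*) = U_{e'}$, hence $U_{e'} \le w_e = w_{e'} \le U_{e'}$, so $U_{e'} = w_{e'}$, i.e., $e'$ is a non-trivial element with $w_{e'} = U_{e'}$. But $e' \notin \mspan[M']{B-e}$ and $e \in B$; swapping, $e'$ lies outside $\mspan[M']{B - e}$ which (since $B-e+e'$ is a basis) means $e'$ is a minimum-weight element — it has weight $w_{e'} = w_e$, equal to the minimum — in $E(M') \setminus \mspan[M']{B'' - e'}$ for the basis $B'' := B - e + e'$ containing $e'$. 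That is exactly the situation forbidden by \Cref{asm:matroids:no:non-trivial}(ii), a contradiction. Hence case (2) is vacuous, and in all remaining cases we have exhibited a minimum-cost certificate verifying an MWB of the required form, so $M[D,K+e]$ is a compatible minor.

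\textbf{Main obstacle.} The delicate point is Case (2): I must be careful that the element playing the role of ``$e$'' in \Cref{asm:matroids:no:non-trivial}(ii) is genuinely $e'$, with a genuinely different basis ($B - e + e'$) witnessing the forbidden configuration, and that ``minimum weight in $E(M') \setminus \mspan[M']{\cdot}$'' is inherited correctly from hypothesis (ii) of the lemma — hypothesis (ii) says $e$ has minimum weight among elements not in $\mspan[M']{B-e}$, and since $w_{e'} = w_e$ and $e' \notin \mspan[M']{B-e} = \mspan[M']{(B-e+e')-e'}$, the element $e'$ attains that same minimum in the relevant set for the basis $B-e+e'$. One also has to confirm $B - e + e'$ is indeed a basis of $M'$ (it is, since $e' \notin \mspan[M']{B-e}$), and that throughout we only invoke \Cref{asm:matroids:no:non-trivial} on $M'$ itself, which is legitimate by hypothesis. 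Everything else is a direct transcription of the bookkeeping already done in \Cref{lem:matroid:cut:non-trivial}, so I expect the write-up to be short.
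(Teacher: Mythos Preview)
Your proposal is correct and follows essentially the same approach as the paper's proof: start from a minimum-cost certificate $Q^*$ verifying some MWB $B'$ of $M'$, reduce to $e\notin B'$, pick $e'\in C-e$ with $e'\notin\mspan[M']{B-e}$, deduce $w_e=w_{e'}$, and then use the triviality of $e$ together with \Cref{asm:matroids:no:non-trivial}(ii) to force $e'$ to be trivial (or in $Q^*$), after which \Cref{lem:trivial:exchange} gives the swapped basis $K\cup(B'-e'+e)$ certified by $Q^*$ itself. The only cosmetic difference is that the paper phrases Case~2 as ``$e'\notin Q^*$ implies $e'$ trivial'' and then applies \Cref{lem:trivial:exchange}, whereas you phrase it as ``$e'\notin Q^*$ and $e'$ non-trivial is impossible''; your explicit construction of the witness basis $B-e+e'$ for invoking \Cref{asm:matroids:no:non-trivial}(ii) is in fact a bit cleaner than the paper's somewhat elliptical appeal to that assumption.
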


\begin{restatable}{lemma}{lemCircuitTrivial}
    \label{lem:matroid:circuit:trivial}
    Let $M[D,K]$ be a compatible minor of $\cM$ and assume that~\Cref{asm:matroids:no:non-trivial} is satisfied.
    Let $e$ be a \emph{trivial} element such that (i) there exists a circuit $C$ in $M[D,K]$ with $e \in C$ and (ii) $e$ has maximum weight in $C$. Then $M[D+e,K]$ is a compatible minor of $\cM$.
\end{restatable}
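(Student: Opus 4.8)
The plan is to mimic the structure of the proof of \Cref{lem:matroid:cut:non-trivial}, but now working with a circuit (a ``max in a circuit'' situation) rather than a cut, and exploiting \Cref{asm:matroids:no:non-trivial} to rule out the awkward cases. Let $Q^*$ be a minimum-cost certificate for $\cM$ compatible with $M'=M[D,K]$, and let $B'$ be an MWB of $M'$ verified by $Q^*$ (extended to the MWB $K\cup B'$ of $\cM$). If $e\notin B'$, then the MWB $B''=K\cup B'$ of $\cM$ already satisfies $D+e\cap B''=\emptyset$ (together with $K\subseteq B''$), so $M[D+e,K]$ is compatible via $Q^*$ and we are done. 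Hence assume $e\in B'$; since $e$ lies in the circuit $C$ of $M'$ and $e\in B'\cap C$, \Cref{obs:matroid:cuts} gives an element $e'\in C-e$ with $e'\notin\mspan[M']{B'-e}$, and $e'\notin B'$ because the remaining elements of $C-e$ would otherwise all lie in $B'-e$. Since $e$ has maximum weight in $C$ and $B'$ is an MWB of $M'$ containing $e$, exchanging $e$ for $e'$ yields another basis of the same weight, so $w_e=w_{e'}$; moreover $e$ being trivial means $w_e=L_e=U_e$.

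The heart of the argument is then a case split on membership of $e'$ in $Q^*$, using the characterization in \Cref{lem:certificate:characterization:cuts} / \Cref{coro:certificate:characterization:circuits}. First, if $e'\in Q^*$ (equivalently, $e'$ is trivial or in $Q^*$, and $e$ is trivial), then \Cref{lem:trivial:exchange} applied to the exchange pair $e\in B'$, $e'\notin B'$ with $w_e=w_{e'}$ shows $Q^*$ also verifies $B'-e+e'$, so $B''=K\cup(B'-e+e')$ is an MWB of $\cM$ with $K\subseteq B''$ and $(D+e)\cap B''=\emptyset$, verified by the same cost-$c^*$ certificate. Second, suppose $e'\notin Q^*$ and $e'$ is non-trivial; since $Q^*$ verifies that $e$ is a maximum-weight element of the circuit $C$, \Cref{coro:certificate:characterization:circuits} forces $U_{e'}(Q^*)\le L_e(Q^*)=w_e$ (as $e$ is trivial), hence $U_{e'}\le w_e=w_{e'}\le U_{e'}$, giving $U_{e'}=w_{e'}=L_e$; but then $e'$ is a non-trivial element with $w_{e'}=U_{e'}$ that is a minimum-weight element of $E(M')\setminus\mspan[M']{B'-e'}$ — wait, more precisely one checks $e'$ realizes case~(ii) of \Cref{asm:matroids:no:non-trivial} with respect to a suitable basis (namely a basis containing $e'$ obtained via the exchange), contradicting the assumption. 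Third, if $e'\notin Q^*$ and $e'$ is trivial, then $w_{e'}=U_{e'}$ trivially, and the same inequality $U_{e'}\le w_e$ combined with $w_{e'}=w_e$ shows nothing is violated directly — here instead we use \Cref{lem:extreme:upperlimit:exchange} (or simply that $e'$ trivial lets us drop it): since $e'\notin Q^*$ is trivial we can set $Q=Q^*-e+e'=Q^*-e$ if $e\in Q^*$, or $Q=Q^*$ if $e\notin Q^*$, and check via \Cref{lem:trivial:exchange} that this verifies $B'-e+e'$, with $c(Q)\le c(Q^*)=c^*$.

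The main obstacle I anticipate is the second case: correctly identifying that a non-trivial $e'\notin Q^*$ with the forced equality $U_{e'}=w_{e'}$ actually triggers a forbidden configuration under \Cref{asm:matroids:no:non-trivial}(ii). This requires producing the right basis witnessing that $e'$ is a minimum-weight element outside $\mspan[M']{\cdot}$ of that basis — one takes $B'-e+e'$ (a basis of $M'$ of the same weight, hence an MWB containing $e'$), and argues that every element outside $\mspan[M']{(B'-e+e')-e'}=\mspan[M']{B'-e}$ has weight at least $w_{e'}$, using that $B'$ is an MWB of $M'$ and \Cref{prop:min:cut}. Once this contradiction is nailed down, the remaining cases reduce to routine applications of \Cref{lem:trivial:exchange} and the cost bookkeeping is immediate because any element we remove from $Q^*$ only decreases cost and any element we add ($e'$) is trivial and hence need not be queried, so in fact we may always arrange $Q\subseteq Q^*$.
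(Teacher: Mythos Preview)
Your overall structure is right, but the crucial second case contains a wrong inequality that invalidates the argument. You write that ``$Q^*$ verifies that $e$ is a maximum-weight element of the circuit $C$'' and deduce $U_{e'}(Q^*)\le L_e(Q^*)$. This is backwards: $Q^*$ verifies $B'$, and here $e\in B'$ while $e'\notin B'$. The characterization in \Cref{lem:certificate:characterization:cuts} (or equivalently \Cref{coro:certificate:characterization:circuits} applied to the fundamental circuit of $e'$) gives $U_e(Q^*)\le L_{e'}(Q^*)$, not the reverse. With $e$ trivial and $e'\notin Q^*$ this reads $w_e\le L_{e'}$, so $w_{e'}=w_e\le L_{e'}$ and hence $w_{e'}=L_{e'}$ --- \emph{not} $w_{e'}=U_{e'}$. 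Your entire plan for invoking \Cref{asm:matroids:no:non-trivial}(ii) via the basis $B'-e+e'$ rests on a premise that is simply not established; in fact $U_{e'}$ could be strictly larger than $w_{e'}$.

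The paper's proof proceeds exactly as yours does up to this point, but then draws the correct conclusion $w_{e'}=L_{e'}$ and invokes part~(i) of \Cref{asm:matroids:no:non-trivial} instead: since $w_{e'}=w_e$ and $e$ has maximum weight in $C$, the element $e'$ is a non-trivial extreme-case element with $w_{e'}=L_{e'}$ that is a maximum-weight element of the circuit $C$ in $M'$, contradicting the assumption and forcing $e'$ to be trivial. Once $e'$ is trivial, \Cref{lem:trivial:exchange} applies directly (both $e$ and $e'$ are trivial with equal weight), so $Q^*$ itself verifies $B'-e+e'$ --- no need for your third case or any modification of $Q^*$. In short: you swapped the roles of the basis and non-basis element in the certificate inequality, which sent you to the wrong half of \Cref{asm:matroids:no:non-trivial}; the fix is immediate and makes the argument shorter than what you anticipated.
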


Given a compatible minor $M[D,K]$ that satisfies~\Cref{asm:matroids:no:non-trivial},  we check for an element~$e$ that meets the requirements of 
\Cref{lem:matroid:cut:trivial} or~\Cref{lem:matroid:circuit:trivial}. If such an element exists, we apply the corresponding rule to extend $D$ or $K$, 
while maintaining our invariant that $M[D,K]$ is a compatible minor. We then return to applying the rules of the previous sections to restore~\Cref{asm:matroids:no:non-trivial}, 
and repeat this argument until no such element $e$ remains. 
The resulting compatible minor $M[D,K]$ satisfies the following properties:
\begin{enumerate}
    \item $M[D,K]$ does not contain elements $e$ with $w_e=L_e$ such that $e$ is a maximum-weight element in some circuit of $M[D,K]$.
    \item $M[D,K]=:M'$ does not contain elements $e$ with $w_e = U_e$ such that there is a basis $B$ of $M'$ with $e \in B$ and $e$ has minimum weight among elements not in $\mspan[M']{B  - e}$. 
    \item There is a minimum-cost certificate that is compatible with $D$ and $K$.
\end{enumerate}
In any basis $B$ of $M'$ with these properties, no element $e\in B$ with $w_e = U_e$ has minimum-weight in $E(M') \setminus \mspan[M']{B  - e}$. Hence,  such elements $e\in B$ with $w_e = U_e$ cannot be part of any MWB $B$ of $M'$ as we can swap $e$ with an element of $E(M') \setminus \mspan[M']{B  - e}$ that has strictly smaller weight. Thus, we can delete all these elements $e\in B$ with $w_e = U_e$ and add them to $D$ while maintaining the property that there is a minimum-cost certificate that is compatible with $D$ and $K$. 
Note that this corresponds to applying~\Cref{obs:matroid:unique:max}. 
Similarly, we can argue that all remaining elements $e$ with $w_e = L_e$ can be contracted by applying~\Cref{obs:matroid:unique:min}.
This way we compute sets $D$ and $K$ that satisfy the following~assumption.

\begin{asm}
    \label{asm:matroids:no-extrem-elements}
    We may assume that a compatible minor $M[D,K]$ of a weighted uncertainty matroid $\cM$ does not contain elements $e$ with $w_e = U_e$ or $w_e = L_e$. 
\end{asm}

The rules for extreme case elements imply the following (cf.~\Cref{app:identify} for a proof). 

\begin{restatable}{coro}{coroExtremeCaseInstances}
    \label{coro:extremecase:instances}
     Let $M[D,K]$ be a compatible minor of $\cM$ with $A_e = \{L_e,U_e\}$ for all $e \in E(M)$ that satisfies~\Cref{asm:matroids:no:non-trivial}, then $K$ is an MWB that can be verified with a minimum-cost certificate. If in addition the query costs are uniform and all $A_e$ are uniform and non-trivial, then every MWB $B$ of $M$ has the same verification cost.
\end{restatable}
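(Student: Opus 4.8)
The plan is to prove the two assertions separately, reusing the contraction/deletion machinery of \Cref{sec:matroid:rules}. For the first assertion I would show that the compatible minor in question has empty ground set, i.e.\ $E(M[D,K]) = \emptyset$. As explained in the paragraphs preceding the corollary, exhaustively applying the rules of this section --- \Cref{obs:matroid:unique:max}, \Cref{obs:matroid:unique:min}, \Cref{lem:matroid:cut:non-trivial}, \Cref{lem:matroid:circuit:non-trivial}, together with the trivial-element rules \Cref{lem:matroid:cut:trivial} and \Cref{lem:matroid:circuit:trivial} --- yields a compatible minor that satisfies not merely \Cref{asm:matroids:no:non-trivial} but the stronger \Cref{asm:matroids:no-extrem-elements}: it contains no element $e$ with $w_e = L_e$ or $w_e = U_e$. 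Now if $A_e = \{L_e, U_e\}$ for every $e \in E(M)$, then $w_e \in A_e = \{L_e, U_e\}$, so \emph{every} element satisfies $w_e = L_e$ or $w_e = U_e$; combined with \Cref{asm:matroids:no-extrem-elements} this forces $E(M[D,K]) = \emptyset$, hence $D \cup K = E$. Since deletions preserve the rank and $K$ is independent, $r(M[D,K]) = r(M) - |K| = 0$, so $K$ is a basis of $M$. Compatibility gives a query set of cost $c^*$ --- the minimum certificate cost of $\cM$ --- verifying an MWB $B$ with $K \subseteq B$ and $B \cap D = \emptyset$; as $D \cup K = E$ this forces $B = K$, so $K$ is an MWB verified by a minimum-cost certificate.

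For the second assertion, assume additionally that query costs are uniform and $A_e = \{L, U\}$ with $L < U$ for all $e$. By the first part some MWB has minimum verification cost $c^*$, and $c^*$ is a lower bound on the minimum verification cost of every MWB (every certificate has cost at least $c^*$), so it suffices to show this minimum is the \emph{same} for all MWBs. I would prove that it is invariant under a single weight-preserving basis exchange and then invoke the standard fact that any two MWBs of a weighted matroid are connected by a chain of such exchanges, each replacing a basis element by a non-basis element of equal weight in its fundamental circuit (the fundamental-circuit/fundamental-cocircuit intersection argument; see e.g.~\cite{lawler2001combinatorial}). So fix an MWB $B$, let $e \in B$ and $e' \notin B$ with $e \in C_{e'}^B$ and $w_e = w_{e'}$, let $B' = B - e + e'$ (again an MWB), and let $Q$ be a minimum-cost certificate verifying $B$.

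If $w_e = w_{e'} = U$, then $U_e = U_{e'} = w_e = w_{e'} = U$ while $L_{e'} = L < U$, and $e' \in Q$ must hold: otherwise $L_{e'}(Q) = L$, yet \Cref{coro:certificate:characterization:circuits} applied to $f = e'$ and $e \in C_{e'}^B - e'$ gives $U_e(Q) \le L_{e'}(Q) = L$, contradicting $U_e(Q) = U$ (which holds whether or not $e \in Q$). Hence \Cref{lem:extreme:upperlimit:exchange} applies and $Q' = Q - e' + e$ verifies $B'$, with $c(Q') = c(Q)$ since $Q'$ and $Q$ have equal cardinality and costs are uniform. If $w_e = w_{e'} = L$, then $L_e = L_{e'} = w_e = w_{e'} = L$ while $U_e = U > L$; since $B' = B - e + e'$ is a basis we have $e' \notin \mspan[M]{B - e}$, so \Cref{lem:certificate:characterization:cuts} gives $U_e(Q) \le L_{e'}(Q) = L$, which, as $U_e(Q) = U$ whenever $e \notin Q$, forces $e \in Q$; then \Cref{lem:extreme:lowerlimit:exchange} applies and $Q' = Q - e + e'$ verifies $B'$ with $c(Q') = c(Q)$. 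In both cases the minimum verification cost of $B'$ is at most that of $B$; running the identical argument with $B$ and $B'$ interchanged (using that $C_{e'}^B$ contains $e'$, so the fundamental circuit of $e$ with respect to $B'$ is exactly $C_{e'}^B$) yields the reverse inequality, hence equality. Chaining over an exchange sequence shows that all MWBs share the minimum verification cost $c^*$.

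The step I expect to be the main obstacle is the first assertion: rigorously justifying that exhausting the rules really drives the minor all the way to \Cref{asm:matroids:no-extrem-elements} when $A_e = \{L_e, U_e\}$, i.e.\ that no element with $w_e \in \{L_e, U_e\}$ can survive once the non-trivial rules, the trivial rules, and the two observations have all been applied; for general query costs this additionally requires verifying that the cost side-conditions~(iv) in \Cref{lem:matroid:cut:non-trivial} and \Cref{lem:matroid:circuit:non-trivial} never stall the process. In the second assertion the delicate point is showing that the exchanged element must itself lie in the minimum-cost certificate $Q$ --- this is precisely what upgrades ``the cost does not increase'' to ``the cost is preserved'' --- and it uses both certificate characterizations (\Cref{lem:certificate:characterization:cuts} and \Cref{coro:certificate:characterization:circuits}) in an essential way.
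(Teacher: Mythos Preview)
For the first assertion your approach coincides with the paper's: both argue that $E(M[D,K])=\emptyset$ by observing that every element with $A_e=\{L_e,U_e\}$ has $w_e\in\{L_e,U_e\}$, so a minor containing no extreme-case elements must be empty, whence $K$ is the basis. (The paper's proof in fact invokes \Cref{asm:matroids:no-extrem-elements} directly at this step rather than \Cref{asm:matroids:no:non-trivial}, so your worry about bridging the two assumptions reflects a genuine looseness in the statement; the paper simply uses the stronger one.) Your concern that condition~(iv) of \Cref{lem:matroid:cut:non-trivial} and \Cref{lem:matroid:circuit:non-trivial} might stall the process is unfounded: (iv) only selects \emph{which} element among those satisfying (i)--(iii) gets contracted or deleted, it never blocks applicability.

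For the second assertion the two arguments genuinely diverge. The paper re-runs the compatible-minor machinery from scratch for an \emph{arbitrary} MWB $B$: starting from $K=D=\emptyset$, it uses \Cref{lem:matroid:cut:non-trivial} (with uniform costs, so (iv) is automatic) to contract every $e\in B$ with $w_e=U$, \Cref{lem:matroid:circuit:non-trivial} to delete every $e\notin B$ with $w_e=L$, and then observes that $B\setminus K$ is the unique MWB of the remaining minor, so \Cref{obs:matroid:unique:max} and \Cref{obs:matroid:unique:min} finish; since this builds $K=B$ as a compatible minor for every MWB $B$, each is verifiable at cost $c^*$. Your route instead shows that minimum verification cost is preserved under a single weight-preserving exchange by forcing the exchanged element into $Q$ and then applying \Cref{lem:extreme:upperlimit:exchange} or \Cref{lem:extreme:lowerlimit:exchange} directly to a minimum certificate, chaining via connectivity of the MWB exchange graph. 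This is a correct and more elementary argument: it uses only the certificate exchange properties of \Cref{sec:exchange:properties} and the characterizations \Cref{lem:certificate:characterization:cuts}, \Cref{coro:certificate:characterization:circuits}, bypassing the contraction and deletion rules entirely. The paper's approach, by contrast, exhibits the second assertion as a consequence of the algorithmic framework itself, showing the rules are ``complete'' for this special class. One small imprecision in your write-up: $Q'=Q-e'+e$ need not have the \emph{same} cardinality as $Q$ (it may be smaller if the added element was already present), but you only need $|Q'|\le|Q|$, and you obtain the reverse inequality symmetrically, so the conclusion stands.
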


\subsubsection{Rules for instances without extreme case elements}

Instances $M':=M[D,K]$ that satisfy~\Cref{asm:matroids:no-extrem-elements} essentially behave like 
instances 
with 
open intervals, since all elements 
satisfy $w_e \in (L_e,U_e)$ (although we still might have $A_e \not= (L_e,U_e)$). 
For the special case of graphic matroids with open intervals, the results 
in~\cite{ErlebachLMS22,erlebach_verification} imply that \emph{all} MWB can be verified with minimum-cost certificates. 
In~\Cref{app:identify}, we generalize these insights to arbitrary matroids and 
prove the following lemma.

\begin{restatable}{lemma}{lemUniqueMWB}
    \label{lem:unique:MWB}
    Let $M':=M[D,K]$ be a compatible minor of  $\cM$ such that~\Cref{asm:matroids:no-extrem-elements} is satisfied, and let $B$ be any MWB of $M'$. Then, $M[D',K']$ with $K' = K \cup B$ and $D' = D \cup (E(M')\setminus B)$ is a compatible minor of $\cM$.
\end{restatable}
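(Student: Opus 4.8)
The plan is to show that the \emph{same} minimum-cost certificate that witnesses compatibility of $M'=M[D,K]$ already verifies $K\cup B$. This is enough: since the deletions under consideration do not decrease rank, $K\cup B$ is a basis of $M$; writing $B^{\dagger}$ for an MWB of $\cM$ with $K\subseteq B^{\dagger}$ and $D\cap B^{\dagger}=\emptyset$ witnessing compatibility of $M'$, the set $B^{\dagger}\setminus K$ is a basis of $M'$, so $w(B)\le w(B^{\dagger}\setminus K)$ and hence $w(K\cup B)=w(K)+w(B)\le w(K)+w(B^{\dagger}\setminus K)=w(B^{\dagger})$, forcing $K\cup B$ to be an MWB of $\cM$; with $K'=K\cup B$ and $D'=E\setminus K'$, any minimum-cost certificate verifying $K\cup B$ then witnesses that $M[D',K']$ is a compatible minor. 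So fix a minimum-cost certificate $Q^{*}$ of $\cM$ (of cost $c^{*}$) verifying $B^{\dagger}$, and set $B':=B^{\dagger}\setminus K$, an MWB of $M'$.

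First I would build a weight-preserving exchange sequence from $B'$ to $B$ that stays within the MWBs of $M'$. By the symmetric basis-exchange property of matroids~\cite{Schrijver2003book}, for every $e\in B'\setminus B$ there is an $f\in B\setminus B'$ with both $B'-e+f$ and $B-f+e$ bases; since $B'$ and $B$ are both MWBs this forces $w_{f}\ge w_{e}$ and $w_{e}\ge w_{f}$, hence $w_{e}=w_{f}$, so $B'-e+f$ is again an MWB of $M'$ and $e\in C_{f}^{B'}$. Iterating (each step shrinks $|B'\setminus B|$) yields MWBs $B'=B_{0},B_{1},\dots,B_{m}=B$ of $M'$ with $B_{i+1}=B_{i}-e_{i}+f_{i}$, where $e_{i}\in B_{i}$, $f_{i}\notin B_{i}$, $w_{e_{i}}=w_{f_{i}}$, and $e_{i}\in C_{f_{i}}^{B_{i}}$. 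Adjoining $K$, each $K\cup B_{i}$ is an MWB of $\cM$ (by the weight computation of the first paragraph), $K\cup B_{i+1}=(K\cup B_{i})-e_{i}+f_{i}$, and $e_{i}\in C_{f_{i}}^{K\cup B_{i}}$ in $M$, since a set $X-e+f$ obtained from a basis $X$ by one swap is a basis exactly when $e\in C_{f}^{X}$, and $B_{i}-e_{i}+f_{i}$ is a basis of $M'$ iff $K\cup B_{i}-e_{i}+f_{i}$ is a basis of $M$.

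Next I would prove by induction on $i$ that $Q^{*}$ verifies $K\cup B_{i}$. The case $i=0$ is immediate since $K\cup B_{0}=B^{\dagger}$. For the step, assume $Q^{*}$ verifies $K\cup B_{i}$. By~\Cref{coro:certificate:characterization:circuits} applied to the fundamental circuit $C_{f_{i}}^{K\cup B_{i}}$ and the element $e_{i}\in C_{f_{i}}^{K\cup B_{i}}-f_{i}$, we have $U_{e_{i}}(Q^{*})\le L_{f_{i}}(Q^{*})$. Since $e_{i},f_{i}\in E(M')$ and $M'$ satisfies~\Cref{asm:matroids:no-extrem-elements}, we have $w_{e_{i}}<U_{e_{i}}$ and $L_{f_{i}}<w_{f_{i}}$. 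If $e_{i}\notin Q^{*}$ then $U_{e_{i}}(Q^{*})=U_{e_{i}}>w_{e_{i}}=w_{f_{i}}\ge L_{f_{i}}(Q^{*})$, a contradiction, so $e_{i}\in Q^{*}$; symmetrically, if $f_{i}\notin Q^{*}$ then $L_{f_{i}}(Q^{*})=L_{f_{i}}<w_{f_{i}}=w_{e_{i}}\le U_{e_{i}}(Q^{*})$, again a contradiction, so $f_{i}\in Q^{*}$. Now~\Cref{lem:trivial:exchange}, applied with the MWB $K\cup B_{i}$ of $\cM$, the elements $e_{i}\in K\cup B_{i}$ and $f_{i}\notin K\cup B_{i}$ with $w_{e_{i}}=w_{f_{i}}$, the independent set $(K\cup B_{i})-e_{i}+f_{i}=K\cup B_{i+1}$, and the fact that $e_{i},f_{i}\in Q^{*}$, shows that $Q^{*}$ verifies $K\cup B_{i+1}$.

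Taking $i=m$ yields that $Q^{*}$, a certificate of minimum cost $c^{*}$, verifies $K\cup B$, which by the first paragraph proves that $M[D',K']$ is a compatible minor. The conceptual heart is the observation that under~\Cref{asm:matroids:no-extrem-elements} any weight-tie across a fundamental circuit forces \emph{both} of its endpoints into every certificate, which is precisely what allows the unchanged certificate $Q^{*}$ to survive each exchange via~\Cref{lem:trivial:exchange}. I expect the main care to be needed in the bookkeeping: that adjoining $K$ turns an MWB of $M'$ into an MWB of $\cM$ (needed so that~\Cref{lem:trivial:exchange} applies at every step), and that fundamental circuits behave consistently under contraction of $K$; both are routine once one uses that the considered deletions preserve rank and the equivalence ``$X-e+f$ is a basis $\iff e\in C_{f}^{X}$''.
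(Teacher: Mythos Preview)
Your argument is correct and takes a genuinely different, more direct route than the paper.

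The paper proves~\Cref{lem:unique:MWB} by developing extra machinery: it introduces the \emph{lower limit basis}, proves a ``mandatory element'' lemma (\Cref{lem:matroid:mandatory}) identifying an element that lies in every certificate, uses this to iteratively trivialize and contract/delete elements via~\Cref{lem:matroid:cut:trivial} and~\Cref{lem:matroid:circuit:trivial}, and finally shows (\Cref{lem:unique}) that once no mandatory element remains the MWB of the minor is unique. Your proof bypasses all of this. You connect the given MWB $B$ to the certificate-verified MWB $B'=B^{\dagger}\setminus K$ by a chain of single equal-weight exchanges (via symmetric basis exchange), and then make the key observation that under~\Cref{asm:matroids:no-extrem-elements} the inequality $U_{e_i}(Q^{*})\le L_{f_i}(Q^{*})$ forces \emph{both} $e_i$ and $f_i$ into $Q^{*}$, so~\Cref{lem:trivial:exchange} propagates the \emph{same} certificate $Q^{*}$ along the whole chain. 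This is shorter, uses only the exchange lemmas already set up in~\Cref{sec: prelim}, and in fact yields the slightly sharper conclusion that one fixed minimum-cost certificate verifies $K\cup B$ for every MWB $B$ of $M'$. The paper's longer route, on the other hand, produces the structural~\Cref{lem:matroid:mandatory} as a byproduct, which has independent interest (it generalizes the open-interval mandatory-edge results of~\cite{ErlebachLMS22,erlebach_verification} to matroids).
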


Once we create a compatible minor $M':=M[D,K]$ that satisfies~\Cref{asm:matroids:no-extrem-elements} by applying the contraction and deletion rules of the previous sections,~\Cref{lem:unique:MWB} allows us to just fix any MWB $B$ of $M'$, contract (delete) the elements of $B$ (of $E(M')\setminus B$) and add them to $K$ (to $D$). Afterwards, $K$ is a basis that can be verified with a minimum-cost certificate.

\subsection{Polynomial-time Algorithm}
\label{sec:matroid:algorithm}

The results of \Cref{{sec:matroid:rules}} imply a sequence of contraction and deletion rules applied to an uncertainty matroid, leading to a compatible minor that has a unique MWB with a minimum-cost certificate. ~\Cref{alg:basis} exhaustively applies these rules in the right order to compute this MWB, and runs
 in polynomial time (cf.~\Cref{app:identify}). This proves~\Cref{thm:basis-computation}.

\begin{algorithm}[htb]
\caption{Computing an MWB $B$ that can be verified with a minimum-cost certificate}
\begin{algorithmic}[1]
    \State {\bf Input:} A weighted uncertainty matroid $\cM=(E,\cI,A,w)$ 
    \State $K, D \gets \emptyset$
    \Repeat
        \State $M' \gets M[D,K]$;
        \State Select the element
        $e \in E(M')$ that satisfies the first condition listed, in the following order, and add to $D$ or $K$ as specified:
        \IIf{\( e \) satisfies \Cref{obs:matroid:unique:max}}{ add \( e \) to \( D \); }
        \IIf{\( e \) satisfies \Cref{obs:matroid:unique:min}}{ add \( e \) to \( K \); }
        \IIf{\( e \) satisfies \Cref{lem:matroid:cut:non-trivial}}{ add \( e \) to \( K \); }
        \IIf{\( e \) satisfies \Cref{lem:matroid:circuit:non-trivial}}{ add \( e \) to \( D \); }
        \IIf{\( e \) satisfies \Cref{lem:matroid:cut:trivial}}{ add \( e \) to \( K \); }
        \IIf{\( e \) satisfies \Cref{lem:matroid:circuit:trivial}}{ add \( e \) to \( D \); }
    \Until{none of the conditions in lines 6--11 apply} 
    
    \State Compute an MWB $B$ of $M'=M[D,K]$. Add $B$ to $K$ and $E(M')\setminus B$ to D;
    \State \Return $K$;
\end{algorithmic}
\label{alg:basis}
\end{algorithm}

\section{Computing a minimum-cost certificate to verify a given 
basis} \label{sec: construct cert}
In this section, we provide an algorithm for computing a minimum-cost certificate that verifies a given MWB.\ In particular, the algorithm can be applied to the MWB produced by~\Cref{alg:basis}, thereby yielding a minimum-cost certificate for $\cM$.
Lemma \ref{lemma: cert contains for Le Ue} is a key result that dictates specifically what elements must be in $Q$ in order for it to adhere to Corollary \ref{coro:certificate:characterization:circuits}.

\begin{restatable}{lemma}{lemmaInCert}
    \label{lemma: cert contains for Le Ue}
    Let $B$ be an MWB of a weighted uncertainty matroid $\mathcal{M} = (E, \mathcal{I}, A,w)$. For a non-basis element $e$ of $B$, let $C_e$ be the fundamental circuit of $e$ with respect to $B$, let $F_e$ be the set of elements $f \in C_e - e$ with $U_f > L_e$, and let $\hat{F_e}$ be the set of elements $f \in C_e - e$ with $U_f > w_e$. 
    A subset $Q \subseteq E$ is a certificate of $B$ if and only if for each non-basis element $e$ of $B$, the following holds:
    \begin{enumerate}
        \item If $w_e \geq U_f$ for all $f \in C_e-e$, and there exists $f' \in C_e - e$ such that $w_{f'} > L_e$, then $e \in Q$.
        \item If $w_e \geq U_f$  for all $f \in C_e-e$, and $w_f \leq L_e$ for all $f \in C_e-e$, then $e \in Q$ or $F_e\subseteq Q$.
        \item If there exists $f \in C_e - e$ such that $w_e < U_f$, and there exists $f' \in C_e-e$ (could have $f' = f$) such that $w_{f'} > L_e$, then $\hat{F_e} + e\subseteq Q$.
        \item If there exists $f \in C_e-e$ such that $w_e < U_{f}$, and for all $f' \in C_e -e$ it holds that $w_{f'} \leq L_e$, then $F_e\subseteq Q$ or $\hat{F_e} + e\subseteq Q$. 
    \end{enumerate}
\end{restatable}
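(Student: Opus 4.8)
The plan is to read the statement off \Cref{coro:certificate:characterization:circuits}. That corollary says $Q$ verifies $B$ if and only if, for every non-basis element $e$ and every $f\in C_e-e$, we have $U_f(Q)\le L_e(Q)$; call this local condition $\mathsf{Loc}(e,Q)$. The four items of the lemma are stated per non-basis element $e$, and their two pairs of hypotheses---``$w_e\ge U_f$ for all $f\in C_e-e$'' versus ``$w_e<U_f$ for some $f$'', crossed with ``$w_{f'}>L_e$ for some $f'$'' versus ``$w_{f'}\le L_e$ for all $f'$''---partition all possibilities. So it suffices to fix a non-basis $e$ and show that $\mathsf{Loc}(e,Q)$ is equivalent to whichever of items 1--4 has a hypothesis that $e$ satisfies. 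First I would record three elementary facts for a fixed non-basis $e$: (a) since $B$ is an MWB, $w_f\le w_e$ for every $f\in C_e-e$, as otherwise $B-f+e$ is a basis of strictly smaller weight by the exchange property for fundamental circuits; (b) $\hat F_e\subseteq F_e$ because $L_e\le w_e$, and ``$w_e\ge U_f$ for all $f\in C_e-e$'' is exactly ``$\hat F_e=\emptyset$''; (c) ``$w_f\le L_e$ for all $f\in C_e-e$'' is equivalent to ``$w_f\le L_e$ for all $f\in F_e$'', since every $f\notin F_e$ already satisfies $w_f\le U_f\le L_e$.

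\textbf{Core step.} I would then unwind $\mathsf{Loc}(e,Q)$ by splitting on whether $e\in Q$. If $e\in Q$, then $L_e(Q)=w_e$: for $f\in Q$ the required $w_f\le w_e$ holds by (a), and for $f\notin Q$ it reads $U_f\le w_e$, i.e.\ $f\notin\hat F_e$; hence $\mathsf{Loc}(e,Q)$ holds iff $\hat F_e\subseteq Q$, i.e.\ iff $\hat F_e+e\subseteq Q$. If $e\notin Q$, then $L_e(Q)=L_e$: for $f\notin Q$ the inequality reads $U_f\le L_e$, i.e.\ $f\notin F_e$, so $F_e\subseteq Q$ is forced; and once $F_e\subseteq Q$, the remaining constraints are $w_f\le L_e$ for $f\in F_e$ (those $f\in Q\setminus F_e$ being automatic by $w_f\le U_f\le L_e$), which by (c) is the negation of ``$w_{f'}>L_e$ for some $f'\in C_e-e$''. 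Combining the two sub-cases, $\mathsf{Loc}(e,Q)$ holds if and only if $\hat F_e+e\subseteq Q$, or else no $f'\in C_e-e$ has $w_{f'}>L_e$ and $F_e\subseteq Q$.

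\textbf{Matching the items.} It remains to plug each item's hypothesis into this combined condition. When $\hat F_e=\emptyset$, ``$\hat F_e+e\subseteq Q$'' collapses to ``$e\in Q$'', and items 1 and 2 are exactly the ``$w_{f'}>L_e$ for some $f'$'' and ``$w_{f'}\le L_e$ for all $f'$'' cases of the combined condition. When $\hat F_e\neq\emptyset$, items 3 and 4 are exactly these same two cases (here one uses $\hat F_e\subseteq F_e$, so that when $e\in Q$ and all $w_{f'}\le L_e$ the two disjuncts ``$\hat F_e+e\subseteq Q$'' and ``$F_e\subseteq Q$'' in item 4 are mutually consistent). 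This gives the desired equivalence for each non-basis $e$, and the lemma follows.

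\textbf{Main obstacle.} The delicate part is the ``$e\notin Q$'' sub-case: one must notice that elements $f$ with $U_f\le L_e$ impose no constraint whatsoever, so the set forced into $Q$ is precisely $F_e$; that the leftover requirement $w_f\le L_e$ for $f\in F_e$ is genuinely weight-dependent and is exactly what drives $e$ itself into $Q$ whenever some $w_{f'}>L_e$; and that the MWB property of $B$ is what discharges the ``$e$ and $f$ both in $Q$'' constraints. Getting the quantifier bookkeeping right so that the four item-hypotheses partition the case analysis cleanly is the crux; everything else is routine.
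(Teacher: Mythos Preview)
Your proposal is correct and rests on the same tool as the paper, namely \Cref{coro:certificate:characterization:circuits}; the logical content of both arguments is identical. The only difference is organizational: the paper proceeds item-by-item, proving necessity and sufficiency separately in each of the four subcases, whereas you first split on whether $e\in Q$ to derive the single unified condition ``$\hat F_e+e\subseteq Q$, or $F_e\subseteq Q$ and no $f'\in C_e-e$ has $w_{f'}>L_e$'', and only then specialize to the four hypotheses. Your decomposition is arguably cleaner, since the repeated verifications in the paper's four subcases all factor through your core step, and you make the use of the MWB property (your fact~(a)) explicit where the paper invokes it only implicitly.
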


    In all cases, the construction of $Q$ verifies that each element $e \notin B$ is a maximum-weight element on the unique circuit it forms in $B$, which completes the proof by \Cref{coro:certificate:characterization:circuits}. 

    Intuitively,~\Cref{lemma: cert contains for Le Ue} exhaustively formulates the \enquote{choices} that a certificate $Q$ has to make to ensure that~\Cref{coro:certificate:characterization:circuits} is satisfied for a fundamental cycle $C_e$ with respect to $B$ of an $e \not\in B$. For example, if $C_e$ falls into the second case of the lemma, then we must have $F_e \subseteq Q$ or $e \in Q$ (or both). The \enquote{choices} for different $e,e' \not\in B$ however, are not necessarily disjoint, e.g., if we have $F_e \cap F_{e'} \not= \emptyset$. Similar to verification algorithms in the literature, we define an auxiliary graph $G^B$ such that every vertex cover of $G^B$ corresponds to a combination of \enquote{choices} of~\Cref{lemma: cert contains for Le Ue}, and vice versa. The following definition makes this more precise.

  \begin{definition}
        \label{def:auxiliary_graph}
        Given a weighted uncertainty matroid $\cM = (E, \mathcal{I},A, w)$ and an MWB $B$, we define the bipartite (with the exception of self-loops) auxiliary graph $G^B = (V^B, E^B)$ with $V^B= E$ and $E^B = \bigcup_{e \in E\setminus B} E^B_e$, where the sets $E_e^B$ for $e \in E\setminus B$ are defined as follows:
    \begin{enumerate}
        \item If $w_e \geq U_f$ for all $f \in C_e-e$, and there is a $f' \in C_e - e$ 
        with $w_{f'} > L_e$, then~$E^B_e = \{\{e,e\}\}$.
        \item If $w_e \geq U_f$  for all $f \in C_e-e$, and $w_f \leq L_e$ for all $f \in C_e-e$, then $E^B_e = \{ \{e,f\} \mid f \in F_e\}$.
        \item If there exists $f \in C_e - e$ such that $w_e < U_f$, and there exists $f' \in C_e-e$ (could have $f' = f$) such that $w_{f'} > L_e$, then $E^B_e=\{\{f,f\} \mid f \in \hat{F_e} + e\}$.
        \item If there exists $f \in C_e-e$ such that $w_e < U_{f}$, and for all $f' \in C_e -e$ it holds that $w_{f'} \leq L_e$, then
        $E_e^B = \{\{e,f\} \mid f \in F_e \setminus \hat{F}_e\} \cup \{\{f,f\} \mid f \in \hat{F}_e\}$.
    \end{enumerate}
 \end{definition}

Each case of~\Cref{def:auxiliary_graph} corresponds to a case of~\Cref{lemma: cert contains for Le Ue} and exactly models the corresponding \enquote{choice} of the lemma. The following corollary 
follows from~\Cref{lemma: cert contains for Le Ue} and formally proves (\Cref{app:construct}) the connection between~\Cref{def:auxiliary_graph} and~\Cref{lemma: cert contains for Le Ue}. 

 \begin{restatable}{coro}{coroVertexCover}
    \label{coro:vertex:cover}
    Given a weighted uncertainty matroid $\mathcal{M} = (E, \mathcal{I}, A,w)$ with MWB $B$, a set~$Q$ is a certificate of $B$ if and only if $Q$ is a vertex cover of $G^B$.
 \end{restatable}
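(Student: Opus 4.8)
The plan is to prove the corollary by matching, case by case, the four conditions of \Cref{lemma: cert contains for Le Ue} against the four edge-set definitions of \Cref{def:auxiliary_graph}. First I would observe that the four cases in both statements are governed by the same two mutually exclusive and exhaustive dichotomies for a fixed $e \notin B$: whether or not $w_e \ge U_f$ for all $f \in C_e - e$, and whether or not there exists $f' \in C_e - e$ with $w_{f'} > L_e$. Hence, for each $e \notin B$, exactly one case of the lemma applies, and it is precisely the case of the definition used to build $E^B_e$. Since $E^B = \bigcup_{e \in E \setminus B} E^B_e$ and a vertex cover of $G^B$ is exactly a subset of $V^B = E$ that covers every edge of every $E^B_e$, it suffices to show, separately for each of the four cases, that $Q$ satisfies the lemma's condition for $e$ if and only if $Q$ covers all edges in $E^B_e$. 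Combining these equivalences over all $e \notin B$ with \Cref{lemma: cert contains for Le Ue} then yields the claim.

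For the first three cases I would argue the equivalence directly from the shape of $E^B_e$. In Case 1, $E^B_e = \{\{e,e\}\}$ is a single self-loop at $e$, and a set covers a self-loop at $e$ iff it contains $e$, which is the lemma's requirement $e \in Q$. In Case 2, $E^B_e = \{\{e,f\} \mid f \in F_e\}$ is a star centered at $e$ with leaf set $F_e$, and a set covers a star iff it contains the center or all leaves, i.e., $e \in Q$ or $F_e \subseteq Q$, exactly the lemma's condition. In Case 3, $E^B_e$ consists of the self-loops at the vertices of $\hat{F}_e + e$, so a set covers $E^B_e$ iff it contains every vertex of $\hat{F}_e + e$, matching $\hat{F}_e + e \subseteq Q$.

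Case 4 is the one requiring care, and I expect it to be the main obstacle, because $E^B_e = \{\{e,f\} \mid f \in F_e \setminus \hat{F}_e\} \cup \{\{f,f\} \mid f \in \hat{F}_e\}$ mixes self-loops and star edges, so the two choices $F_e \subseteq Q$ and $\hat{F}_e + e \subseteq Q$ must be recovered jointly. The key auxiliary fact is $\hat{F}_e \subseteq F_e$: if $U_f > w_e$ then, since $w_e \ge L_e$, also $U_f > L_e$. Given this, suppose $Q$ covers $E^B_e$. The self-loops force $\hat{F}_e \subseteq Q$, and the remaining edges $\{e,f\}$ with $f \in F_e \setminus \hat{F}_e$ force $e \in Q$ or $F_e \setminus \hat{F}_e \subseteq Q$; in the first subcase $Q \supseteq \hat{F}_e + e$, and in the second $Q \supseteq \hat{F}_e \cup (F_e \setminus \hat{F}_e) = F_e$, so either way $Q$ meets the lemma's Case-4 requirement. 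Conversely, if $F_e \subseteq Q$ then $Q$ contains $\hat{F}_e$ (covering the self-loops) and every $f \in F_e \setminus \hat{F}_e$ (covering the edges $\{e,f\}$); and if $\hat{F}_e + e \subseteq Q$ then $Q$ contains $\hat{F}_e$ (covering the self-loops) and $e$ (covering all edges $\{e,f\}$). In every situation $Q$ covers $E^B_e$ precisely when it satisfies the lemma's condition for $e$, which finishes the case analysis and hence the proof.
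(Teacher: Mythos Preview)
Your proposal is correct and follows essentially the same approach as the paper: both reduce the corollary to showing, for each $e \notin B$, that covering $E^B_e$ is equivalent to the corresponding case of \Cref{lemma: cert contains for Le Ue}, then handle the four cases with the same arguments (including the key observation $\hat{F}_e \subseteq F_e$ in Case~4). Your treatment of Case~4 is in fact slightly more explicit than the paper's, which compresses the biconditional into a single line.
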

 
Since a set $Q\subseteq E$ is a certificate for $B$ if and only if $Q$ is a vertex cover of $G^B$, we can compute a minimum-cost certificate for $B$ by computing a minimum-cost vertex cover for $G^B$, using the query costs $c_e$ as weights for the vertices $V^B = E$. Note that $G^B$ is bipartite except for the self-loop edges. 
To employ polynomial-time vertex-cover algorithms for bipartite graphs~(see, e.g.,~\cite{Schrijver2003book}), we modify $G^B$ by replacing each loop-edge $\{e,e\} \in E^B$ with an edge $\{e,v_e\}$ to a distinct new dummy vertex $v_e$ with weight $w(v_e) = w_e + \epsilon$ for some $\epsilon > 0$. Executing the algorithm, formalized as Algorithm~\ref{alg:alternative}, for the MWB $B$ computed by~\Cref{alg:basis} yields a polynomial-time algorithm that computes a minimum-cost certificate.

\begin{algorithm}[t]
\caption{Identifying a Minimum-Cost Certificate of $B$}
\begin{algorithmic}[1]
    \Require $\mathcal{M} = (E, \mathcal{I}, A,w)$  is a weighted uncertainty matroid.  
    \State Let $B$ be an MWB of $\cM$ as computed by~\Cref{alg:basis}
    \State Let $G^B=(V^B,E^B)$ be the auxiliary graph as defined in~\Cref{def:auxiliary_graph}
    \State Let $S^*$ be a minimum-weight vertex cover of $G^B$ for the weights $w(e) = c_{e}$, $e \in V^B=E$.
    \State \Return $S^*$ and $B$
\end{algorithmic}
\label{alg:alternative}
\end{algorithm}

\begin{restatable}{theorem}{thmFinalAlg}
        Let $\cM  = (E, \mathcal{I}, A,w)$ be a weighted uncertainty matroid. 
        Algorithm~\ref{alg:alternative} is a polynomial-time algorithm that computes a minimum-cost certificate $Q$ and an MWB $B$ of $\cM$ that is verified by $Q$.
\end{restatable}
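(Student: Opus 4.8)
The plan is to obtain the theorem by assembling the two main structural results proved above, so that the remaining work is essentially bookkeeping. First I would appeal to \Cref{thm:basis-computation}: the basis $B$ returned in Line~1 of \Cref{alg:alternative} (that is, produced by \Cref{alg:basis}) is an MWB of $\cM$ whose cheapest verifying certificate has cost exactly $c^*$, the minimum cost of any certificate of $\cM$. Hence it suffices to show (i) that the set $S^*$ computed in the vertex-cover step is a minimum-cost certificate that verifies this particular $B$, and (ii) that the whole procedure runs in polynomial time. Combining (i) with \Cref{thm:basis-computation} then yields that $S^*$ is in fact a minimum-cost certificate of $\cM$; since $S^*$ verifies $B$ and $B$ is an MWB by construction, $(S^*,B)$ is a valid output, and the theorem follows.

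For part (i) I would invoke \Cref{coro:vertex:cover}: a set $Q \subseteq E$ is a certificate of $B$ if and only if $Q$ is a vertex cover of the auxiliary graph $G^B$. Placing vertex weight $c_e$ on each $e \in V^B = E$, the certificates of $B$ are then exactly the vertex covers of $G^B$, and a minimum-cost certificate of $B$ corresponds to a minimum-weight vertex cover of $G^B$, and conversely. Thus $S^*$, being a minimum-weight vertex cover of $G^B$, is a minimum-cost certificate of $B$, and being a vertex cover of $G^B$, it verifies $B$.

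For part (ii) I would check three points. First, \Cref{alg:basis} runs in polynomial time by \Cref{thm:basis-computation}. Second, the auxiliary graph $G^B$ of \Cref{def:auxiliary_graph} is constructible in polynomial time: for each non-basis element $e$ one obtains the fundamental circuit $C_e$ with respect to $B$ using $O(|E|)$ independence-oracle calls (test whether $B - g + e \in \cI$ for each $g \in B$), and then reads off which of the four cases applies and the corresponding edge set $E_e^B$ together with the values $U_f, L_e, w_e$; in total $G^B$ has $O(|E|^2)$ edges. Third, a minimum-weight vertex cover of $G^B$ is polynomial-time computable because $G^B$ is bipartite apart from self-loops: after replacing each loop $\{e,e\}$ by an edge to a fresh degree-one dummy vertex $v_e$ of weight strictly larger than $c_e$, the graph becomes bipartite, and since $v_e$ is then strictly more expensive than $e$ and has no other incident edge, every minimum-weight vertex cover of the modified graph contains $e$ but not $v_e$; restricting such a cover to $E$ gives a minimum-weight vertex cover of $G^B$ of the same weight, and conversely every vertex cover of $G^B$ extends to one of the modified graph of the same weight. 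Since minimum-weight vertex cover on bipartite graphs is solvable in polynomial time (e.g.\ via K\"onig--Egerv\'ary duality or a max-flow computation~\cite{Schrijver2003book}), the vertex-cover step is polynomial.

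The proof is thus essentially a corollary of \Cref{thm:basis-computation} and \Cref{coro:vertex:cover}. The one place where I expect genuine (though minor) care to be needed is the reduction from the loop-containing graph $G^B$ to a bipartite vertex-cover instance: one must verify that the dummy-vertex construction, and in particular the strict perturbation of the dummy weights (including the degenerate case $c_e = 0$), preserves both feasibility and optimality of vertex covers in both directions. Everything else reduces to chaining the already-established characterizations.
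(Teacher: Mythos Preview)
Your proposal is correct and follows essentially the same route as the paper: invoke \Cref{thm:basis-computation} so that the basis $B$ has optimal verification cost, then use \Cref{coro:vertex:cover} to reduce finding a minimum-cost certificate for $B$ to minimum-weight vertex cover in $G^B$, and finally handle the self-loops via the dummy-vertex trick to obtain a bipartite instance. If anything, you are more explicit than the paper about why optimality for $B$ lifts to optimality for $\cM$, and about the $c_e=0$ edge case in the dummy construction.
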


\section{Applications} \label{sec: applications}

In this section, we show that our verification algorithm and our structural insights, in particular~\Cref{lemma: cert contains for Le Ue}, can be used to obtain new results for the adaptive online variant and for two learning-augmented variants of the MWB problem under uncertainty. 

\subsection{A best-possible adaptive online algorithm for a given MWB}
\label{sec:fixed:trees}

In the following, we give an optimal algorithm for the adaptive online problem under the assumption that we are given an MWB $B$ that can be verified with a minimum-cost certificate. 

In the adaptive online problem, the weights $w_e$ are initially unknown, and an algorithm has access only to the uncertainty matroid $\cM$. The goal of the algorithm is to adaptively query elements until the set of queried elements $Q$ verifies some MWB $B$ of $\cM$ (cf.~\Cref{def: what it means for S to verify T}).
Algorithms are analyzed in terms of their \emph{competitive ratio}. An algorithm is \emph{$\rho$-competitive} if $c(Q(\cM)) \le \rho \cdot c(Q^*(\cM))$ for all uncertainty matroids $\cM$, where $Q(\cM)$
is the set of elements queried by the algorithm on $\cM$,
and $Q^*(\cM)$ is a minimum-cost certificate for $\cM$. The competitive ratio of an algorithm is the minimum $\rho$ such that the algorithm is $\rho$-competitive.

The best-possible competitive ratio
heavily depends on the type of uncertainty areas. If all uncertainty areas are either open, i.e., $A_e = (L_e,U_e)$, or trivial, i.e., $A_e = \{w_e\}$, then the best-possible competitive ratio is $2$~\cite{erlebach08steiner_uncertainty}. Once the uncertainty areas can be closed intervals, the best-possible competitive ratio increases to~$n$~\cite[Section~7]{GuptaSS16}. Both results hold for uniform query costs ($c_e = 1$ for $e \in E$).
Here, we consider the case of uniform query costs.

It is not hard to see that the lower bound of $2$ for minimum spanning trees in the case of open uncertainty areas~\cite{erlebach08steiner_uncertainty} holds even if the algorithm is given an MWB $B$ with the promise that (i) $B$ is indeed an MWB for the unknown weights and (ii) $B$ can be verified with a minimum-cost certificate. Note that, for open uncertainty areas, (ii) is an implication of~(i). 

\begin{observation}[Follows from \cite{erlebach08steiner_uncertainty}]
    No deterministic algorithm for the online adaptive MWB problem with uniform query costs is better than $2$-competitive, even if the algorithm is given an MWB $B$ w.r.t.~to the unknown weights.
\end{observation}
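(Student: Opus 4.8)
The plan is to reuse the classical lower-bound instance for online adaptive MST under open uncertainty from~\cite{erlebach08steiner_uncertainty} and to observe that its adversary argument never exploits the algorithm's ignorance of \emph{which} basis is the MWB --- only its ignorance of the exact weights. I would work with the following clean variant: a triangle on vertices $u,v,x$ with a trivial edge $e_0 = uv$ of weight $0$ and two non-trivial edges $e_X = ux$, $e_Y = xv$ with overlapping, offset open uncertainty areas $A_{e_X} = (0,3)$ and $A_{e_Y} = (2,5)$, with uniform query cost. Every spanning tree has two edges and contains $e_0$, and in every realization the adversary employs we will have $0 < w_{e_X} < w_{e_Y}$, so the unique MWB is always $B = \{e_0,e_X\}$. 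Thus the algorithm may be handed $B$ in advance with no loss for the adversary; and since the areas are open, $B$ automatically admits a minimum-cost certificate, so the stronger promise~(ii) is met as well. (One may instead drop $e_0$ and run the same argument on the rank-one uniform matroid $U_{1,2}$ on $\{e_X,e_Y\}$; the triangle is only there to phrase the instance as MST on a simple graph.)

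Next I would run the adaptive adversary. The only non-basis edge is $e_Y$, with fundamental circuit $C^B_{e_Y} = \{e_0,e_X,e_Y\}$, so by~\Cref{coro:certificate:characterization:circuits} a set $Q$ verifies $B$ iff $U_{e_X}(Q)\le L_{e_Y}(Q)$ (the $e_0$-constraint, $0 \le L_{e_Y}(Q)$, being trivially satisfied). Since $U_{e_X}=3>2=L_{e_Y}$, neither $\emptyset$ nor $\{e_0\}$ is a certificate, so the algorithm must at some point query $e_X$ or $e_Y$. When it first does so, the adversary answers with the value $2.5$ for whichever of $e_X,e_Y$ was queried. In both cases the constraint $U_{e_X}(Q)\le L_{e_Y}(Q)$ remains violated and can be repaired only by additionally querying the other of $e_X,e_Y$, so the algorithm is forced to spend at least two queries. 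The adversary then fixes the remaining weight so that a single query would have sufficed in hindsight: if $e_X$ was set to $2.5$, set $w_{e_Y}=4$, so $\{e_Y\}$ is a certificate ($U_{e_X}=3\le w_{e_Y}$) of cost $1$; if $e_Y$ was set to $2.5$, set $w_{e_X}=1$, so $\{e_X\}$ is a certificate ($w_{e_X}=1\le L_{e_Y}=2$) of cost $1$. Both branches keep $0<w_{e_X}<w_{e_Y}$, so the MWB is the promised $B$, and the minimum-cost certificate of $\cM$ has cost exactly $1$ --- it is nonzero because, as noted, $\emptyset$ is not a certificate. Hence every deterministic algorithm incurs cost at least $2$ on an instance with $c(Q^*(\cM))=1$, giving competitive ratio at least $2$.

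Once the instance is fixed the rest is routine bookkeeping, and I do not expect a genuine obstacle; the two points to verify carefully are (a) mutual consistency --- that the adversary's answer together with the requirement ``$B$ is the MWB'' is simultaneously realizable by a weight assignment with every weight strictly inside its open area, which is precisely why the areas must overlap but be offset so that \emph{both} $\{e_X\}$ and $\{e_Y\}$ can serve as the cost-$1$ hindsight certificate --- and (b) that the hindsight optimum is exactly $1$ rather than $0$, which is why $U_{e_X}\le L_{e_Y}$ must fail in the starting instance. Finally, since for open areas promise~(i) implies promise~(ii) (as remarked above), the same bound also applies to the promise variant studied in~\Cref{sec:fixed:trees}.
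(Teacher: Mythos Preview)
Your proof is correct and is precisely what the paper gestures at without spelling out: the paper gives no explicit argument here, only the remark that the classical lower-bound instance from~\cite{erlebach08steiner_uncertainty} still works when the algorithm is handed the MWB in advance, and that for open areas promise~(i) implies promise~(ii). You have filled in exactly those details. One small slip worth fixing: not every spanning tree of the triangle contains $e_0$ --- the tree $\{e_X,e_Y\}$ does not --- but this is harmless, since the weight comparison $0+w_{e_X}<\min\{0+w_{e_Y},\,w_{e_X}+w_{e_Y}\}$ still shows $B=\{e_0,e_X\}$ is the unique MWB whenever $0<w_{e_X}<w_{e_Y}$.
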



In contrast, the lower bound of $n$ for the case of general uncertainty areas~\cite[Section~7]{GuptaSS16} 
does not hold if the algorithm receives the same promise as described above. This contrast highlights that the increase in the competitive ratio from {\em open} to {\em general} uncertainty areas stems from the algorithm's task of \emph{finding} 
rather than merely \emph{verifying} an MWB $B$. In the following, we formally prove this insight. The core of the proof is an application of~\Cref{lemma: cert contains for Le Ue}.

\begin{lemma}
\label{lem:comp:knowntree}
    There exists an online adaptive algorithm that,  given an uncertainty matroid $\cM$ and an MWB $B$ (w.r.t.~the unknown weights $w$) that can be verified with a minimum-cardinality certificate $Q^*$, computes a certificate $Q$ with $|Q|\le 2 |Q^*|$ that verifies $B$.
\end{lemma}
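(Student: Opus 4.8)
The plan is to design an online algorithm that processes the fundamental circuits $C_e$ for $e \notin B$ and, for each such circuit, makes the ``choices'' dictated by \Cref{lemma: cert contains for Le Ue}, querying at most twice as many elements as a minimum-cardinality certificate is forced to query for that circuit. The key observation is that \Cref{lemma: cert contains for Le Ue} partitions the requirement imposed by each $C_e$ into four mutually exclusive cases, and in each case the certificate must either query a designated single element ($e$, or each element of $\hat F_e + e$) or query a whole set ($F_e$), possibly with a choice between two such options. The obstacle online is that we do not know the true weights, so we cannot immediately tell which case a given circuit $C_e$ falls into: membership of $C_e$ in a case depends on comparisons between $w_e$, $w_f$, $U_f$, and $L_e$. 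The algorithm must therefore query strategically to resolve the case, and we must charge those ``exploratory'' queries against $Q^*$.

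First I would set up the algorithm to iterate over the non-basis elements $e$ one at a time. For a fixed $e$, before querying anything I know the intervals $A_f$ for $f \in C_e - e$ and $A_e$. If already $U_f \le L_e$ for every $f \in C_e - e$, then by \Cref{coro:certificate:characterization:circuits} no queries are needed for this circuit regardless of the realized weights, and the algorithm skips it; note a minimum-cardinality certificate also need not query here, so the charging is trivial. Otherwise some $f \in C_e - e$ has $U_f > L_e$, and this $f$ (or $e$ itself) is exactly an element that \emph{any} certificate is forced to touch in at least one of cases (1)--(4): in case (3) and case (4) we have $\hat F_e + e \subseteq Q$ or $F_e \subseteq Q$, in case (2) we have $e \in Q$ or $F_e \subseteq Q$, in case (1) we have $e \in Q$. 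So in all non-trivial situations $Q^*$ must query something, and I will make sure the algorithm queries $e$ right away; this single query reveals $w_e$, which combined with the known $U_f$ values tells us whether $w_e \ge U_f$ for all $f$ (cases 1--2) or some $U_f > w_e$ (cases 3--4), and combined with the known $L_e$ and the realized need to compare $w_f$ with $L_e$ it tells us the rest. The query of $e$ is charged: in cases (1),(3) we have $e \in Q^*$, so it is free; in cases (2),(4) either $e \in Q^*$ (free) or $F_e \subseteq Q^*$, and then $|F_e| \ge 1$ so the one query of $e$ is charged to one element of $F_e$, giving ratio at most $2$ locally since after knowing $w_e$ we will query at most $F_e$ itself (an additional $|F_e|$ queries).

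The heart of the charging is then: for each circuit, the algorithm queries $e$ (one extra query, amortized against $Q^*$'s forced query) and then completes the requirement of \Cref{coro:certificate:characterization:circuits} for $C_e$ by querying the \emph{cheapest} admissible option among those permitted by the now-known case of \Cref{lemma: cert contains for Le Ue} --- i.e.\ in cases (2) and (4) query $F_e$ (or observe $e\in Q$ already suffices if $e$ is trivial-valued in the relevant sense, but since we have already queried $e$ we may simply take $F_e$ when that is the cheaper completion), and in case (3) query $\hat F_e$. In every case the number of elements the algorithm queries for $C_e$ is at most $1 + |X_e|$ where $X_e$ is a set that $Q^*$ is forced to contain entirely, and $|X_e| \ge 1$, so locally $|Q \cap (\text{stuff for } C_e)| \le 2|X_e| \le 2|Q^* \cap (\cdot)|$. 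Summing over all non-basis $e$, using that these forced sets are subsets of $C_e - e \subseteq B$ together with $e$ (so each queried element is attributed to finitely many circuits), and invoking \Cref{coro:vertex:cover} to see that the queried set $Q$ is indeed a vertex cover of $G^B$ and hence a certificate for $B$, gives $|Q| \le 2|Q^*|$. The main technical care needed is to make the charging globally valid: an element of $B$ may lie in several fundamental circuits, so I would argue the bound on the auxiliary graph instead --- the algorithm produces a vertex cover whose size is at most twice the size of a minimum vertex cover $Q^*$ of $G^B$, because the ``query $e$ then query $F_e$ or $\hat F_e$'' rule is exactly a local $2$-approximation of the covering constraints of the edges in $E_e^B$, and a $2$-approximate vertex cover is obtained by this greedy edge-by-edge rule provided we never query an element that no remaining constraint needs. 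The subtlety that I expect to be the real obstacle is precisely reconciling the \emph{online} (weight-oblivious) ordering with this covering view: we must show that querying $e$ first is never wasteful, i.e.\ that whenever the algorithm is forced to query $e$ exploratorily, the true case turns out to be one in which $Q^*$ was forced to pay at least one query for $C_e$ that the algorithm has not already counted --- which is exactly the case analysis sketched above, and which I would write out in full in the four cases of \Cref{lemma: cert contains for Le Ue}.
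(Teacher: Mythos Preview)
Your algorithm has a genuine gap: querying the non-basis element $e$ first (before any basis element of $C_e$) can force a ratio of $n$, not $2$. Take the rank-$1$ uniform matroid with $B=\{f_1\}$ and non-basis elements $e_1,\dots,e_n$, with $A_{f_1}=[0,10]$, $w_{f_1}=0$, and $A_{e_i}=[1,100]$, $w_{e_i}=50$ for all $i$. Every circuit $C_{e_i}=\{e_i,f_1\}$ is in case (2) of \Cref{lemma: cert contains for Le Ue}: $w_{e_i}\ge U_{f_1}$ and $w_{f_1}\le L_{e_i}$, while $U_{f_1}>L_{e_i}$, so $F_{e_i}=\{f_1\}$ and $\hat F_{e_i}=\emptyset$. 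The set $Q^*=\{f_1\}$ is a certificate, so $|Q^*|=1$. Your algorithm queries $e_1$, observes $\hat F_{e_1}=\emptyset$, queries nothing further, then moves to $e_2$; but $f_1$ is still unqueried, so $U_{f_1}(Q)=10>1=L_{e_2}$ and the circuit is still non-trivial, forcing a query of $e_2$; and so on. You end with $|Q|=n$. Your charging argument breaks exactly here: each $e_i$ is charged to an element of $F_{e_i}\cap Q^*$, but $F_{e_i}=\{f_1\}$ for \emph{every} $i$, so all $n$ charges land on the single element $f_1$. The auxiliary-graph view does not rescue this either: $\{e_1,\dots,e_n\}$ is a vertex cover of $G^B$, but of size $n$ versus the minimum vertex cover $\{f_1\}$ of size $1$; the ``query $e$ then $\hat F_e$'' rule is not a local $2$-approximation of the covering constraints.

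The paper's algorithm reverses the order: in each pass of the inner loop it first queries an unqueried \emph{basis} element $\hat g\in C_e-e$ of maximum current upper limit, and only then queries $e$. The analysis does not charge to $Q^*$ at all; it shows directly that at least half of the (pairwise disjoint) set $Q_j$ of elements queried in iteration $j$ lies \emph{in} $Q^*$. For $|Q_j|=2$ this is because $\hat g\in F_{e_j}$, so in every case of \Cref{lemma: cert contains for Le Ue} either $e_j\in Q^*$ or $F_{e_j}\subseteq Q^*\ni\hat g$. For $|Q_j|>2$, every element queried after the first pair has $U_{\hat g}>w_{e_j}$ (else the loop would have stopped), hence lies in $\hat F_{e_j}\subseteq Q^*$. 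Disjointness of the $Q_j$ then gives $|Q|=\sum_j|Q_j|\le 2\sum_j|Q_j\cap Q^*|\le 2|Q^*|$ with no over-counting. On the example above, the paper's algorithm queries $f_1$ and $e_1$ in the first iteration, after which $U_{f_1}(Q)=0\le L_{e_i}$ for all $i$, so no further queries are made and $|Q|=2=2|Q^*|$.
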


\begin{proof} To prove the lemma, we define the following algorithm:
    \begin{enumerate}
        \item Initialize $Q = \emptyset$.
        \item For each $e \in E \setminus B$:
        \begin{enumerate}
            \item Let $C_e$ denote the fundamental circuit of $e$ with respect to $B$.
            \item While there exists an $f \in C_e - e$ with $U_f(Q) > L_e(Q)$:
            \begin{enumerate}
                \item If there is an element $g \in C_e \setminus (\{e\} \cup Q)$ with $U_{g}(Q) > L_e(Q)$, then let $\hat{g}$ denote such an element with maximum $U_{\hat{g}}(Q)$. Add $\hat{g}$ to $Q$ and query $\hat{g}$.
                \item If $e \not\in Q$, then add $e$ to $Q$ and query $e$.
            \end{enumerate}
        \end{enumerate}
        \item Return $Q$.
    \end{enumerate}

    It is not hard to see that the set $Q$ computed by the algorithm is a certificate that verifies~$B$: Exploiting the assumption that $B$ is indeed an MWB, the definition of Step 2b ensures that $Q$ satisfies~\Cref{coro:certificate:characterization:circuits}. Hence, $Q$ is a certificate that verifies $B$.

    It remains to show that $|Q| \le 2 \cdot |Q^*|$. The proof will heavily exploit that $Q^*$, by assumption, is a certificate that verifies $B$. In particular, this means that $Q^*$ satisfies~\Cref{lemma: cert contains for Le Ue} for $B$, which will help us to bound $|Q|$ in terms of $|Q^*|$.

    Let $E\setminus B = \{e_1, \ldots, e_k\}$ be indexed in the order in which the algorithm considers the elements of $E \setminus B$ in the loop of Step 2. For each $j \in \{1,\ldots, k\}$ let $Q_j$ denote the set of elements that the algorithm queries during iteration $j$ of the loop of Step 2. We show that 
    \begin{equation}
    \label{eq:online:1}
        \frac{|Q_j|}{|Q_j \cap Q^*|} \le 2
    \end{equation}
    holds for each $Q_j$ with $j \in \{1,\ldots, k\}$. Since the sets $Q_j$ form a partition of $Q$, this then implies $|Q| \le 2 |Q^*|$.

    Fix an arbitrary $Q_j$. Define $P_j = \bigcup_{j' < j} Q_{j'}$.
    We show that $Q_j$ satisfies~\eqref{eq:online:1} via case distinction over the cardinality of $Q_j$.
    \begin{enumerate}
        \item If $|Q_j| = 0$, then~\eqref{eq:online:1} holds trivially.
        \item If $|Q_j| = 1$, then the definition of the loop of Step 2b implies that we must have $Q_j = \{e_j\}$ as $e_j \in Q_j$ holds by Step 2bii of the algorithm. The fact that the loop of Step 2b is executed but only queries $e_j$ implies that all $g \in C_{e_j} \setminus (\{e_j\} \cup P_j)$ have $w_g \le U_g(P_j) \le L_{e_j}(P_j) = L_{e_j}$ (cf.~Step 2bi). However, then the fact that the loop of Step 2b is executed implies that there is some $g \in C_{e_j} \cap P_j$ with $U_g(P_j) = w_g > L_{e_j}(P_j) = L_{e_j}$. This in turn implies that $C_{e_j}$ either satisfies the first or third case of~\Cref{lemma: cert contains for Le Ue}. Hence, $e_j \in Q^*$ and $\frac{|Q_j|}{|Q_j \cap Q^*|}=1$.
        \item If $|Q_j| = 2$, then~\Cref{lemma: cert contains for Le Ue} implies $|Q_j \cap Q^*| \ge 1$. Hence, $\frac{|Q_j|}{|Q_j \cap Q^*|} \le 2$
        \item If $|Q_j| > 2$, then the loop of Step 2b is executed at least twice. 
        For the two elements that are queried in the first iteration, we can argue that $Q^*$ has to contain at least one of them as in the previous case.
        
        Let $Q_j'$ denote the subset of $Q_j$ that is queried after the first iteration of this Step 2b loop. All $g \in Q_{j}'$ have to satisfy $U_g > w_{e_j}$, as otherwise the loop would terminate before $g$ is queried. This implies that all $g \in Q_j'$ are in $\hat{F}_{e_j}$ (defined as in~\Cref{lemma: cert contains for Le Ue}). Hence, $C_{e_j}$ either satisfies Case 3 or 4 of~\Cref{lemma: cert contains for Le Ue} and, thus, $Q_j' \subseteq \hat{F}_{e_j} \subseteq Q^*$. We can conclude with $\frac{|Q_j|}{|Q_j \cap Q^*|}\le \frac{2+ |Q_j'|}{1+ |Q_j'|} \le 2$.\qedhere 
    \end{enumerate}    
\end{proof}

\subsection{Learning-augmented algorithms}

Recently, online adaptive problems under explorable uncertainty have been studied in \emph{learning-augmented} settings~\cite{ErlebachLMS22,ErlebachLMS23}, where the algorithm has access to imperfect predictions $\hat{w}_e$ on the unknown weights $w_e$. The goal is to design algorithms that leverage the access to the predictions to achieve an improved competitive ratio in case the predictions are accurate, and at the same time maintain worst-case guarantees even if the predictions are arbitrarily wrong. Such learning-augmented algorithms are analyzed w.r.t.~their consistency, the competitive ratio if $w_e = \hat{w}_e$ for all $e \in E$, and robustness, the competitive ratio for arbitrary predictions~\cite{LykourisV21,PurohitSK18}. 

For the online adaptive MST problem with open uncertainty areas,~\cite{ErlebachLMS22} gave a $1.5$-consistent and $2$-robust algorithm. In the following we show that our verification algorithm and the insights from the previous section can be used to design learning-augmented algorithms for the online adaptive MWB problem with general uncertainty areas for two prediction models:
\begin{enumerate}
    \item \textbf{Weight predictions:} As in~\cite{ErlebachLMS22,ErlebachLMS23}, the algorithm has access to predictions $\hat{w}_e$ on the unknown element weights $w_e$.
    \item \textbf{Basis predictions:} The algorithm has access to a prediction $\hat{B}$ on the MWB $B^*$ that can be verified with a minimum-cardinality certificate.
\end{enumerate}

\subsubsection{Weight predictions}

Our verification algorithm implies a  $1$-consistent and $n$-robust algorithm for the case of uniform query costs. Note that this consistency and robustness are optimal due to the aforementioned lower bound given in~\cite{GuptaSS16}.
\begin{enumerate}
    \item 
    Use the verification algorithm to compute and query a minimum-cost certificate under the assumption that the predictions are accurate, i.e., $w_e = \hat{w}_e$ for all $e \in E$.
    \item If the instance is not solved yet,
    then query all remaining elements. 
\end{enumerate}

\begin{restatable}{observation}{obsNaive}
\label{obs:naive}
    The algorithm above runs in polynomial  time and is $1$-consistent and $n$-robust for the online adaptive MWB problem with weight predictions and uniform query~costs.
\end{restatable}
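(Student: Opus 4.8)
The plan is to establish the three assertions—polynomial running time, $1$-consistency, and $n$-robustness—by combining the correctness of the two-phase verification algorithm (\Cref{thm:basis-computation} for \Cref{alg:basis}, together with the guarantee of \Cref{alg:alternative}) with the elementary observation that whether $\emptyset$ is a certificate is a property of the uncertainty areas alone. Throughout, write $\hat{\cM}=(E,\cI,A,\hat w)$ for the ``predicted'' instance, and let $Q_1$ be the certificate and $B$ the basis returned when \Cref{alg:basis} and then \Cref{alg:alternative} are run on $\hat\cM$ in Step~1.

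For the running time: Step~1 is a polynomial-time computation by \Cref{thm:basis-computation} and the guarantee of \Cref{alg:alternative}, and querying $Q_1$ amounts to $|Q_1|\le n$ queries. Deciding whether ``the instance is solved'' after Step~1 means checking whether $Q_1$ verifies $B$ under the now-revealed true weights, which can be done in polynomial time via the characterization of \Cref{coro:certificate:characterization:circuits} (or equivalently by testing the vertex-cover condition of \Cref{coro:vertex:cover} on $G^B$). Step~2 issues at most $n$ further queries, so the whole algorithm is polynomial and terminates—indeed, after Step~2 every element has been queried, and the full set $E$ trivially verifies any MWB.

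For $1$-consistency, I would argue: if $w=\hat w$ then $\hat\cM=\cM$, so $Q_1$ is a minimum-cost certificate of $\cM$ that verifies the MWB $B$ of $\cM$; since the weights revealed by querying $Q_1$ agree with $\hat w$, the set $Q_1$ still verifies $B$, the instance is solved, and the algorithm halts having paid exactly $c(Q_1)=|Q_1|=c(Q^*(\cM))$. For $n$-robustness, note first that the set $Q$ of all queries made by the algorithm satisfies $Q\subseteq E$, so $c(Q)=|Q|\le n$; hence $c(Q)\le n\cdot c(Q^*(\cM))$ whenever $c(Q^*(\cM))\ge 1$. The remaining case $c(Q^*(\cM))=0$ is where I expect the only subtlety: since the query costs are uniform and positive, $c(Q^*(\cM))=0$ forces $\emptyset$ to be a certificate of $\cM$, and by \Cref{lem:certificate:characterization:cuts} this is equivalent to the existence of a basis $B'$ with $U_e\le L_f$ for all $e\in B'$ and all $f\in(E\setminus\mspan[M]{B'-e})-e$—a condition involving only $A$, not $w$ or $\hat w$. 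Therefore $\emptyset$ is also a certificate of $\hat\cM$, so the minimum-cost certificate $Q_1$ produced in Step~1 is empty, and the basis $B$ that \Cref{alg:alternative} returns satisfies the same weight-independent cut condition, hence by \Cref{prop:min:cut} is an MWB for the true weights verified by $\emptyset$; thus the instance is solved with zero queries and $c(Q)=0=n\cdot c(Q^*(\cM))$. Optimality of both bounds then follows from the lower bound of \cite{GuptaSS16} for robustness, and is immediate for consistency.

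So the main obstacle—modest in this case—is handling the degenerate case $c(Q^*(\cM))=0$, which is resolved by observing that the empty-certificate condition in \Cref{lem:certificate:characterization:cuts} is purely about the uncertainty areas; everything else is a direct appeal to the correctness of \Cref{alg:basis} and \Cref{alg:alternative}.
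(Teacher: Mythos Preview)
Your argument is correct and follows the same approach as the paper: $1$-consistency from correctness of the verification algorithm, and $n$-robustness from the trivial bound $|Q|\le n$ together with $|Q^*|\ge 1$ whenever the algorithm queries anything. You are in fact more careful than the paper on the degenerate case $c(Q^*(\cM))=0$: the paper simply asserts that if the algorithm queries at least one element then $|Q^*|\ge 1$, whereas you justify this via the weight-independence of the empty-certificate condition in \Cref{lem:certificate:characterization:cuts}, which is the right way to close that gap.
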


This naive algorithm is a first step towards \emph{smooth} learning augmented-algorithms for the online adaptive MWB problem with weight predictions, i.e., algorithms with a competitive ratio that smoothly degrades from $1$ to $n$ depending on some error measure describing the quality of the predictions. Next, we give such a smooth algorithm (with a worse consistency) for basis predictions. Since weight predictions can be used to compute a basis prediction, the algorithm can also be interpreted as a smooth algorithm for weight predictions.

\subsubsection{Basis predictions}
\label{sec:basis:preds}

We consider the prediction model where algorithms have access to a prediction $\hat{B}$ on an MWB $B^*$ that can be verified with a minimum cardinality query set $Q^*$. 
W.l.o.g.\ we may assume that $\hat{B}$ is indeed a basis of the given uncertainty matroid. This is easy to check and if $\hat{B}$ is not a basis, we remove a minimum number of elements to achieve independence and then augment it to a basis.

We say that the prediction $\hat{B}$ is \emph{correct} if it is indeed an MWB that can be verified with some minimum cardinality query set. Otherwise, the prediction is \emph{incorrect}. We now define error measures to quantify the quality of a prediction $\hat{B}$.
Note that a prediction $\hat{B}$, can have two different types of error. First, $\hat{B}$ might not actually be an MWB. Second, even if $\hat{B}$ is an MWB, it might not be an MWB of minimum verification cost. Our error measure takes both of these error types into account.
To this end, let $C_e$ for $e \in E \setminus \hat{B}$ denote the fundamental circuit of $e$ with respect to $\hat{B}$. Define such a circuit $C_e$ to be \emph{correct} if $w_e \ge w_f$ for all $f \in C_e - e$ and $w_f \le w_{g}$ for all $f \in C_e-e$ and $g \in E\setminus\mspan[M]{\hat{B}-f}$. Otherwise, call $C_e$ \emph{incorrect}.
Let $\hat{B}_s \subseteq \hat{B}$ denote the set of elements in $\hat{B}$ that are part of a correct fundamental circuit $C_e$, $e \in E \setminus \hat{B}$. 
We define two errors:
\begin{itemize}
    \item Let $\hat{B}_s^*$ 
    denote an MWB with $\hat{B}_s \subseteq \hat{B}_s^*$ such that $\hat{B}_s^*$ has minimum verification cost among all MWB's that contain $\hat{B}_s$ and let $Q'$ denote a minimum-cardinality certificate for~$\hat{B}_s^*$. 
    In~\Cref{app:applications}, we argue that such an MWB $\hat{B}_s^*$ always exists. Define $\eta_1 = |Q'| - |Q^*|$, where $Q^*$ is the minimum-cardinality certificate. Note that if $\hat{B}$ is an MWB, then $\eta_1$ is just the difference between the verification cost of $\hat{B}$ and the minimum verification cost~$|Q^*|$.
    \item Let $\eta_2$ denote the number of incorrect circuits $C_e$ with $e \in E \setminus \hat{B}$.
\end{itemize}

Intuitively, $\eta_1$ measures how much larger the verification cost of $\hat{B}$ is compared to the minimum-cardinality query set. Error $\eta_2$  captures how far $\hat{B}$ is from actually being an MWB.

The following theorem can be achieved using a slight variation of the algorithm 
in~\Cref{sec:fixed:trees}, taking into account that elements $e \in E \setminus \hat{B}$ might not be maximal on circuit $C_e$. In such cases, the algorithm deviates from the procedure in~\Cref{sec:fixed:trees} by querying all elements in circuit $C_e$. In~\Cref{app:applications},  we give the formal definition of this algorithm and the proof of the theorem, and argue that the factor $c_{\max}$ of the error $\eta_2$ is indeed necessary.

\begin{restatable}{theorem}{thmBasisPredictions}
\label{thm:basis:predictions}
Given an instance of the online adaptive MWB problem with a predicted basis~$\hat{B}$ and uniform query costs, there is an algorithm that computes a certificate $Q$ for some MWB $B$ and satisfies
$|Q| \le \min\{ 2 \cdot (|Q^*| + \eta_1) + \eta_2 \cdot c_{\max}, n\}$, where $Q^*$ is a minimum-cardinality certificate and $c_{\max}$ is the size of the largest circuit.
In particular, the algorithm is $2$-consistent and $n$-robust.
\end{restatable}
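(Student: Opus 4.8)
The plan is to adapt the online algorithm from the proof of \Cref{lem:comp:knowntree} so that it processes the predicted basis $\hat{B}$ rather than a guaranteed MWB, and to handle the two ways the prediction can fail by a case analysis that mirrors the definitions of $\eta_1$ and $\eta_2$. Concretely, I would run the Step~2 loop of \Cref{lem:comp:knowntree} over the non-basis elements $e \in E \setminus \hat{B}$, but with one modification: when the fundamental circuit $C_e$ is \emph{incorrect}, the algorithm cannot certify that $e$ is maximal on $C_e$, so it simply queries all of $C_e$ (paying at most $c_{\max}$ per incorrect circuit). For \emph{correct} circuits $C_e$, the same argument as in \Cref{lem:comp:knowntree} applies verbatim, because along correct circuits $\hat{B}$ looks locally like an MWB; here \Cref{lemma: cert contains for Le Ue} applied to the reference certificate does the work of charging the queries two-to-one. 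To get a bound that also makes sense when $\hat{B}$ is far from being an MWB, I would take the reference certificate to be $Q'$, the minimum-cardinality certificate of $\hat{B}_s^*$ (the MWB of smallest verification cost extending the ``good part'' $\hat{B}_s$), and prove $|Q| \le 2\,|Q'| + \eta_2 \cdot c_{\max}$; since $|Q'| = |Q^*| + \eta_1$ by definition of $\eta_1$, this yields the first branch of the claimed minimum. The second branch, $|Q| \le n$, is immediate because the algorithm never queries an element twice, and it dominates whenever the error-dependent bound exceeds $n$ (one can always fall back to querying everything).

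The key steps, in order, are: (1) state the modified algorithm precisely, separating the iterations over correct circuits from those over incorrect circuits; (2) verify the output is a valid certificate — for correct circuits this follows from \Cref{coro:certificate:characterization:circuits} exactly as in \Cref{lem:comp:knowntree}, and for incorrect circuits it follows trivially since all of $C_e$ is queried, so $U_f(Q) = w_f$ and the circuit condition is met by the true MWB the algorithm ends up verifying; (3) partition $Q$ into the per-iteration query sets $Q_e$ and, for each correct $C_e$, reuse the $|Q_e| \le 2\,|Q_e \cap Q'|$ bound from \Cref{lem:comp:knowntree}, now invoking \Cref{lemma: cert contains for Le Ue} with respect to $Q'$ and the basis $\hat{B}_s^*$; (4) bound the total contribution of incorrect circuits by $\eta_2 \cdot c_{\max}$; (5) assemble $|Q| \le 2\,|Q'| + \eta_2\, c_{\max} = 2(|Q^*| + \eta_1) + \eta_2\, c_{\max}$, take the minimum with $n$, and read off $2$-consistency (when the prediction is correct, $\eta_1 = \eta_2 = 0$ and $|Q^*|$ is the true optimum) and $n$-robustness.

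The main obstacle I anticipate is step~(3): making the two-to-one charging argument go through when the reference certificate $Q'$ verifies $\hat{B}_s^*$ rather than $\hat{B}$ itself. The subtlety is that $Q'$ is a certificate for a \emph{different} basis, so \Cref{lemma: cert contains for Le Ue} must be applied with fundamental circuits taken with respect to $\hat{B}_s^*$, not $\hat{B}$; I need the fact that on the correct circuits $C_e$ the relevant elements of $\hat{B}$ also lie in $\hat{B}_s^*$ (this is essentially why $\hat{B}_s \subseteq \hat{B}_s^*$ was imposed), so that the ``choices'' forced by the lemma for $Q'$ can be matched up with the queries the algorithm makes while processing $C_e$. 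Pinning down this correspondence — and in particular checking that the elements queried after the first iteration of the inner while-loop still land in the set $\hat{F}_e$ computed relative to the correct reference basis — is where the care is needed; once that is in place, the arithmetic $\frac{2 + |Q_e'|}{1 + |Q_e'|} \le 2$ closes the bound just as before. A secondary point to handle cleanly is the existence of $\hat{B}_s^*$, which is deferred to \Cref{app:applications} and which I would establish by a matroid exchange argument showing that $\hat{B}_s$, being the union of (parts of) correct fundamental circuits, can be extended to a minimum-verification-cost MWB.
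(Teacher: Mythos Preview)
Your plan matches the paper's proof: the paper adds exactly the fallback step you describe (query all of $C_e$ once the online-detectable event $w_e < w_f$ for some already-queried $f$ occurs), proves correctness by explicitly constructing the verified MWB via deletions and contractions split by whether that fallback fired, and proves the bound by the correct-vs.-incorrect circuit split you outline, charging the former two-to-one against the certificate $Q'$ of $\hat{B}_s^*$ via \Cref{lemma: cert contains for Le Ue} and bounding each of the latter by $c_{\max}$. The subtlety you flag in step~(3) is the right one, and the paper resolves it with the one-line observation that every correct fundamental circuit $C_e$ of $\hat{B}$ is also a fundamental circuit with respect to $\hat{B}_s^*$ (since $C_e - e \subseteq \hat{B}_s \subseteq \hat{B}_s^*$ and $C_e$ is a circuit, forcing $e \notin \hat{B}_s^*$).
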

	
\bibliography{main}	

@article{MegowMS17,
  author       = {Nicole Megow and
                  Julie Mei{\ss}ner and
                  Martin Skutella},
  title        = {Randomization Helps Computing a Minimum Spanning Tree under Uncertainty},
  journal      = {{SIAM} J. Comput.},
  volume       = {46},
  number       = {4},
  pages        = {1217--1240},
  year         = {2017}
}

@incollection{erlebach_verification,
  author = {Erlebach, Thomas and Hoffmann, Michael},
  title = {Minimum Spanning Tree Verification Under Uncertainty},
  booktitle = {Graph-Theoretic Concepts in Computer Science},
  editor = {Kratsch, Dieter and Todinca, Ioan},
  series = {Lecture Notes in Computer Science},
  volume = {8747},
  pages = {164--175},
  year = {2014},
  publisher = {Springer International Publishing},
  address = {Cham},
  doi = {10.1007/978-3-319-12340-0_14}
}

@PhdThesis{meissner18querythesis,
	author = {J. Mei{\ss}ner},
	title  = {Uncertainty Exploration: Algorithms, Competitive Analysis, and Computational Experiments},
	school = {Technischen Universit\"{a}t Berlin},
	year   = {2018},
	doi    = {10.14279/depositonce-7327},
}

@inproceedings{ErlebachLMS22,
  author       = {Thomas Erlebach and
                  Murilo Santos de Lima and
                  Nicole Megow and
                  Jens Schl{\"{o}}ter},
  title        = {Learning-Augmented Query Policies for Minimum Spanning Tree with Uncertainty},
  booktitle    = {{ESA}},
  series       = {LIPIcs},
  volume       = {244},
  pages        = {49:1--49:18},
  publisher    = {Schloss Dagstuhl - Leibniz-Zentrum f{\"{u}}r Informatik},
  year         = {2022}
}

@inproceedings{MerinoS19,
  author       = {Arturo Merino and
                  Jos{\'{e}} A. Soto},
  title        = {The Minimum Cost Query Problem on Matroids with Uncertainty Areas},
  booktitle    = {{ICALP}},
  series       = {LIPIcs},
  volume       = {132},
  pages        = {83:1--83:14},
  publisher    = {Schloss Dagstuhl - Leibniz-Zentrum f{\"{u}}r Informatik},
  year         = {2019},
  url          = {https://doi.org/10.4230/LIPIcs.ICALP.2019.83},
  doi          = {10.4230/LIPICS.ICALP.2019.83}
}

@inproceedings{BampisDEdLMS21,
	 author       = {Evripidis Bampis and
	Christoph D{\"{u}}rr and
	Thomas Erlebach and
	Murilo Santos de Lima and
	Nicole Megow and
	Jens Schl{\"{o}}ter},
	title        = {Orienting (Hyper)graphs Under Explorable Stochastic Uncertainty},
	booktitle    = {{ESA}},
	series       = {LIPIcs},
	volume       = {204},
	pages        = {10:1--10:18},
	publisher    = {Schloss Dagstuhl - Leibniz-Zentrum f{\"{u}}r Informatik},
	year         = {2021},
	  doi          = {10.4230/LIPIcs.ESA.2021.10},
}

@inproceedings{MegowS23,
	author       = {Nicole Megow and
	Jens Schl{\"{o}}ter},
	title        = {Set Selection Under Explorable Stochastic Uncertainty via Covering
	Techniques},
	booktitle    = {{IPCO}},
	series       = {Lecture Notes in Computer Science},
	volume       = {13904},
	pages        = {319--333},
	publisher    = {Springer},
	year         = {2023},
	  doi          = {10.1007/978-3-031-32726-1\_23},
}

@inproceedings{ChaplickHLT20,
	author    = {Steven Chaplick and
	Magn{\'{u}}s M. Halld{\'{o}}rsson and
	Murilo Santos de Lima and
	Tigran Tonoyan},
	title     = {Query Minimization Under Stochastic Uncertainty},
	booktitle = {{LATIN}},
	series    = {Lecture Notes in Computer Science},
	volume    = {12118},
	pages     = {181--193},
	publisher = {Springer},
	year      = {2020}
}

@inproceedings{ErlebachLMS23,
  author       = {Thomas Erlebach and
                  Murilo S. de Lima and
                  Nicole Megow and
                  Jens Schl{\"{o}}ter},
  title        = {Sorting and Hypergraph Orientation under Uncertainty with Predictions},
  booktitle    = {{IJCAI}},
  pages        = {5577--5585},
  publisher    = {ijcai.org},
  year         = {2023}
}

@article{Erlebach0K16,
  author       = {Thomas Erlebach and
                  Michael Hoffmann and
                  Frank Kammer},
  title        = {Query-competitive algorithms for cheapest set problems under uncertainty},
  journal      = {Theor. Comput. Sci.},
  volume       = {613},
  pages        = {51--64},
  year         = {2016},
  doi          = {10.1016/J.TCS.2015.11.025},
}

@inproceedings{CharalambousH13,
  author       = {George Charalambous and
                  Michael Hoffmann},
  title        = {Verification Problem of Maximal Points under Uncertainty},
  booktitle    = {{IWOCA} 2013},
  series       = {Lecture Notes in Computer Science},
  volume       = {8288},
  pages        = {94--105},
  publisher    = {Springer},
  year         = {2013},
  doi          = {10.1007/978-3-642-45278-9\_9},
}

@Article{feder_median,
	author  = {T. Feder and R. Motwani and R. Panigrahy and C. Olston and J. Widom},
	title   = {Computing the median with uncertainty},
	journal = {SIAM Journal on Computing},
	year    = {2003},
	volume  = {32},
	number  = {2},
	pages   = {538--547},
	doi     = {10.1137/S0097539701395668}
}

@article{HalldorssonL21sorting,
  author       = {Magn{\'{u}}s M. Halld{\'{o}}rsson and
                  Murilo Santos de Lima},
  title        = {Query-competitive sorting with uncertainty},
  journal      = {Theor. Comput. Sci.},
  volume       = {867},
  pages        = {50--67},
  year         = {2021},
  url          = {https://doi.org/10.1016/j.tcs.2021.03.021},
  doi          = {10.1016/J.TCS.2021.03.021},
}

@inproceedings{kahan,
  author = {Kahan, Simon},
  title = {A Model for Data in Motion},
  booktitle = {Proceedings of the Twenty-Third Annual ACM Symposium on Theory of Computing},
  pages = {265--277},
  year = {1991},
  publisher = {Association for Computing Machinery},
  address = {New York, NY, USA},
  doi = {10.1145/103418.103449}
}

@InProceedings{erlebach08steiner_uncertainty,
	author    = {Michael Hoffmann and
	Thomas Erlebach and
	Danny Krizanc and
	Mat{\'{u}}s Mihal{\'{a}}k and
	Rajeev Raman},
	title     = {Computing Minimum Spanning Trees with Uncertainty},
	booktitle = {{STACS}},
	series    = {LIPIcs},
	volume    = {1},
	pages     = {277--288},
	publisher = {Schloss Dagstuhl - Leibniz-Zentrum f{\"{u}}r Informatik, Germany},
	year      = {2008},
	url       = {https://arxiv.org/abs/0802.2855},
}

@book{Schrijver2003book,
  title={Combinatorial Optimization $-$ Polyhedra and Efficiency},
  author={Schrijver, Alexander},
  year={2003},
  publisher={Springer}
}

@book{lawler2001combinatorial,
  title={Combinatorial optimization: networks and matroids},
  author={Lawler, Eugene L},
  year={2001},
  publisher={Courier Corporation}
}

@Article{erlebach15querysurvey,
  author  = {T. Erlebach and M. Hoffmann},
  title   = {Query-competitive algorithms for computing with uncertainty},
  journal = {Bulletin of the EATCS},
  year    = {2015},
  volume  = {116},
  pages   = {22--39},
  url     = {http://bulletin.eatcs.org/index.php/beatcs/article/view/335},
}

@article{LykourisV21,
  author       = {Thodoris Lykouris and
                  Sergei Vassilvitskii},
  title        = {Competitive Caching with Machine Learned Advice},
  journal      = {J. {ACM}},
  volume       = {68},
  number       = {4},
  pages        = {24:1--24:25},
  year         = {2021}
}

@inproceedings{PurohitSK18,
  author       = {Manish Purohit and
                  Zoya Svitkina and
                  Ravi Kumar},
  title        = {Improving Online Algorithms via {ML} Predictions},
  booktitle    = {NeurIPS},
  pages        = {9684--9693},
  year         = {2018}
}

@article{MathwieserC24,
  author       = {Corinna Mathwieser and
                  Eranda {\c{C}}ela},
  title        = {Special cases of the minimum spanning tree problem under explorable
                  edge and vertex uncertainty},
  journal      = {Networks},
  volume       = {83},
  number       = {3},
  pages        = {587--604},
  year         = {2024}
}

@inproceedings{schloter25,
  author       = {Jens Schl{\"{o}}ter},
  title        = {On the Complexity of Knapsack Under Explorable Uncertainty: Hardness
                  and Algorithms},
  booktitle    = {33rd Annual European Symposium on Algorithms},
  series       = {LIPIcs},
  volume       = {351},
  pages        = {6:1--6:15},
  publisher    = {Schloss Dagstuhl - Leibniz-Zentrum f{\"{u}}r Informatik},
  year         = {2025},
  doi          = {10.4230/LIPICS.ESA.2025.6},
}

@article{GuptaSS16,
  author       = {Manoj Gupta and
                  Yogish Sabharwal and
                  Sandeep Sen},
  title        = {The Update Complexity of Selection and Related Problems},
  journal      = {Theory Comput. Syst.},
  volume       = {59},
  number       = {1},
  pages        = {112--132},
  year         = {2016}
}

\newpage 
\appendix

\section{The power of adaptivity}
\label{app:further:discussion}

Given an uncertainty matroid $\mathcal{M} = (E, \mathcal{I}, A)$, we say that a set $Q \subseteq E$ is a \emph{non-adaptive} certificate if, for every weight vector $w'$ with $w'_e \in A_e$ for all $e \in E$, $Q$ is a certificate for the weighted uncertainty matroid $\mathcal{M}' = (E, \mathcal{I}, A, w')$. Note that Merino and Soto~\cite{MerinoS19} consider the problem of computing a non-adaptive certificate of minimum query cost for a given uncertainty matroid $\mathcal{M} = (E, \mathcal{I}, A)$. We can observe that the gap between the cost of a minimum-cost non-adaptive certificate for $\mathcal{M} = (E, \mathcal{I}, A)$ can be arbitrarily large compared to the cost of a minimum-cost certificate for the weighted uncertainty matroid $\mathcal{M}' = (E, \mathcal{I}, A, w)$ for a single weight vector $w$.

\begin{observation}
    Fix any $\rho \ge 1$. There exists an uncertainty matroid $\mathcal{M} = (E, \mathcal{I}, A)$, a weight vector $w$ and a query cost function $c$ such that $$\frac{c(Q^*_N)}{c(Q^*)} \ge \frac{\sum_{e \in E} c_e}{\min_{e \in E} c_e} \ge \rho,$$ where $Q^*_N$ is a minimum-cost non-adaptive certificate for $\mathcal{M} = (E, \mathcal{I}, A)$ and $Q^*$ is a minimum-cost certificate for  $\mathcal{M}' = (E, \mathcal{I}, A, w)$. For $\rho = n$, the statement holds for uniform query costs.
\end{observation}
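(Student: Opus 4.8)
The plan is to exhibit a simple family of instances, parameterized by the number of elements, on which knowing the true weight vector $w$ collapses the minimum-cost certificate to a single cheap element, while any \emph{non-adaptive} certificate is forced to query all of $E$. Concretely, fix $n=\max\{2,\lceil \rho\rceil\}$, let $M=(E,\cI)$ be the rank-$1$ uniform matroid on $E=\{e_1,\dots,e_n\}$ (equivalently, the graphic matroid of $n$ parallel edges joining two vertices), so the bases of $M$ are exactly the singletons; let $A_{e_i}=[0,1]$ for every $i$, a \emph{closed} interval, and let all query costs be uniform, $c_{e_i}=1$. Define $w$ by $w_{e_1}=0$ and $w_{e_i}=1$ for $i\ge 2$; then $\{e_1\}$ is the unique MWB of $(E,\cI,w)$, and $e_1$ is an extreme-case element with $w_{e_1}=L_{e_1}$.

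First I would verify that $Q=\{e_1\}$ is a minimum-cost certificate for $(E,\cI,A,w)$. For any non-basis element $f=e_i$ with $i\ge 2$, the fundamental circuit of $f$ with respect to $\{e_1\}$ is $\{e_1,f\}$, and $U_{e_1}(Q)=w_{e_1}=0=L_f=L_f(Q)$, so the condition of~\Cref{coro:certificate:characterization:circuits} holds; hence $Q$ verifies $\{e_1\}$ and $c(Q^*)\le c_{e_1}=1$. (It is also easy to see $Q^*\neq\emptyset$: for any candidate singleton $\{e_j\}$ the assignment $w^*_{e_j}=1$ and $w^*_{e_k}=0$ for some $k\neq j$ shows $\{e_j\}$ need not be an MWB, so $\emptyset$ is not a certificate; thus $c(Q^*)=1=\min_{e\in E}c_e$.)

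Next I would show that every non-adaptive certificate equals $E$. Suppose $Q\subsetneq E$ and fix some $f\notin Q$. Take the ``flat'' weight vector $w'$ with $w'_{e_i}=1/2$ for all $i$; for $w'$ every singleton is an MWB, so if $Q$ were a certificate for $(E,\cI,A,w')$ there would be a fixed basis $\{e_0\}$ that remains an MWB under \emph{every} assignment $w^*$ with $w^*_e=1/2$ for $e\in Q$ and $w^*_e\in[0,1]$ for $e\notin Q$. If $e_0\notin Q$, set $w^*_{e_0}=1$ and make the weight of some other element strictly smaller than $1$ (it is $1/2$ if queried, or we set it to $0$ if not; such an element exists since $n\ge 2$), so $\{e_0\}$ is not an MWB. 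If $e_0\in Q$, set $w^*_f=0<1/2=w^*_{e_0}$, so again $\{e_0\}$ is not an MWB. Either way $Q$ fails to be a certificate for $w'$, so it is not a non-adaptive certificate; therefore $Q^*_N=E$ and $c(Q^*_N)=\sum_{e\in E}c_e=n$. Combining, $c(Q^*_N)/c(Q^*)= n = \sum_{e\in E}c_e/\min_{e\in E}c_e\ge\rho$, with uniform costs on exactly $n$ elements for the ``$\rho=n$'' statement.

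\textbf{Main obstacle.} The construction is elementary, so the only point needing care is the non-adaptive lower bound, and in particular the choice of the flat vector $w'\equiv 1/2$: it is precisely what forces a non-adaptive certificate to simultaneously hedge against every element becoming the unique minimum \emph{and} every other element staying large, so that no proper subset of $E$ suffices. The conceptual crux --- that the cheap adaptive certificate works only because $w_{e_1}$ sits at the boundary $L_{e_1}$, which is possible only since $A_{e_1}=[0,1]$ is closed --- is exactly the open-versus-closed separation highlighted in the main body, and I would state it explicitly rather than leave it implicit.
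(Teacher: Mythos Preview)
Your proof is correct and follows essentially the same approach as the paper: both use the rank-$1$ uniform matroid with a distinguished element whose true weight sits at its lower limit, so that querying this single element already certifies the MWB, while any non-adaptive certificate must be all of~$E$. The differences are cosmetic: the paper uses two distinct interval families ($A_{e_1}=[0,2]$, $A_{e_i}=[1,3]$) and non-uniform costs calibrated to hit $\rho$ exactly, whereas you take identical intervals $[0,1]$, uniform costs, and $n=\lceil\rho\rceil$; you also spell out the non-adaptive lower bound via the flat assignment $w'\equiv 1/2$, which the paper merely asserts.
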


\begin{proof}
    Consider the uniform matroid $E = (E, \mathcal{I})$ with cardinality constraint $k = 1$, that is, $\mathcal{I} = \{ S \subseteq E \colon |S| \leq 1\}$. Let $E = \{e_1, \ldots, e_n\}$. For the associated sets we define $A_{e_1} = [0,2]$ and $A_{e_i} = [1,3]$ for all $i \neq 1$. The costs are given by $c_{e_1} = 1$, while for the remaining elements we set $c_{e_i} = \tfrac{\rho - 1}{n-1}$ whenever $i \neq 1$.
    
    The only non-adaptive certificate for $\mathcal{M} = (E, \mathcal{I}, A)$ is $Q^*_N = E$, which implies $c(Q^*_N) = \rho$.

    Consider the weights $w$ with $w_{e_1} = 0$ and $w_{e_i} = 2$ for all $i\not=1$. Then, $Q^* = \{e_1\}$ is the minimum-cost certificate of $\mathcal{M}' = (E, \mathcal{I}, A, w)$ and verifies the MWB $B = \{e_1\}$.

    We can conclude with $\frac{c(Q^*_N)}{c(Q^*)} = \rho$. Note that if $\rho = n$, then $c_e = 1$ for all $e \in E$.
\end{proof}

\section{Missing proofs from \Cref{sec: prelim}}
\label{app:prelim}

\ObsMatroidCuts*

\begin{proof}
    For the sake of contradiction, assume that any $e'\in C-e$ is in $\mspan[M]{B-e}$. 
    Consider the sets~$B$ and $C-e$, both of which are independent sets. The set  $B$ is independent because it is a basis, while $C - e$  is independent since $C$ is a circuit and thus minimally dependent.
    Since $C-e$ is independent and $C-e \subseteq \mspan[M]{B-e}$, we have $ |C-e| = r(C-e) \le r(B-e) < |B|$. By the augmentation property of matroids, it must be possible to augment $C - e$ to a basis $B'$ by adding only elements of $B$. Since~$C$ is dependent, we must have $e \not\in B'$. However, then we have $B' \subseteq \mspan[M]{B - e}$ since $B - e \subseteq \mspan[M]{B - e}$ and $C - e \subseteq \mspan[M]{B - e}$. Thus, $r(B') \le r(B - e) < r(E)$, which is a contradiction to $B'$ being a basis.
\end{proof}

\coroCharacterization*

\begin{proof}
    We show that $U_e(Q) \le L_f(Q)$ for all $e \in B$ and $f \in (E\setminus \mspan[M]{B-e})-e$ holds if and only if $U_e(Q) \le L_f(Q)$ for all $f \not\in B$ and $e \in C_f-f$. Then,~\Cref{lem:certificate:characterization:cuts} implies the corollary.

    First, assume that $U_e(Q) \le L_f(Q)$ for all $e \in B$ and $f \in (E\setminus \mspan[M]{B-e})-e$. Consider an arbitrary $f \not\in B$ and $e \in C_f - f$. Then, $f \in (E\setminus \mspan[M]{B-e})-e$. Our assumption implies $U_e(Q) \le L_f(Q)$. Since this holds for arbitrary $f \not\in B$ and $e \in C_f - f$, we can conclude that  $U_e(Q) \le L_f(Q)$ for all $f \not\in B$ and $e \in C_f-f$.

    Next, assume that $U_e(Q) \le L_f(Q)$ for all $f \not\in B$ and $e \in C_f-f$. Consider an arbitrary $e \in B$ and $f \in (E\setminus \mspan[M]{B-e})-e$. Then, $e \in C_f - f$. Our assumption implies $U_e(Q) \le L_f(Q)$. 
    Since this holds for arbitrary $e \in B$ and $f \in (E\setminus \mspan[M]{B-e})-e$, we can conclude that  $U_e(Q) \le L_f(Q)$ for all $e \in B$ and $f \in (E\setminus \mspan[M]{B-e})-e$.
\end{proof}

\lemLowerLimitExchange*

\begin{proof}
    During this proof, we use $X_f := E \setminus \mspan[M]{B-f}$ and $X'_f := E \setminus \mspan[M]{B'-f}$.

    To show that $Q'$ is a certificate for $B'$, we show for each $f \in B'$ that $U_f(Q') \le L_{f'}(Q')$ holds for each $f' \in X'_f - f$. By~\Cref{lem:certificate:characterization:cuts}, this implies that  $Q'$ is a certificate for $B'$.
    We distinguish between the two cases (1) $f=e$ and (2) $f \not= e$.

    \textbf{Case (1):} Assume $f = e$ and consider $X'_f = X'_e$. Since $B' = B + e - e'$, we have $X'_e = X_{e'}$. By assumption that $Q$ is a certificate for $B$ with $e' \in B$, we must have $w_{e'} \le U_{e'}(Q) \le L_{f'}(Q)$ for all $f' \in X_{e'} - e'$ by~\Cref{lem:certificate:characterization:cuts}.
    Since $X'_e = X_{e'}$, $e \in Q'$ and $w_e = w_{e'}=L_{e'}$, this implies $w_e = U_e(Q') \le L_{f'}(Q')$ for all $f' \in X_{e'}$.

    \textbf{Case (2):} Assume $f \not= e$ and consider  $X'_f$. We distinguish between $X'_f - f \subseteq E\setminus B$ and $X'_f - f \not\subseteq E\setminus B$.

    \begin{itemize}
        \item If $X'_f - f \subseteq E\setminus B$, then $X_f = X'_f$ and $e,e' \not\in X'_f$. In this case, $U_f(Q') = U_f(Q) \le L_{f'}(Q) = L_{f'}(Q')$  holds for all $f' \in X'_f-f$ by definition of $Q'$ and by our assumption that $Q$ is a certificate for $B$ and~\Cref{lem:certificate:characterization:cuts}.
        \item If  $X_f' -f \not\subseteq E\setminus B$, then we must have $e' \in X_f'$. 
        Consequently, $f \in C_{e} = C'_{e'}$, where $C'_{e'}$ is the fundamental circuit of $e$ with respect to $B'$. 
        Note that this also implies $e \in X_f$.
        Since $Q$ is a certificate for $B$ with $e\not\in B$, we must have $w_{e'} = L_{e'}(Q') = L_{e}(Q) \ge U_f(Q) = U_f(Q')$ by~\Cref{lem:certificate:characterization:cuts}. Note that $w_{e'} = L_{e'}(Q') = L_{e}(Q)$ holds by definition because of $w_e=w_{e'}=L_e=L_{e'}$.

        Next, consider an arbitrary $f' \in X_f' - f - e'$ and the circuit $C_{f'}$. 
        Since $C_{f'}$ is dependent but $C_{f'} - f'$ is independent, the circuit $C_{f'}$ must contain at least one element $g \not\in \mspan[M]{B'-f}-f'$. Otherwise $C_{f'} - f' \subseteq B'-f$ and $f' \in X_f'$ would contradict $C_{f'}$ being dependent.
        The only members of $C_{f'} - f' \subseteq B - f'$ that are not in $\mspan[M]{B'-f}-f'$ are $e'$ and $f$.
        Thus, $e' \in C_{f'}$ or $f \in C_{f'}$.

        If $f \in C_{f'}$, then the definition of $Q'$ and our assumption that $Q$ is a certificate for $B$ with $f \in B$ and $f' \not\in B$ imply $U_f(Q') = U_f(Q) \le L_{f'}(Q) = L_{f'}(Q')$.

        If $e' \in C_{f'}$, then the definition of $Q'$ and our assumption that $Q$ is a certificate for $B$ with $e' \in B$ and $f' \not\in B$ imply $w_{e'} \le  U_{e'}(Q) \le L_{f'}(Q) = L_{f'}(Q')$. As we already argued above that $U_f(Q') \le w_{e'}$, we can conclude with $U_f(Q') \le L_{f'}(Q')$. \qedhere
    \end{itemize}
\end{proof}

\lemTrivialExchange*

\begin{proof}
    Consider an arbitrary weight assignment $w'$ that is consistent with $Q$. By assumption $w'_e= w'_{e'}=w_e= w_{e'}$ . Since $Q$ is a certificate that verifies $B$, we have that $B$ is an MWB for the weights $w'$.  Using that $B' = B -e + e'$ is independent by assumption and $w'_e= w'_{e'}$, this implies that $B'$ is an MWB for weights $w'$. As this argument holds for every weight assignment that is consistent with $Q$, we can conclude that $Q$ verifies $B'$.
\end{proof}

\section{Missing proofs from \Cref{sec: identify MWB}}
\label{app:identify}

\obsUniqueMin*

\begin{proof}
    We show that $e$ is not part of any MWB for $M'$, which implies that $M[D+e,K]$ is a compatible minor.    For the sake of contradiction, assume that there is an MWB $B$ of $M[D,K]=:M'$ with $e \in B$.
    Since $C$ is a circuit with $e \in C$, 
    there is an element $e' \in C-e$ which is not in  $\mspan[M']{B-e}$ due to \Cref{obs:matroid:cuts}. By assumption, we have $w_{e'} < w_e$ and $w(B') < w(B)$ for $B' = B - e + e'$.
    Furthermore, $B'$ is a basis by choice of $e' \not\in \mspan[M']{B-e}$ and therefore $w(B') < w(B)$ is a contradiction to $B$ being an MWB of $M'$. Hence, $e$ is not part of any MWB of $M[D,K]$ and thus of $\cM$.  
\end{proof}

\obsUniqueMax*

\begin{proof}
    Let $B$ be a basis of 
    $M'$ with $e \in B$ having the unique minimum weight in $E(M')\setminus \mspan[M']{B -e}$. 
    We show that $e$ is part of every MWB for $M'$, which implies that $M[D,K+e]$ is a compatible minor.
    For the sake of contradiction, assume that there is an MWB $B'$ of $M'$ with $e \not\in B'$.

    Let $C$ be the fundamental circuit of $e$ with respect to $B'$.
    By \Cref{obs:matroid:cuts}, $C-e$ contains an element $e'$ with $e' \notin \mspan[M']{B -e}$. By assumption, we have $w_e < w_{e'}$. However, this implies $w(B') > w(B'')$ for the basis $B'' = B' - e' + e$, a contradiction to $B'$ being an MWB of $M'$. Hence, any MWB contains $e$ and  $M[D,K+e]$ is a compatible minor of $\cM$.\qedhere
\end{proof}

\lemCircuitNontrivial*

\begin{proof}
    Let $Q^*$ denote a minimum-cost certificate that is compatible with $M'$. Then, this certificate must also verify some MWB $B'$ for $M'$.

    To prove the lemma, we have to show that there exists a minimum-cost certificate that verifies a basis $B$ of $M$ with $K \subseteq B$ and $D+e \cap B = \emptyset$.
    If $e \not\in B'$, then this directly follows for $Q^*$ and the full MWB $B = K \cup B'$ of $\cM$.
    Thus, assume $e \in B'$.  We then have $e \notin \mspan[M']{B' -e}$ because $B'$ is independent.

    As $C$ is a circuit that contains $e$,
    there exists an edge $e' \in C- e$ with $e' \notin  \mspan[M']{B'-e}$.
    By assumption $e$ has maximum weight in $C$ and, thus, $w_e \ge w_{e'}$. Since $B'-e+e'$ is a basis by choice of $e' \notin  \mspan[M']{B'-e}$, we must have $w_e = w_{e'}$, as otherwise we would arrive at a contradiction to $B'$ being an MWB. 
    
    We distinguish between the following cases:
    \begin{enumerate}
        \item If $e, e' \in Q^*$, then $Q^*$ also verifies that $B = K \cup (B' -e +e')$ is an MWB for $M$ by~\Cref{lem:trivial:exchange}.

        \item If $e \in Q^*$ and $e' \not\in Q^*$, then we must have $w_e \le L_{e'}$. Otherwise, $Q^*$ would not verify that $e$ has minimum-weight in $E(M') \setminus \mspan[M']{B' - e}$. Since $L_{e'} \le w_{e'} = w_e = L_e$, we get $L_e = L_{e'}$. 
        By~\Cref{lem:extreme:lowerlimit:exchange}, this implies that  $Q = Q^*  - e +e'$ verifies the MWB $B = K \cup (B' - e + e')$.
        Note that $e'$ can be trivial, in which case we can just use $Q = Q^*  - e$ instead.

        If $e'$ is trivial, then we clearly have $c(Q) \le c(Q^*)$. Otherwise, i.e., if $e'$ is non-trivial, we have $e' \in E_{w_e}^L$ as we already argued that  that $L_{e'} = w_{e'} = w_e$. By assumption (iv) of the lemma, this implies $c_e \ge c_{e'}$ and, thus, $c(Q) \le c(Q^*)$.

        \item If $e \not\in Q^*$, then we must have $w_{e'} \ge U_e$. Otherwise $Q^*$ cannot verify that $e$ has minimum weight among elements not in $\mspan[M']{B' - e}$. However, this implies $w_{e'} = L_e \ge U_e$, which can only be the case if $e$ is trivial; a contradiction to the requirements of the lemma. \qedhere
    \end{enumerate}
\end{proof}

\lemCutTrivial*

\begin{proof}
    Let $Q^*$ denote a minimum-cost certificate that is compatible with $D$ and $K$. Then, this certificate must also verify some MWB $B'$ for $M[D,K]$.

     To prove the lemma, we have to show that there exists a minimum-cost certificate that verifies a basis $B''$ of $M$ with $K \subseteq B''$ and $D+e \cap B'' = \emptyset$.
    If $e \in B'$, then the lemma directly follows for $Q^*$ and the full MWB $B'' = K \cup B'$. Thus, assume $e \not\in B'$ and let $C$ denote the fundamental circuit  of $e$ with respect to $B'$.

    Since $C$ is a circuit that contains $e$, it must contain at least one edge $e' \in (E(M') \setminus \mspan[M']{B-e}) -e $ by~\Cref{obs:matroid:cuts}. Also, $e' \in B'-e$ because $C-e \subseteq B'$. By assumption that $e$ has minimum weight among elements not in $\mspan[M']{B-e}$ and that $B'$ is an MWB, we have $w_e = w_{e'}$.
    We distinguish the following cases:
    \begin{enumerate}
        \item If $e'\in Q^*$, then $Q^*$ also verifies the MWB $B'' = K \cup (B' - e' +e)$ by~\Cref{lem:trivial:exchange}.
        \item If $e' \not\in Q^*$, then we must have $U_{e'}\le w_e$. Otherwise, $Q^*$ would not verify that $e$ has maximum weight in $C$. However, this implies $w_{e'} = U_{e'}$. By \Cref{asm:matroids:no:non-trivial} and as $e'$ is minimal among elements in $E(M') \setminus \mspan[M']{B-e}$, this means that $e'$ must be trivial. Since both $e$ and $e'$ are trivial with  $w_e = w_{e'}$, $Q^*$ also verifies the MWB $B'' = K \cup (B' - e' + e)$ by~\Cref{lem:trivial:exchange}.
    \end{enumerate}

    In both cases, there is a certificate $Q$ with $c(Q) = c(Q^*)$ that is compatible with $D$ and $K$ and verifies an MWB $B$ of $M[D,K]$ with $e \in B$.
\end{proof}

\lemCircuitTrivial*

\begin{proof}
    Let $Q^*$ denote a minimum certificate that is compatible with $D$ and $K$. Then, this certificate must also verify some MWB $B'$ for $M[D,K]=:M'$.

     To prove the lemma, we have to show that there exists a minimum-cost certificate that verifies a basis $B$ of $M$ with $K \subseteq B$ and $D+e \cap B = \emptyset$.
    If $e \not\in B'$, then this directly follows for $Q^*$ and the full MWB $B = K \cup B'$.
    Thus, assume $e \in B'$.

    Since $C$ is a circuit that contains $e$, it must also contain some $e'\neq e$ in $B'-e$.
    By assumption that $e$ has maximum weight in $C$ and that $B'$ is an MWB, we have $w_e = w_{e'}$.
    We distinguish the following cases:
    \begin{enumerate}
        \item If $e'\in Q^*$, then $Q^*$ also verifies the MWB $B = K \cup (B' -e+e')$ by~\Cref{lem:trivial:exchange}.
        \item If $e' \not\in Q^*$, then we must have $L_{e'}\ge w_{e}$. Otherwise, $Q^*$ would not verify that $e$ has minimum weight in $E(M')\setminus \mspan[M']{B' -e}$. However, this implies $w_{e'} = L_{e'}$. By \Cref{asm:matroids:no:non-trivial} and as $e'$ is maximal on $C$, this means that $e'$ must be trivial. Since both $e$ and $e'$ are trivial with  $w_e = w_{e'}$, $Q^*$ also verifies the MWB $B = K \cup (B' -e + e')$ by~\Cref{lem:trivial:exchange}.
    \end{enumerate}
    In  both cases, there is a certificate $Q$ with $c(Q) = c(Q^*)$ that is compatible with $D$ and $K$ and verifies an MWB $B$ with $e \not\in B$.
\end{proof}

\coroExtremeCaseInstances*

\begin{proof}
    If $A_e = \{L_e,U_e\}$ for all $e \in E(M)$, then all elements have either $w_e = L_e$ or $w_e = U_e$. By~\Cref{asm:matroids:no-extrem-elements}, the compatible minor $M' := M[D,K]$ does not contain such elements. Hence, $D \cup K = E(M)$ and $E(M')=\emptyset$. Since $M'$ is a compatible minor, this implies that $K$ is an MWB that can be verified with a minimum-cost certificate.

    For the second part of the corollary, assume that $A_e=\{L,U\}$ for all $e\in E(M)$ for some $L,U$ with $L < U$. Fix an MWB $B$.
    Starting with $K,D = \emptyset$, we argue that we can iteratively add all elements of $B$ to $K$ and add all elements of $E(M)\setminus B$ to $D$ while maintaining the invariant that $M[D,K]$ is a compatible minor. In the end, we have $K = B$ which implies that $K$ can be verified with a certificate of minimum cost.

    As long as there exists an element $e \in B\setminus K$ with $w_e = U$, we argue that we can add $e$ to $K$ while maintaining our invariant. To see this, observe that $e \in B$ and $K \subseteq B$ imply that there exists an MWB $B'$ of $M[D,K]$ with $e \in B'$. Hence, $e$ must have minimum weight in $E(M') \setminus \mspan[M']{B'-e}$. Using the assumption that $e$ is non-trivial and that the query costs are uniform, this allows us to apply~\Cref{lem:matroid:cut:non-trivial} to add $e$ to $K$ while maintaining the invariant.

    As long as there exists an element $e \in E(M)\setminus (B\cup D)$ with $w_e = L$, we argue that we can add $e$ to $D$ while maintaining our invariant. To see this, observe that $e \not\in B$ and $K \subseteq B$ imply that there exists an MWB $B'$ of $M[D,K]$ with $e \not\in B'$. Hence, $e$ must have maximum weight in the fundamental circuit $C$ of $e$ with respect to $B'$. Using the assumption that $e$ is non-trivial and that the query costs are uniform, this allows us to apply~\Cref{lem:matroid:circuit:non-trivial} to add $e$ to $D$ while maintaining the invariant.

    Afterwards, all remaining elements $e \in B\setminus K$ have $w_e = L$ and all remaining elements $e \in E(M) \setminus (B \cup D)$ have $w_e = U$. Hence, $B\setminus K$ is the unique MWB of $M[D,K]$, which allows us to contract all elements of $B\setminus K$ and delete all elements of $E(M) \setminus (B \cup D)$ while maintaining our invariant.
\end{proof}

\subsection{Proof of~\Cref{lem:unique:MWB}}

In this section, we prove~\Cref{lem:unique:MWB}, which we restate here for convenience. 

\lemUniqueMWB*

To show the lemma, we characterize a set of elements $F$ that must be part of every certificate for a compatible minor $M[D,K]$ that satisfies~\Cref{asm:matroids:no-extrem-elements}. 
Since $M[D,K]$ is a compatible minor, there exists a minimum-cost certificate $Q^*$ for $M$ that is compatible with $D$ and $K$ and, thus, is a certificate for $M[D,K]$. Hence, such a certificate $Q^*$ must contain $F$. Define $\mathcal{M}'$ to be a copy of $\mathcal{M}$ that replaces the uncertainty areas $A_e$ for $e \in F$ with $A'_e = [w_e]$. As $F_e\subseteq Q^*$, the certificate $Q^*$ verifies $B$ for $\mathcal{M}$ if and only if $Q^*\setminus F$ verifies $B$ for $\mathcal{M}'$. For our goal to add further elements to $D$ and $K$ while maintaining the compatible minor invariant, this allows us to assume without loss of generality that the elements of $F$ are trivial. 
We continue by characterizing elements that belong to the set $F$. To do so, we will use the concept of a \emph{lower limit basis}, which was introduced in~\cite{MegowMS17}. 
\begin{definition}[Lower limit basis~\cite{MegowMS17}]
    Given an instance that satisfies~\Cref{asm:matroids:no-extrem-elements}, define for each $e \in E$, $w_e^L = L_e + \epsilon$ for an infinitesimally small $\epsilon > 0$. The \emph{lower limit basis} is an MWB with respect to the weights~$w^L$. 
\end{definition}

The next lemma allows us to identify an element that has to be part of every certificate for $M'$. A version of this lemma for graphic matroids with a similar proof can be found in~\cite[Lemma 4]{ErlebachLMS22}. For the sake of completeness, we give here a proof for general matroids. 

\begin{restatable}{lemma}{lemMatroidMandatory}
        \label{lem:matroid:mandatory}
    Let $M':=M[D,K]$ be a compatible minor of the uncertainty matroid $\cM$ such that~\Cref{asm:matroids:no-extrem-elements} is satisfied.
    Consider any lower limit basis $B_L$ of $M'$ and let $\{f_1,\ldots, f_\ell\}$ denote the elements of $E(M') \setminus B_L$  ordered by non-decreasing lower limit, i.e., $j < i$ implies $L_j \le L_i$.
    Assume that there exists an index $j \in \{1,\ldots, \ell\}$ such that $w_{f_j} < U_e$ for some $e \in C - f_j$, where $C$ is the fundamental circuit 
    of $f_j$ with respect to $B_L$, and let $j$ be the smallest such index.
    Then, the element $e$ with maximum upper limit in $C - f_j$ is part of every certificate for $M[D,K]$.    
\end{restatable}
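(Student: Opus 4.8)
The plan is to proceed by contradiction. Suppose $Q$ is a certificate for $M' := M[D,K]$ with $e \notin Q$, and let $B$ be an MWB of $M'$ verified by $Q$. Since the elements of $C - f_j$ all lie in $B_L$, we have $e \in B_L$. The argument then has two parts. \emph{Part 1:} show that $e \notin B$. \emph{Part 2:} obtain a contradiction by confronting the fundamental circuit of $e$ with respect to $B$ with the fundamental cocircuit of $e$ with respect to $B_L$.

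For Part 1 I would use a weight-perturbation argument. Since $e \notin Q$, reassigning only the weight of $e$ keeps the assignment consistent with $Q$. Pick $w^*_e \in A_e$ strictly larger than $w_{f_j}$ and than every $w_g$ with $g \in C - f_j$; such a value exists because $\sup A_e = U_e > w_{f_j}$ by hypothesis and because $U_e \ge U_g > w_g$ for every $g \in C - f_j$, using that $e$ has maximum upper limit in $C - f_j$ together with \Cref{asm:matroids:no-extrem-elements}. Set $w^*_h := w_h$ for $h \ne e$. Under this consistent assignment $w^*$, the element $e$ is the unique maximum-weight element of the circuit $C$, so it lies in no MWB with respect to $w^*$: were $e$ in some MWB $B''$, then \Cref{obs:matroid:cuts} applied to $C$ and $e \in B'' \cap C$ would yield an $h \in C - e$ with $B'' - e + h$ a basis and $w^*_h < w^*_e$, a strictly improving exchange. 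Since $Q$ verifies $B$, the basis $B$ is an MWB with respect to $w^*$, hence $e \notin B$.

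For Part 2, let $C^* := C^B_e$ be the fundamental circuit of $e$ with respect to $B$ (well defined since $e \notin B$), and apply \Cref{obs:matroid:cuts} with the basis $B_L$, the circuit $C^*$, and $e \in B_L \cap C^*$: there is an $h \in C^* - e$ with $h \notin \mspan[M']{B_L - e}$. On the one hand, $h \in C^* - e \subseteq B$, and since $Q$ verifies $B$ with $e \notin B$, \Cref{coro:certificate:characterization:circuits} gives $U_h(Q) \le L_e(Q) = L_e$; combined with $U_h(Q) \ge w_h > L_h$ (\Cref{asm:matroids:no-extrem-elements}) this yields $L_h < L_e$. On the other hand, $h \notin \mspan[M']{B_L - e}$ and $h \ne e$ imply $h \notin B_L$ and that $B_L - e + h$ is a basis, i.e.\ $e \in C^{B_L}_h$; since $B_L$ has minimum total weight with respect to the lower limits (the additive $\epsilon$ in $w^L$ cancels in this exchange because all bases of $M'$ have the same cardinality), the exchange $B_L \mapsto B_L - e + h$ cannot decrease the weight, so $L_e \le L_h$. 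This contradicts $L_h < L_e$, completing the proof.

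The crux is Part 2: the certificate constraint pins every element of $C^B_e - e$ strictly below $L_e$, whereas optimality of the lower-limit basis keeps every element of the fundamental cocircuit of $e$ with respect to $B_L$ at lower limit at least $L_e$, while \Cref{obs:matroid:cuts} forces these two sets to share an element other than $e$ — which is impossible. Part 1 is the more routine step. I remark that the minimality of $j$ is not actually needed for this statement (the argument applies verbatim to any index with the stated property); it is assumed only because of how the lemma is later invoked, namely to extract mandatory elements one at a time.
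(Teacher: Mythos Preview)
Your proof is correct, and it takes a genuinely different route from the paper's.

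The paper argues as follows: it first uses the minimality of $j$ to delete $f_1,\ldots,f_{j-1}$ (each is the unique maximum on its fundamental circuit, hence in no MWB), which makes $f_j$ the non-basis element of minimum lower limit. It then reduces to showing that $Q=E(M')-e$ is not a certificate, and exhibits \emph{two} weight assignments consistent with $Q$: one with $w_e$ pushed down to $L_e+\epsilon$, under which $e$ is the unique minimum in its cocircuit (here the minimality of $j$ and $L_e\le L_{f_j}$ are used), so $e$ lies in every MWB; and one with $w_e$ pushed up to $U_e-\epsilon$, under which $e$ is the unique maximum on $C$, so $e$ lies in no MWB. Hence no basis can be verified by $Q$.

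Your argument instead fixes an arbitrary certificate $Q$ with $e\notin Q$ and the specific basis $B$ it verifies, uses a \emph{single} upward perturbation of $w_e$ to force $e\notin B$, and then plays the certificate inequality $U_h(Q)\le L_e$ on $C^B_e$ against the optimality of $B_L$ for the lower limits via \Cref{obs:matroid:cuts}. This is cleaner in two respects: it needs only one adversarial weight assignment, and---as you correctly note---it never invokes the minimality of $j$, so it actually proves the stronger statement that the conclusion holds for \emph{every} index $j$ with $w_{f_j}<U_e$ for some $e\in C-f_j$. The paper's two-assignment argument, on the other hand, is closer in spirit to the standard ``witness pair'' technique in the explorable-uncertainty literature and makes the obstruction (no basis whatsoever can be verified) more explicit. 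One minor point of presentation: in Part~1 you might say explicitly that, since $U_e=\sup A_e$ and $A_e$ is a finite union of intervals, there exist values of $A_e$ arbitrarily close to $U_e$, so a $w^*_e\in A_e$ exceeding $\max\{w_{f_j},\max_{g\in C-f_j-e}w_g\}<U_e$ indeed exists.
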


\begin{proof}
First, consider any $f_i$ with $i \in \{1,\ldots,j-1\}$. By choice of $j$, we have $w_{f_i} \ge U_e$ for all $e \in C-f_i$ for the fundamental circuit of $f_i$ with respect to $B_L$. 
As $M'$ satisfies~\Cref{asm:matroids:no-extrem-elements}, this implies $w_{f_i} > w_e$ for all $e \in C -f_i$.
By the proof of~\Cref{obs:matroid:unique:max}, this implies that $f_i$ is not part of any MWB of $M'$.
Since this holds for every $f_i$ with $i \in \{1,\ldots,j-1\}$, we can delete all elements of $\{f_1,\ldots,f_{j-1}\}$ without loss of generality. Each certificate for $M'$ must still be  a certificate after these deletions.

Finally, consider the element $f_j$ and the fundamental circuit $C$ of $f_j$ with respect to $B_L$ .  Let $e$ be the element of $C- f_j$ with the maximum upper limit. By assumption of the lemma, $w_{f_j} < U_e$. 
Furthermore, we have  $L_e \le L_{f_j}$ since $B_L$ is a lower limit basis.

For the sake of contradiction, assume that there is a certificate for $M'$ that does not contain $e$.  Then $Q = E(M') - e$ must also be a certificate for $M'$. 

Consider the weight assignment $w^1$ with $w^1_e = w_e^L = L_e + \epsilon$ and $w^1_{e'}= w_{e'}$ for all $e' \not= e$. In this weight assignment, we have $w^1_e \le w^L_{f_j} = L_{f_j} + \epsilon < w_{f_j} = w^1_{f_j}$, where the strict inequality uses that the instance satisfies~\Cref{asm:matroids:no-extrem-elements}. Using that $f_j$ has minimum lower limit in $E(M')\setminus B_L$ after we deleted  $\{f_1,\ldots,f_{j-1}\}$, this implies that $e$ has unique minimum weight in $E(M') \setminus \mspan[M']{B_L  - e}$ with respect to the weight assignment $w^1$. Thus, $e$ is part of every MWB for the weights~$w^1$.

Next, consider the weight assignment $w^2$ with $w^2_e = U_e - \epsilon$ and $w^2_{e'}= w_{e'}$ for all $e' \not= e$. In this weight assignment, we have $w^2_e > w^2_{f_j}$ by our assumption that $w_{f_j}< U_e$. 
Using that $e$ has maximum upper limit in  $C - f_j$, we also get $w^2_e > w^2_{e'}$ for every $e' \in C \setminus \{e,f_j\}$. This implies that $e$ has unique maximum weight in $C$ with respect to the weights $w^2$ and is not part of any MWB for the weights $w^2$.

By~\Cref{def: what it means for S to verify T}, this means that $Q$ does not verify a basis $B$ with $e \in B$ and it also does not verify a basis $B$ with $e \not\in B$. We can conclude that $Q$ does not verify any basis and, thus, is not a certificate for $M'$. Hence, every certificate for $M'$ must contain $e$.
\end{proof}

\Cref{lem:matroid:mandatory} implies the following corollary.

\begin{coro}
\label{coro:mandatory}
    Let $M'=M[D,K]$ be a compatible minor that satisfies~\Cref{asm:matroids:no-extrem-elements}.
    Let $e$ be an element that satisfies~\Cref{lem:matroid:mandatory}. Then, there exists a minimum-cost certificate $Q$ with $e \in Q$ that is compatible with $D$ and $K$.
\end{coro}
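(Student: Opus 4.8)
The plan is to observe that \emph{every} minimum-cost certificate for $\cM$ that is compatible with $D$ and $K$ already contains $e$, so that any such certificate directly witnesses the corollary. By hypothesis $M':=M[D,K]$ is a compatible minor of $\cM$, so by definition there is a query set $Q^*$ of cost $c^*$ (the minimum certificate cost of $\cM$) that verifies an MWB $B$ of $\cM$ with $K\subseteq B$ and $B\cap D=\emptyset$; that is, $Q^*$ is a minimum-cost certificate for $\cM$ that is compatible with $D$ and $K$. It therefore suffices to show $e\in Q^*$, and then $Q:=Q^*$ proves the statement.

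To this end I would first argue, exactly as in the discussion preceding \Cref{lem:matroid:mandatory}, that the restriction $Q^*\cap E(M')$ is a certificate for the weighted uncertainty matroid $M'$. Indeed, $B':=B\setminus K$ is an MWB of $M'$ (using $K\subseteq B$, $B\cap D=\emptyset$, and that the deletions in $M[D,K]$ do not decrease the rank, so standard minor properties apply), and for each $f\in E(M')\setminus B'$ the fundamental circuit of $f$ with respect to $B'$ in $M'$ is $C^B_f\setminus K$, which lies in $B'+f$ and avoids $D$. Since for an element $g\in E(M')$ the quantities $U_g(\cdot)$ and $L_g(\cdot)$ depend only on membership in the certificate, the inequalities that \Cref{coro:certificate:characterization:circuits} requires of $Q^*\cap E(M')$ and $B'$ in $M'$ form a subfamily of those it guarantees for $Q^*$ and $B$ in $\cM$; hence they hold, and $Q^*\cap E(M')$ verifies $B'$. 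Alternatively one can run the same bookkeeping through \Cref{lem:certificate:characterization:cuts}, using the identity $\mspan[M']{B'-g}=\mspan[M]{B-g}\cap E(M')$ for $g\in B'$.

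Now I would invoke \Cref{lem:matroid:mandatory}: since $M'$ satisfies \Cref{asm:matroids:no-extrem-elements} and $e$ is the element that this lemma singles out, $e$ lies in every certificate for $M[D,K]$; in particular $e\in Q^*\cap E(M')\subseteq Q^*$. Consequently $Q:=Q^*$ is a minimum-cost certificate for $\cM$ that contains $e$ and is compatible with $D$ and $K$, which is what the corollary asserts. The only nontrivial point is the minor-restriction step of the second paragraph — verifying that a certificate for $\cM$ compatible with $D$ and $K$ genuinely induces a certificate for the minor $M[D,K]$, so that \Cref{lem:matroid:mandatory} can be applied to it — but this is precisely the fact already noted in the text, and it follows from the standard behaviour of fundamental circuits and spans under contraction and deletion together with the characterization in \Cref{coro:certificate:characterization:circuits}.
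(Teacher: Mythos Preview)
Your proposal is correct and follows essentially the same approach as the paper: the paper states the corollary as an immediate consequence of \Cref{lem:matroid:mandatory}, relying on the observation (made in the text preceding the lemma) that a minimum-cost certificate compatible with $D$ and $K$ is also a certificate for the minor $M[D,K]$, which is precisely the point you spell out carefully in your second paragraph. Your write-up is in fact more detailed than the paper's, as it explicitly justifies the minor-restriction step via fundamental circuits and \Cref{coro:certificate:characterization:circuits}.
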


Next, we observe that instances $M[D,K]=:M'$ that satisfy~\Cref{asm:matroids:no-extrem-elements} and do not have any elements that satisfy~\Cref{lem:matroid:mandatory} have a unique MWB.
The proof of the following lemma is inspired by 
\cite[Lem.~5]{ErlebachLMS22} for graphic matroids.
 
\begin{lemma}
    \label{lem:unique}
    Let $M':= M[D,K]$ be a compatible minor that satisfies~\Cref{asm:matroids:no-extrem-elements} such that no element in $E(M')$ satisfies~\Cref{lem:matroid:mandatory}. Then, $M'$ has a unique MWB.
\end{lemma}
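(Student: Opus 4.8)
The plan is to prove that any lower limit basis $B_L$ of $M'$ is already the unique MWB of $M'$.

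First I would fix a lower limit basis $B_L$ of $M'$ and, following the setup of \Cref{lem:matroid:mandatory}, write $E(M')\setminus B_L = \{f_1,\dots,f_\ell\}$ and let $C_j$ denote the fundamental circuit of $f_j$ with respect to $B_L$. The key translation step is to observe that the hypothesis ``no element of $E(M')$ satisfies \Cref{lem:matroid:mandatory}'' is equivalent to: there is no index $j$ with $w_{f_j} < U_e$ for some $e \in C_j - f_j$; that is, $w_{f_j} \ge U_e$ for all $j \in \{1,\dots,\ell\}$ and all $e \in C_j - f_j$. Then I would invoke \Cref{asm:matroids:no-extrem-elements}: since no element of $M'$ is extreme, every $e \in C_j - f_j$ satisfies $w_e < U_e \le w_{f_j}$, so $f_j$ is the \emph{unique} maximum-weight element of the circuit $C_j$.

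Next I would argue, exactly as in the proof of \Cref{obs:matroid:unique:max} (this is also the content of the first paragraph of the proof of \Cref{lem:matroid:mandatory}), that a matroid element which is the unique maximum-weight element of some circuit lies in no MWB: if a basis $B$ of $M'$ contained $f_j$, then \Cref{obs:matroid:cuts} applied to $C_j$ and $f_j \in B \cap C_j$ yields some $e \in C_j - f_j$ with $e \notin \mspan[M']{B-f_j}$, so $B - f_j + e$ is a basis with $w(B-f_j+e) = w(B) - w_{f_j} + w_e < w(B)$, contradicting minimality of $B$. Hence every $f_j$, for $j \in \{1,\dots,\ell\}$, belongs to no MWB of $M'$.

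Finally I would conclude as follows: since no element of $E(M') \setminus B_L$ lies in any MWB, every MWB $B$ of $M'$ is contained in $B_L$; as $B$ and $B_L$ are both bases of $M'$ and hence have cardinality $r(M')$, we get $B = B_L$. Thus $B_L$ is the unique MWB of $M'$. I do not expect a genuine obstacle here; the only points requiring care are (i) unpacking the negation of \Cref{lem:matroid:mandatory}'s hypothesis correctly, and (ii) noticing that \Cref{asm:matroids:no-extrem-elements} upgrades the non-strict inequality $w_{f_j} \ge U_e$ to the strict inequality $w_{f_j} > w_e$, which is precisely what makes $f_j$ the \emph{unique} maximum on $C_j$. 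Note in particular that we never need $B_L$ to be an MWB for the true weights $w$ --- we only use it as a basis of $M'$ with a transparent fundamental-circuit structure.
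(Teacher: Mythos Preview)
Your proposal is correct and follows essentially the same approach as the paper: fix a lower limit basis $B_L$, use the negation of \Cref{lem:matroid:mandatory}'s hypothesis to get $w_{f_j}\ge U_e$ for all $e\in C_j-f_j$, upgrade this to a strict inequality $w_{f_j}>w_e$ via \Cref{asm:matroids:no-extrem-elements}, and conclude that each $f_j$ is the unique maximum on its fundamental circuit, whence $B_L$ is the unique MWB by the argument of \Cref{obs:matroid:unique:max}. The only cosmetic difference is that you spell out the final uniqueness step via \Cref{obs:matroid:cuts} and a cardinality argument, whereas the paper simply cites the proof of \Cref{obs:matroid:unique:max}.
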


\begin{proof}
Let $B_L$ denote a lower limit basis of $M'$. We claim that $B_L$ is the unique MWB for $M'$. 
To prove this claim, we show that each $f \in E(M') \setminus B_L$ has unique maximum weight in the fundamental circuit $C$ of~$f$ with respect to $B_L$. By the proof of~\Cref{obs:matroid:unique:max}, this implies that $B_L$ is a unique MWB of~$M'$. 

Consider an arbitrary $f \in E(M') \setminus B_L$ and consider the fundamental circuit $C$. By assumption, no element satisfies~\Cref{lem:matroid:mandatory} which implies $w_f \ge U_e$ for every $e \in C - f$.
Since $M'$ has, by assumption, no extreme case elements, this immediately implies $w_f > w_e$ for each  $e \in C - f$. Thus, $f$ has unique maximum weight in $C$. 
\end{proof}

With this~\Cref{coro:mandatory} and~\Cref{lem:unique} in place, we are ready to prove~\Cref{lem:unique:MWB}. 

\begin{proof}[Proof of~\Cref{lem:unique:MWB}]
    Fix an arbitrary MWB $B$ of $M' = M[D,K]$. To show the lemma, we argue that each element of $B$ (element of $E(M')\setminus B$) can be contracted (deleted) and added to $K$ (to $D$) while maintaining the invariant that $M[D,K]$ is a compatible minor of $\cM$. 

    As long as there exists an element $e$ that satisfies~\Cref{lem:matroid:mandatory}, we can use~\Cref{coro:mandatory} and without loss of generality replace the uncertainty area of $e$ with $\{w_e\}$. If $e \in B$, then $e$ must have minimum weight in $E \setminus \mspan[M']{B-e}$ and we can apply~\Cref{lem:matroid:cut:trivial} to contract $e$ and add it to $K$ while maintaining the invariant that $M' = M[D,K]$ is a compatible minor of $\cM$. If $e \not\in B$, then $e$ must have maximum weight in the fundamental cycle $C_e$ of $e$ with respect to $B$. Thus, we can apply~\Cref{lem:matroid:circuit:trivial} to delete $e$ and add it to $D$ while maintaining the invariant that $M' = M[D,K]$ is a compatible minor of $\cM$. In both cases, the new compatible minor still satisfies~\Cref{asm:matroids:no-extrem-elements}. Hence, we can repeat this process until no remaining element satisfies~\Cref{lem:matroid:mandatory}.

    Let $M' = M[D,K]$ be the resulting compatible minor. Note that $D \cap B = \emptyset$, which implies that $B' = B\setminus K$ is an MWB of $M'$. By~\Cref{lem:unique}, $M'$ has a unique MWB and, thus, $B'$ must be the unique MWB of $M'$. This allows us to use~\Cref{obs:matroid:unique:min} to contract all elements of $B'$ and add them to $K$ while maintaining our invariant. Similarly, we can use~\Cref{obs:matroid:unique:max} to delete all remaining elements and add them to $D$ while maintaining our invariant. 
\end{proof}

\subsection{Running time of~\Cref{alg:basis}}

\begin{restatable}{lemma}{lemBasisRunningTime}
    \label{lem:alg:basis:running-time}
    \Cref{alg:basis} runs in polynomial time.
\end{restatable}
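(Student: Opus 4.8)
The plan is to bound separately the number of iterations of the main loop of \Cref{alg:basis} and the work performed in one iteration (plus the closing step), and to show both are polynomial in $|E|$.

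\emph{Number of iterations.} I would first record the invariant, preserved by every rule, that $K$ is independent in $M$ and $K\cap D=\emptyset$: it holds initially, and every contraction rule adds to $K$ an element lying in a basis of the current minor $M[D,K]$ --- hence in a basis of $M$ together with $K$. Each pass through the while loop adds exactly one new element to $D\cup K$, so $|D\cup K|$ strictly increases and there are at most $|E|$ passes; the closing step runs once. That the loop body is always well defined --- some element satisfies one of the six conditions whenever $D\cup K\neq E$, or else $M[D,K]$ has already reached the situation of \Cref{asm:matroids:no-extrem-elements} handled by the last line --- is exactly what \Cref{sec:matroid:rules} together with \Cref{lem:unique:MWB} establishes; here I only invoke this to know the algorithm halts.

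\emph{Operating on the minor.} Rather than constructing $M[D,K]$ explicitly, I would simulate an independence oracle for $M':=M[D,K]$ via the invariant: $S\subseteq E\setminus(D\cup K)$ is independent in $M'$ iff $S\cup K$ is independent in $M$, so each oracle call to $M'$ costs one call to $M$. Consequently the standard subroutines all run in polynomial time on $M'$: its rank function and closure operator $\mspan[M']{\cdot}$; the greedy algorithm for a minimum-weight basis, including one forced to contain, or to avoid, a prescribed element; and, given a basis $B$, the fundamental circuit $C_e^B$ for $e\notin B$ and the fundamental cocircuit $E(M')\setminus\mspan[M']{B-e}$ for $e\in B$.

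\emph{Checking the six conditions.} For the current $M'$ and each candidate $e$ I would test the six conditions in the prescribed order. Conditions (i)--(iii) of the lemmas and the ``unique maximum/minimum'' hypotheses of \Cref{obs:matroid:unique:max} and \Cref{obs:matroid:unique:min} reduce to a constant number of closure computations and weight comparisons: for instance, ``$e$ is the unique maximum-weight element of some circuit of $M'$'' is equivalent to $e\in\mspan[M']{\{f\in E(M'):w_f<w_e\}}$, and ``$e$ lies in some MWB with $e$ of minimum weight in its fundamental cocircuit'' is equivalent to $e\notin\mspan[M']{\{f\in E(M'):w_f<w_e\}}$ (with an extra uniqueness test for \Cref{obs:matroid:unique:min}). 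The query-cost hypothesis (iv) of \Cref{lem:matroid:cut:non-trivial} and \Cref{lem:matroid:circuit:non-trivial} needs a little more care: for the circuit version I would restrict $E(M')$ to the elements \emph{permitted} to share a circuit with $e$ (those of weight $<w_e$, plus $e$, plus the weight-$w_e$ elements that are not in $E_{w_e}^L$ or have query cost at most $c_e$) and test membership of $e$ in a circuit of the restriction by one rank comparison; for the cocircuit version I would compute an MWB by a greedy order that processes the lighter elements, then $e$, then the relevant elements of $E_{w_e}^U$, then the rest, and check whether the fundamental cocircuit of $e$ meets $\{f\in E_{w_e}^U:c_f>c_e\}$. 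Each such test is a bounded number of greedy/rank computations, so a single loop pass costs $O(|E|)$ candidates times $\mathrm{poly}(|E|)$, and the closing step is one greedy computation; altogether \Cref{alg:basis} performs $O(|E|)$ passes, each polynomial, plus one polynomial closing step.

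\emph{Main obstacle.} I expect the delicate point to be exactly the query-cost conditions~(iv): one must argue that these restriction/greedy tests decide applicability of the two rules \emph{correctly} --- via an exchange argument showing that if \emph{some} witnessing basis (resp.\ circuit) exists, then the canonical greedy one (resp.\ the one inside the restriction) also witnesses it --- so that the six tests together are not merely sound but also fire whenever the structural results guarantee that progress is possible. Pinning down the correct ``permitted'' set and handling ties in weight among trivial, $E^L$- and $E^U$-elements is where the argument has to be careful.
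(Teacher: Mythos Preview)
Your outline is correct and shares the paper's structure: bound the iterations by $|E|$ and show each condition check is polynomial via the oracle for the minor. Where you diverge is in handling condition~(iv) of \Cref{lem:matroid:cut:non-trivial,lem:matroid:circuit:non-trivial}, which you rightly flag as the delicate point. Your restriction test on the circuit side is correct; your cocircuit-side greedy works only if ``relevant elements of $E_{w_e}^U$'' means precisely those with $c_f>c_e$, processed immediately after~$e$ --- with any other tie-breaking among equal-weight elements the test can return a false negative. The paper sidesteps this entirely: once \emph{any} pair $(e,B)$ satisfying (i)--(iii) is found (one greedy computation plus a possible swap), either $e$ already has maximum query cost in $(E(M')\setminus\mspan[M']{B-e})\cap E_{w_e}^U$, or the element $e'$ of maximum cost there does --- for the basis $B'=B-e+e'$, whose fundamental cocircuit at $e'$ coincides with that of $e$ in $B$ since $B'-e'=B-e$. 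This swap-to-max-cost observation removes the per-candidate exchange argument you anticipated; the paper handles \Cref{lem:matroid:circuit:non-trivial} by the dual remark.
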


\begin{proof}
The values of $L_e$ and $U_e$ can be easily computed at the beginning for each $e \in E$, using the list of intervals for each $A_e$.  Subsequently,
    no further information is needed about $A_e$.
    
    Since the number of iterations of the while-loop is bounded by the number of elements, it only remains to argue that each iteration can be executed in polynomial time. To this end, we mainly have to argue that, for an element $e$, we can check in polynomial time whether the prerequisites of ~\Cref{obs:matroid:unique:max} and \ref{obs:matroid:unique:min} or Lemmas~\ref{lem:matroid:circuit:non-trivial} to 
    \ref{lem:matroid:mandatory} are satisfied.

    We separately argue about the different lemmas and observations. Assume there is given a compatible minor $M':= M[D,K]$ of an uncertainty matroid $\cM$.
    \begin{enumerate}[(a)]
        \item \Cref{obs:matroid:unique:min}: {Checking whether an element $e$ satisfies the condition of~\Cref{obs:matroid:unique:min} is equivalent to checking whether $e$ is part of every MWB of $M'$.} 
    To check whether an element $e$ is part of every MWB
        w.r.t.~the weights~$w$, we can just compute some minimum-weight basis $B$ using the standard greedy algorithm. If $e \not\in B$, then~\Cref{obs:matroid:unique:min} clearly does not apply. Otherwise, i.e., $e \in B$, we can check whether $e$ can be exchanged with an element $e' \in E \setminus B$ with $w_{e'} \le w_e$. If this is possible, then $e$ is not part of every minimum-weight basis. If not, then $e$ is part of every minimum-weight basis. The running time of this test is dominated by the running time of the greedy algorithm and, thus, polynomial in the input size.
        \item \Cref{obs:matroid:unique:max}: Using matroid duality arguments allows us to argue in the same way as in the previous case.
        \item \Cref{lem:matroid:cut:non-trivial}: Checking whether there is an element $e$ that is \emph{non-trivial} and (i) there exists a basis $B$ of $M'$ with $e \in B$, (ii) $w_e = U_e$, (iii) $e$ has minimum weight in $E(M')\setminus \mspan[M']{B-e}$  and (iv) $e$ has maximum query cost in $E(M')\setminus \mspan[M']{B-e} \cap E_{w_e}^U$, can also be done in polynomial time. Note that, a non-trivial element $e$ that satisfies (i)-(iii) is a non-trivial element $e$ with $w_e=U_e$ that is in an MWB $B$. Given such $e$ we can find the MWBes that contain $e$ by computing some MWB and checking whether $e$ is included or trying to add it by exchanging it with an equal-weight element. Given such a pair of $e$ and MWB $B \ni e$, we can easily check whether $e$ has maximum query cost in $(E(M')\setminus \mspan[M']{B-e}) \cap E_{w_e}^U$. If this is the case, then we are done. If not, then some other element of $(E(M')\setminus \mspan[M']{B-e}) \cap E_{w_e}^U$ satisfies the condition of~\Cref{lem:matroid:cut:non-trivial} and we are done anyway.
        
        \item \Cref{lem:matroid:circuit:non-trivial}: Using matroid duality arguments allows us to argue in the same way as in the previous case.
                
        \item The conditions of \Cref{lem:matroid:cut:trivial} and \ref{lem:matroid:circuit:trivial} can be checked similarly to \cref{obs:matroid:unique:max} and \ref{obs:matroid:unique:min} plus checking \Cref{asm:matroids:no:non-trivial}, which is trivial.
        
        \item The final step of the algorithm consists of computing an MWB of the minor $M[D,K]$ which can be done by the well-known Greedy algorithm in polynomial time.
    \qedhere
    \end{enumerate}
\end{proof}

\section{Missing proofs from \Cref{sec: construct cert}}
\label{app:construct}

\lemmaInCert*

\begin{proof}
    Let $Q$ be a certificate that verifies $B$. Consider a non-basis element $e$, 
    and the fundamental circuit $C_e$ with respect to  $B$. Let $w^*$ be some weight assignment consistent with $Q$ (Definition \ref{def: weight assignment consistent with certificate}).
    Given $w^*$, if any element $f \in C_e - \{e\}$ satisfies $U_f \leq L_e$, it immediately follows that $w^*_e \geq w^*_f$, without including either $e$ or $f$ in $Q$. Define $F_e$ as the set of elements $f \in C_e - \{e\}$ where $U_f > L_e$. Additionally, define $\hat{F_e}$ as the set of elements $f \in C_e - \{e\}$ with $U_f > w_e$. Clearly, $\hat{F}_e \subseteq F$.

    Recall that if $e \in Q$, then $w_e = w^*_e$, and if $e \notin Q$, $w^*_e \in A_e$. We aim to establish the necessary and sufficient conditions for $Q$ to satisfy Corollary~\ref{coro:certificate:characterization:circuits} for all elements $f \in F$, (i.e., for $Q$ to satisfy $L_e(Q) \ge U_f(Q)$ for all $f \in F$). Observe the following cases:

    \noindent \textbf{Case 1: } The first case is if $w_e \geq U_f$ for all $f \in F_e$. Two sub-cases arise in this scenario.

     \noindent \textbf{Subcase 1.1: } There exists some $f' \in F$ where $w_{f'} > L_e$. In this case, we claim it is necessary and sufficient to have $e \in Q$. Suppose $e \notin Q$. Then by Definition \ref{def: weight assignment consistent with certificate}, we have $w^*_e \in A_e$. Since $w_{f'} > L_e$, Corollary \ref{coro:certificate:characterization:circuits} is not satisfied. Thus, $e$ must be in $Q$. Conversely, if $e \in Q$, we have that for all $f \in F_e$, $w^*_e = w_e = L_e(Q) \geq U_f \geq U_f(Q) \geq w_f$ and thus $w^*_e \geq w_f$ and Corollary \ref{coro:certificate:characterization:circuits} is satisfied.

     \noindent \textbf{Subcase 1.2: }  For all $f \in F_e$, $w_f \leq L_e$. We show that in this case, it is necessary and sufficient to have $e \in Q$ or $F_e\subseteq Q$. Suppose $e \not\in Q$ and $F_e\not\subseteq Q$. Then there exists some element $f' \in F_e\backslash Q$. By~\Cref{def: weight assignment consistent with certificate}, we have $w^*_e \in A_e$ and $w^*_{f'} \in [L_{f'}, U_{f'}]$. Since $U_{f'} > L_e$, Corollary \ref{coro:certificate:characterization:circuits} is not satisfied. Thus, it's clear that $e \in Q$ or $F_e\subseteq Q$. Conversely, suppose $e \in Q$ or $F_e\subseteq Q$.  If $e \in Q$ we have that $w^*_e = w_e = L_e(Q) \geq U_f \geq U_f(Q) \geq w_f$ and so $w^*_e \geq w_f$.  If $F_e\subseteq Q$, we have that for all $f \in F_e$, $w_e \geq L_e(Q) \geq L_e \geq U_f(Q) = w_f = w^*_f$ and thus $w_e \geq w^*_f$. Either way, Corollary \ref{coro:certificate:characterization:circuits} is satisfied.

     \noindent \textbf{Case 2: } The second case is if $w_e < U_f$ for some $f \in F_e$. Again two sub-cases arise in this scenario.
     
     \noindent \textbf{Subcase 2.1: } There exists some $f' \in C_e - e$ where $w_{f'} > L_e$. In this case, we show that it is necessary and sufficient to have $\hat{F_e} + e \subseteq Q$. Suppose $\hat{F_e} + e \not\subseteq Q$. Then either $e \notin Q$ or there exists some $f'' \in \hat{F_e} \backslash Q$. Suppose $e \notin Q$. By Definition \ref{def: weight assignment consistent with certificate}, we have $w^*_e \in A_e$. Since there exists some $w_{f'} > L_e$, Corollary \ref{coro:certificate:characterization:circuits} is not satisfied. Suppose $f'' \notin Q$. By Definition \ref{def: weight assignment consistent with certificate}, we have $w^*_{f''} \in [L_{f''}, U_{f''}]$. Since $w_e < U_{f''}$, Corollary \ref{coro:certificate:characterization:circuits} is not satisfied. Thus, $\hat{F_e}+e \subseteq Q$. 
     
     Conversely, if $\hat{F_e} + e \subseteq Q$ we have for all $f \in \hat{F_e}$ that $w^*_e = w_e = L_e(Q) \geq U_f \geq U_f(Q) = w_f = w^*_f$. For all $f \in F_e\setminus \hat{F_e}$, given that $L_e < U_f \leq w_e$, the inclusion of $e \in Q$ ensures that $w^*_e = w_e = L_e(Q) \geq U_f \geq U_f(Q) \geq w_f$. Therefore, Corollary \ref{coro:certificate:characterization:circuits} is satisfied.
     
     \noindent \textbf{Subcase 2.2: } For all $f' \in C_e - \{e\}$, $w_{f'} \leq L_e$. In this case, we show it is necessary and sufficient to have $F_e\subseteq Q$ or $\hat{F_e} + e\subseteq Q$. Suppose neither $F_e$ nor $\hat{F_e} + e$ is contained in $Q$. Then there exists some $f'' \in \hat{F_e} \backslash Q$. By Definition \ref{def: weight assignment consistent with certificate}, we have that $w^*_{f''} \in [L_{f''}, U_{f''}]$. Since $w_e < U_{f''}$, Corollary \ref{coro:certificate:characterization:circuits} is not satisfied. Thus, it's clear that all such $f''$ must be in $Q$. Now suppose some $f'' \in F_e\backslash \hat{F_e}$ is not in $Q$. This element has the property that $w_e \geq U_{f''} > L_e \geq w_{f''}$.  By Definition \ref{def: weight assignment consistent with certificate} we have that $w^*_{f''} \in [L_{f''}, U_{f''}]$. Since $L_e < U_{f''}$, Corollary \ref{coro:certificate:characterization:circuits} is not satisfied. However, having $e$ or all such $f'' \in Q$ will show that $w^*_e > w^*_{f''}$. 
    
    By ensuring that $F_e\subseteq Q$ or $\hat{F_e} + e\subseteq Q$ we have that either $w_e \geq L_e(Q) \geq L_e \geq U_f(Q) = w_{f} = w^*_{f}$ and so $w_e \geq w^*_{f}$ for all $f \in F$ or $w^*_e = w_e = L_e(Q) \geq U_{f} \geq U_f(Q) = w_{f} = w^*_{f}$ so $w^*_e \geq w^*_{f}$ for all $f'' \in \hat{F_e}$. For all $f'' \in F_e\setminus \hat{F_e}$, given that $L_e < U_{f} \leq w_e$, the inclusion of $e \in Q$ ensures that $w^*_e = w_e = L_e(Q) \geq U_{f} \geq U_f(Q) \geq w_{f}$. Therefore, Corollary \ref{coro:certificate:characterization:circuits}~is~satisfied.
\end{proof}

\coroVertexCover*

\begin{proof}
    Let $Q\subseteq E$ and fix an arbitrary $e \not\in B$. We argue that the conditions of~\Cref{lemma: cert contains for Le Ue} are satisfied for $C_e$ if and only if $Q$ covers the edges in $E^B_e$. Since $E^B = \bigcup_{e\in E \setminus B} E^B_e$, this then implies that $Q$ satisfies~\Cref{lemma: cert contains for Le Ue} if and only if $Q$ is a vertex cover. We separately consider the different cases for $C_e$:
    \begin{enumerate}
        \item If $w_e \geq U_f$ for all $f \in C_e-e$, and there is a $f' \in C_e - e$ such that $w_{f'} > L_e$,  then $\{e,e\} \in E^B$ by~\Cref{def:auxiliary_graph}. Thus, $e \in Q$ holds if and only if $Q$ covers $E_e^B$.
        \item If $w_e \geq U_f$  for all $f \in C_e-e$, and $w_f \leq L_e$ for all $f \in C_e-e$, then $E^B_e = \{ \{e,f\} \mid f \in F_e\}$ by~\Cref{def:auxiliary_graph}. Hence $e \in Q$ or $F_e \subseteq Q$ holds if and only if $Q$ covers $E_e^B$.
        \item If there exists $f \in C_e - e$ such that $w_e < U_f$, and there exists $f' \in C_e-e$ (could have $f' = f$) such that $w_{f'} > L_e$, then $E^B_e=\{\{f,f\} \mid f \in \hat{F_e} + e\}$ by~\Cref{def:auxiliary_graph}. Hence $\hat{F}_e + e \subseteq Q$ if and only if $Q$ covers~$E^B_e$.
        \item If there exists $f \in C_e-e$ such that $w_e < U_{f}$, and for all $f' \in C_e -e$ it holds that $w_{f'} \leq L_e$, then $E_e^B = \{\{e,f\} \mid f \in F_e \setminus \hat{F}_e\} \cup \{\{f,f\} \mid f \in \hat{F}_e\}$. The corresponding condition of~\Cref{lemma: cert contains for Le Ue} is $F_e\subseteq Q$ or $\hat{F_e} + e\subseteq Q$. Since  $\hat{F_e} \subseteq F_e$, the condition holds if and only if $\hat{F}_e \subseteq Q$ and $F_e\setminus \hat{F}_e \in Q$ or $e \in Q$. The latter holds if and only if $Q$ covers~$E^B_e$. \qedhere
    \end{enumerate}
\end{proof}

\thmFinalAlg*

\begin{proof}
    The correctness of the algorithm immediately follows 
    \Cref{coro:vertex:cover}. Thus, it only remains to argue about the running-time. The MWB $B$ can be computed in polynomial time by~\Cref{lem:alg:basis:running-time}. Based on $B$, the graph $G^B$ can also be computed in polynomial time. 

    To show that the minimum-weight vertex cover can be computed in polynomial-time, note that $G^B$ is bipartite except for the self-loop edges. Indeed, all non-loop edges have one endpoint in $B$ and one endpoint in $E\setminus B$.  To still compute a minimum-weight vertex cover for $G^B$ using the algorithms for bipartite graphs~(see, e.g.,~\cite{Schrijver2003book}), we define $\bar{G}^B$ as a copy of $G^B$ that replaces each loop-edge $\{e,e\} \in E^B$ with an edge $\{e,v_e\}$ to a distinct new vertex $v_e$ with weight $w(v_e) = w_e + \epsilon$ for some $\epsilon > 0$. Clearly, the size of $\bar{G}^B$ is polynomial in the size of $G^B$ and a set $Q$ is a minimum-weight vertex cover of $G^B$ if and only if $Q$ is a minimum-weight vertex cover for $\hat{G}^B$. Since $\hat{G}^B$ is bipartite, we can compute a minimum-weight vertex cover for $\hat{G}^B$, and thus for $G^B$, by using the classical algorithm for bipartite graphs.
\end{proof}

\section{Missing proofs of~\Cref{sec: applications}}
\label{app:applications}

\obsNaive*

\begin{proof}
    Let $Q$ denote the query set computed by the algorithm and let $Q^*$ denote the minimum-cardinality query set. 
    If the predictions are correct, $w_e = \hat{w}_e$ for all $e \in E$, then the second step of the algorithm computes and queries a minimum-cardinality certificate for the instance. Hence, Step~2 is never executed and the algorithm is $1$-consistent.
    If the predictions are incorrect and the algorithm queries at least one element, then also $|Q^*| \ge 1$. Since the algorithm queries at most $n$ elements, $n$-robustness follows immediately.
\end{proof}

\begin{restatable}{observation}{obsExistence}
\label{obs:existence}
    The MWB $\hat{B}_s^*$ as defined in~\Cref{sec:basis:preds} always exists.
\end{restatable}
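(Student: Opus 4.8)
The plan is to reduce the assertion to a statement about minimum-weight bases of $M=(E,\cI,w)$ and then prove that statement with a single run of the greedy algorithm under a suitable tie-breaking rule. Concretely, I would first argue that it suffices to show that $\hat{B}_s$ is contained in \emph{some} MWB of $M$. This suffices because every basis $B$ is verified by the full query set $Q=E$: then $L_e(Q)=U_e(Q)=w_e$ for every $e$, so the condition of~\Cref{lem:certificate:characterization:cuts} reduces to $w_e\le w_f$ for all $e\in B$ and $f\in(E\setminus\mspan[M]{B-e})-e$, which is exactly the statement that $B$ is an MWB. Hence every MWB has finite verification cost (at most $n$), and as soon as the family of MWBs containing $\hat{B}_s$ is known to be nonempty it is a nonempty finite family, so the minimum verification cost over it is attained by some MWB $\hat{B}_s^*$, and a minimum-cardinality certificate $Q'$ for $\hat{B}_s^*$ exists as well.

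Second, I would record the structural property that the elements of $\hat{B}_s$ enjoy. If $f\in\hat{B}_s$, then by definition $f\in C_e-e$ for some $e\in E\setminus\hat{B}$ whose fundamental circuit $C_e$ with respect to $\hat{B}$ is \emph{correct}; the second part of the definition of a correct circuit, applied to this $f\in C_e-e$, gives
\[ w_f \le w_g \qquad\text{for all } g\in E\setminus\mspan[M]{\hat{B}-f}, \qquad(\star) \]
equivalently (contrapositive): every $g$ with $w_g<w_f$ lies in $\mspan[M]{\hat{B}-f}$. Property $(\star)$ holds for each $f\in\hat{B}_s$ simultaneously, and, since $\hat{B}$ is a basis, $f\notin\mspan[M]{\hat{B}-f}$.

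Third, I would run the greedy algorithm for a minimum-weight basis of $M$, processing the elements in non-decreasing weight order and, among elements of equal weight, putting those in $\hat{B}_s$ before all others; let $B^*$ be the resulting MWB. I claim $\hat{B}_s\subseteq B^*$. If not, let $f\in\hat{B}_s$ be an element discarded by greedy; then at the step processing $f$ the current independent set $I$ satisfies $f\in\mspan[M]{I}$, and every element of $I$ was selected earlier, hence either has weight $<w_f$, or has weight $=w_f$ and was processed before $f$. Elements of the former kind lie in $\mspan[M]{\hat{B}-f}$ by the contrapositive of $(\star)$; elements of the latter kind lie in $\hat{B}_s\subseteq\hat{B}$ by the tie-breaking rule and are distinct from $f$, hence lie in $\hat{B}-f\subseteq\mspan[M]{\hat{B}-f}$. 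Thus $I\subseteq\mspan[M]{\hat{B}-f}$, and idempotence of the span yields $f\in\mspan[M]{I}\subseteq\mspan[M]{\hat{B}-f}$, contradicting $f\notin\mspan[M]{\hat{B}-f}$. Hence $\hat{B}_s\subseteq B^*$, and the observation follows from the reduction in the first paragraph.

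The only step requiring genuine care is the last one: one must track exactly which elements greedy has committed to by the time it reaches $f$ and verify that \emph{all} of them lie in the closure $\mspan[M]{\hat{B}-f}$. This is precisely what forces the tie-breaking rule to favor $\hat{B}_s$, and what makes it essential that $(\star)$ hold for every member of $\hat{B}_s$ rather than for a single distinguished element; the remaining bookkeeping is routine.
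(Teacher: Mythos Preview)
Your proof is correct and shares the paper's high-level structure: both reduce to showing that some MWB contains $\hat{B}_s$, and both extract from the definition of a correct circuit the key property $(\star)$ that each $f\in\hat{B}_s$ has minimum weight in $E\setminus\mspan[M]{\hat{B}-f}$. The difference is in the final step. The paper argues by iterated contraction: since each $f\in\hat{B}_s$ is individually in some MWB, one contracts the elements of $\hat{B}_s$ one after another while maintaining the invariant that there is an MWB of $M$ extending the contracted set. You instead run the greedy algorithm once with a tie-breaking rule that prioritizes $\hat{B}_s$, and then use a closure argument ($I\subseteq\mspan[M]{\hat{B}-f}$ forces $f\in\mspan[M]{\hat{B}-f}$, contradiction) to show no element of $\hat{B}_s$ can be discarded. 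Your route is arguably more self-contained: the paper's contraction sketch does not spell out why, after contracting some elements of $\hat{B}_s$, the remaining ones still satisfy the minimum-weight-in-cut property in the minor, whereas your greedy-plus-closure argument handles all of $\hat{B}_s$ in one shot. Conversely, the paper's approach ties in more naturally with the minor-based machinery developed elsewhere in the paper.
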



\begin{proof}
    To argue that MWB $\hat{B}_s^*$ always exists, it suffices to show that there exists at least one MWB $B$ with $\hat{B}_s \subseteq B$.
    If $\hat{B}_s = \emptyset$, then any basis $B$ contains $\hat{B}_s$ and the statement is clearly true. Hence, assume that $\hat{B}_s$ contains at least one element.

    By definition of~$\hat{B}_s^*$, each element $f \in \hat{B}_s^*$ is a member of the basis $\hat{B}$ and appears in some correct fundamental circuit $C_e$, $e \in E \setminus \hat{B}$, w.r.t.~$\hat{B}$. By definition of correct circuits, this means that $f$ satisfies $w_f \le w_g$ for all $g \in E \setminus \mspan[M]{\hat{B} - f}$. Hence, $f$ has minimum weight in $E \setminus \mspan[M]{\hat{B} - f}$, which means that there exists at least one MWB $B$ with $f \in B$. Since this argument holds for all $f \in \hat{B}_s$ and $\hat{B}_s$ is independent (as it is a subset of the basis $\hat{B}$), this implies that there exists an MWB $B$ with $\hat{B}_s \subseteq B$: Formally, we can contract the elements of $\hat{B}_s$ one after the other while maintaining the invariant that there exists an MWB of the original matroid consisting of the elements that have been contracted so far and an MWB of the resulting minor.
\end{proof}

\subsection{Proof of~\Cref{thm:basis:predictions}}

In this section, we prove~\Cref{thm:basis:predictions}, which we restate here for convenience.

\thmBasisPredictions*

We remark that the factor $c_{\max}$ of the error $\eta_2$ is indeed necessary: There are instances where all incorrect fundamental circuits $C_e$ are disjoint, and the algorithm has to identify the maximum-weight element in each $C_e - e$ as the incorrect $e$ does not have maximum-weight in $C_e$. The lower bound example given in ~\cite[Section~7]{GuptaSS16} shows that an adversary can force an algorithm to query all elements of $C_e - e$ while the optimum only queries a single one.

To prove the theorem, we consider the following algorithm.
This algorithm is a slight variation of the one given in~\Cref{sec:fixed:trees}, taking into  account that elements $e \in E \setminus \hat{B}$ might not be maximal on the circuit $C_e$. In such cases, the algorithm deviates from the procedure in~\Cref{sec:fixed:trees} by querying all elements in the circuit $C_e$ (cf.~Step~2biii).

\begin{enumerate}
        \item Initialize $Q = \emptyset$.
        \item For each $e \in E \setminus \hat{B}$:
        \begin{enumerate}
            \item Let $C_e$ denote the fundamental circuit of $e$ with respect to $\hat{B}$.
            \item While there exists an $f \in C_e - e$ with $U_f(Q) > L_e(Q)$:
            \begin{enumerate}
                \item If there is an element $g \in C_e \setminus (\{e\} \cup Q)$ with $U_{g}(Q) > L_e(Q)$, then let $\hat{g}$ denote such an element with maximum $U_{\hat{g}}(Q)$. Add $\hat{g}$ to $Q$ and query $\hat{g}$.
                \item If $e \not\in Q$, then add $e$ to $Q$ and query $e$.
                \item If $w_e < w_f$ for some $f \in C_e \cap Q$, then query $C_e$ and add $C_e$ to $Q$. 
            \end{enumerate}
        \end{enumerate}
        \item Return $Q$.
    \end{enumerate}

We begin by showing that the query set $Q$ computed by the algorithm is indeed a certificate that verifies some MWB $B$.

\begin{lemma}
    \label{lem:basis:pred:1}
    The query set $Q$ computed by the algorithm is a certificate that verifies some MWB $B$.
\end{lemma}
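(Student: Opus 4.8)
The plan is to (i) read off the structure of the query set $Q$ at the moment the algorithm halts, (ii) exhibit a concrete MWB $B$, and (iii) verify that $Q$ is a certificate for $B$ via~\Cref{coro:certificate:characterization:circuits}.

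\emph{Step 1 (final state of the algorithm).} I would first show that for every $e\in E\setminus\hat B$, when the outer-loop iteration for $e$ finishes, exactly one of two situations occurs. Either (case A) the while loop exited because its guard failed, so that $U_f(Q)\le L_e(Q)$ for all $f\in C_e-e$; since $L_e(Q)\le w_e$ and $U_f(Q)\ge w_f$, this entails $w_e\ge w_f$ for all $f\in C_e-e$, i.e.\ $e$ is a maximum-weight element of $C_e$. Or (case B) Step~2biii fired, in which case $C_e\subseteq Q$ (all weights on $C_e$ are now revealed) and there is some $f^\ast\in C_e-e$ with $w_{f^\ast}>w_e$. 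Moreover, once an element enters $Q$ it never leaves, and over the run $U_f(Q)$ can only decrease (toward $w_f$) while $L_e(Q)$ can only increase (toward $w_e$); hence a case-A inequality $U_f(Q)\le L_e(Q)$ established during $e$'s iteration still holds at termination. A minor point to settle here is that, after Step~2biii, the algorithm must break out of the while loop; this is the intended reading and the only one under which the loop terminates.

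\emph{Step 2 (choice of $B$).} Let $S$ be the set of case-B elements. Since $C_s^{\hat B}\subseteq Q$ for every $s\in S$, all weights on $X:=\bigcup_{s\in S}C_s^{\hat B}$ are revealed, so anything defined purely from these weights is a fixed combinatorial object. I would obtain $B$ from $\hat B$ by repeatedly performing weight-improving exchanges: while the current basis $B_i$ is not an MWB, choose a non-basis element $e$ that is not a maximum-weight element of its fundamental circuit with respect to $B_i$ and swap it with a maximum-weight element of that circuit. Each step strictly decreases the basis weight, so the process terminates at an MWB $B$ (by~\Cref{prop:max:circuit}). The key claim is that every element ever swapped in or out lies in $Q$, so that $B$ is independent of the weight assignment and depends only on the revealed weights: the first circuit on which $\hat B$ is improvable is exactly a case-B circuit (improvability of $\hat B$ at $C_e^{\hat B}$ is exactly case~B for that circuit), and one maintains, by a symmetric-exchange argument on circuits, the invariant that every subsequently improvable fundamental circuit stays inside $X\cup Q=Q$ and that no case-A circuit becomes improvable.

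\emph{Step 3 (verification).} Finally I would check, via~\Cref{coro:certificate:characterization:circuits}, that $U_e(Q)\le L_f(Q)$ for every $f\notin B$ and every $e\in C_f^B-f$. If $C_f^B\subseteq X\subseteq Q$, then $U_e(Q)=w_e\le w_f=L_f(Q)$ since $B$ is an MWB. Otherwise I would express the fundamental circuit $C_f^B$ in terms of $\hat B$-fundamental circuits through the chain of exchanges of Step~2 (a fundamental circuit with respect to $B$ arises from fundamental circuits with respect to $\hat B$ by repeated symmetric exchange), and then propagate the case-A inequalities $U_\cdot(Q)\le L_\cdot(Q)$ from Step~1 and the exact-equality inequalities available on $X$ along that chain to conclude $U_e(Q)\le L_f(Q)$.

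The main obstacle is Step~3, together with its supporting invariant in Step~2: the algorithm only certifies conditions on fundamental circuits with respect to the \emph{predicted} basis $\hat B$, whereas verification must be argued against fundamental circuits with respect to the possibly different MWB $B$. The heart of the argument is therefore a careful circuit-exchange bookkeeping showing that the limit inequalities $U_e(Q)\le L_f(Q)$ transfer correctly when passing from $\hat B$ to $B$, and that repairing the case-B circuits never destroys a case-A certification.
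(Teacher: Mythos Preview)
Your approach is correct in outline but takes a genuinely different route from the paper.

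You construct $B$ by a sequence of weight-improving exchanges starting from $\hat B$, and then transfer the certificate inequalities along that sequence via circuit-exchange bookkeeping. This can be made to work: the two invariants you need are (i) at every stage $B_i$, each improvable fundamental circuit lies entirely inside $Q$, and (ii) for every non-improvable non-basis element $g$ at stage $i$, one has $U_h(Q)\le L_g(Q)$ for all $h\in C_g^{B_i}-g$. Both follow by induction once you establish the containment $C_g^{B_{i+1}}\subseteq C_g^{B_i}\cup C_{e_i}^{B_i}$ (whenever $e_i\in C_g^{B_{i+1}}$) and use that $f_i$ is the maximum-weight element of $C_{e_i}^{B_i}\subseteq Q$, so that $U_h(Q)=w_h\le w_{f_i}=U_{f_i}(Q)\le L_g(Q)$ for each new element $h$ picked up from $C_{e_i}^{B_i}$. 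You correctly flag this bookkeeping as the main obstacle; it is real work, and your write-up leaves it at the level of a plan.

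The paper avoids all of this. It partitions the matroid by a single delete/contract step: delete every case-A non-basis element $e$ (the inequalities $L_e(Q)\ge U_f(Q)$ on $C_e^{\hat B}$ certify directly that $e$ may be excluded), and contract every $f\in\hat B$ that lies only in case-A circuits (then $E\setminus\mspan[M]{\hat B-f}-f$ consists solely of deleted elements, so $f$ is certifiably of minimum weight there). What remains is a minor $M'$ every element of which lies in some case-B circuit and hence in $Q$; since all weights on $M'$ are revealed, \emph{any} MWB $B'$ of $M'$ is trivially verified, and $B=K\cup B'$ is the desired MWB.

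The trade-off: your exchange argument is explicit and local but demands the inductive circuit-containment lemma and two coupled invariants; the paper's minor decomposition is a one-shot global argument that reduces everything to ``the unqueried part is already certified, the queried part is trivially certified,'' with no exchange sequence to track.
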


\begin{proof}
    We begin by constructing an MWB $B$ and then argue that $Q$ indeed verifies $B$. To this end, we first construct a minor $M'$ by executing the following deletions and contractions:
    \begin{itemize}
        \item Delete all elements $e \in E \setminus \hat{B}$ for which the algorithm does \emph{not} execute queries in Step~2biii.
        \item Contract all elements $f \in \hat{B}$ that are only part of fundamental circuits $C_e$, $e \in E \setminus \hat{B}$, for which the algorithm does \emph{not} execute queries in Step~2biii.
    \end{itemize}
    By definition, any deleted element $e$ has maximum weight in its fundamental circuit~$C_e$. Hence, there exists an MWB $B$ that does not contain any of the deleted elements $e$. Furthermore, by definition of the algorithm, we have $L_e(Q) \ge U_f(Q)$ for all $f \in C_e$. This means querying~$Q$ reveals sufficient information to conclude that all such elements $e$ can be deleted.

    Similarly, all contracted elements $f$ satisfy that $E \setminus \mspan[M]{\hat{B} - f}$ only contains $f$ and a subset of the deleted elements. Otherwise, $f$ would be part of a circuit $C_{e'}$ for which the algorithm executes queries in Step 2biii. This also means that all such elements $f$ have minimum  weights
    in $E \setminus \mspan[M]{\hat{B} - f}$, as otherwise they would need to be part of a circuit $C_{e'}$ for which the algorithm executes queries in Step 2biii. Hence, there exists an MWB that contains all the contracted elements. Furthermore, querying $Q$ reveals sufficient information to conclude that all such elements can be contracted (this is a consequence of the fact that  $L_e(Q) \ge U_g(Q)$ holds for all $g \in C_e$ for all deleted elements $e$).

    The arguments above imply that there exists an MWB $B$ that consists of the set of contracted elements $K$ and some MWB $B'$ of the minor $M'$. By assumption, the algorithm queries all elements of the minor $M'$ and, thus, querying $Q$ clearly reveals sufficient information to verify an MWB $B'$ of $M'$.
\end{proof}

Next, we show that $Q$ satisfies the guarantee of~\Cref{thm:basis:predictions}.

\begin{lemma}
    \label{lem:basis:pred:2}
    The query set $Q$ computed by the algorithm satisfies $|Q| \le 2 \cdot (|Q^*|+\eta_1)+\eta_2 \cdot c_{\max}$.
\end{lemma}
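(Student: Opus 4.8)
The plan is to reuse the per-circuit charging argument from the proof of \Cref{lem:comp:knowntree}, modified in two ways: incorrect fundamental circuits are paid for crudely (each contributes at most $c_{\max}$ queries, and there are only $\eta_2$ of them), whereas correct circuits are charged against a minimum-cardinality certificate $Q'$ of $\hat{B}_s^*$ instead of directly against $Q^*$, and then $|Q'|$ is converted to $|Q^*|$ via the identity $\eta_1 = |Q'| - |Q^*|$.

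First I would set up the partition exactly as in \Cref{lem:comp:knowntree}: index $E\setminus\hat{B}=\{e_1,\dots,e_k\}$ in the order of the outer loop, and let $Q_j$ be the set of elements queried during iteration $j$. Every element added during iteration $j$ (in Step~2bi, 2bii, or 2biii) lies in $C_{e_j}$, and a queried element is never queried again, so the sets $Q_j$ are pairwise disjoint, $Q=\bigsqcup_j Q_j$, and $|Q_j|\le|C_{e_j}|\le c_{\max}$ for all $j$. Call $j$ \emph{bad} if $C_{e_j}$ is an incorrect circuit and \emph{good} otherwise. There are at most $\eta_2$ bad indices, so $\sum_{j\text{ bad}}|Q_j|\le \eta_2\cdot c_{\max}$; it then suffices to prove $\sum_{j\text{ good}}|Q_j|\le 2|Q'| = 2(|Q^*|+\eta_1)$, since adding the two bounds gives the claim. (Existence of $\hat{B}_s^*$, hence of $Q'$, is \Cref{obs:existence}, and $\eta_1\ge 0$ because $Q^*$ is a minimum-cardinality certificate over all MWBs while $\hat{B}_s^*$ is one particular MWB.)

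For a good index $j$, $C_{e_j}$ is a correct circuit, so $w_{e_j}\ge w_f$ for all $f\in C_{e_j}-e_j$ and each such $f$ is minimum in $E\setminus\mspan[M]{\hat{B}-f}$; hence $f\in\hat{B}_s$. Thus $C_{e_j}-e_j\subseteq\hat{B}_s\subseteq\hat{B}_s^*$, and since $C_{e_j}$ is a circuit this forces $e_j\notin\hat{B}_s^*$ and makes $C_{e_j}$ the fundamental circuit of $e_j$ with respect to $\hat{B}_s^*$. Also, because $w_{e_j}\ge w_f$ on $C_{e_j}-e_j$, Step~2biii never fires during iteration $j$, so on good circuits the algorithm behaves exactly like the algorithm of \Cref{lem:comp:knowntree}. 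Since $\hat{B}_s^*$ is an MWB and $Q'$ is a certificate verifying it, $Q'$ satisfies \Cref{lemma: cert contains for Le Ue} for $\hat{B}_s^*$ and, in particular, for the circuit $C_{e_j}$. I would then apply the case analysis of \Cref{lem:comp:knowntree} (over $|Q_j|\in\{0,1,2\}$ and $|Q_j|>2$) verbatim, with $\hat{B}_s^*$ and $Q'$ in the roles of $B$ and $Q^*$, to obtain $|Q_j|\le 2\,|Q_j\cap Q'|$ for every good $j$; summing over good indices and using disjointness of the $Q_j$ gives $\sum_{j\text{ good}}|Q_j|\le 2\sum_{j\text{ good}}|Q_j\cap Q'|\le 2|Q'|$.

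The main obstacle will be making this transfer airtight rather than any new idea. The points to verify are: (i) the algorithm's decisions in iteration $j$ depend only on $C_{e_j}$ and the current limits, and $C_{e_j}$ is a fundamental circuit of $\hat{B}_s^*$ as well; (ii) every use of the benchmark certificate in the proof of \Cref{lem:comp:knowntree} goes through \Cref{lemma: cert contains for Le Ue} applied to the single circuit $C_{e_j}$, and the sets $F_{e_j},\hat{F}_{e_j}$ do not depend on the underlying basis, so it is harmless that $Q'$ is tailored to $\hat{B}_s^*$ and that the outer loop ranges over $E\setminus\hat{B}$ rather than $E\setminus\hat{B}_s^*$; and (iii) the prefix $P_j=\bigcup_{j'<j}Q_{j'}$ may contain elements queried while processing bad circuits, but the case analysis only needs that such elements are queried (so their limits equal their true weights), which still holds. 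Once these are confirmed, the inequality follows.
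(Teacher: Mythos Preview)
Your proposal is correct and follows essentially the same route as the paper: split the iterations into correct and incorrect circuits, charge the incorrect ones to $\eta_2\cdot c_{\max}$, and for the correct ones rerun the per-circuit case analysis of \Cref{lem:comp:knowntree} against the certificate $Q'$ of $\hat{B}_s^*$, using $|Q'|=|Q^*|+\eta_1$. Your write-up is in fact a bit more explicit than the paper's on why each correct $C_{e_j}$ is a fundamental circuit of $\hat{B}_s^*$ and why prior queries on bad circuits do not disturb the case analysis.
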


\begin{proof}
 Let $E\setminus \hat{B} = \{e_1, \ldots, e_k\}$ be indexed in the order in which the algorithm considers the elements of $E \setminus \hat{B}$ in the loop of Step 2. For each $j \in \{1,\ldots, k\}$ let $Q_j$ denote the set of elements that the algorithm queries during iteration $j$ of the loop of Step 2.

 Let $\hat{B}'_s$ denote the subset of $\hat{B}$ such that each element of  $\hat{B}'_s$ appears on a correct fundamental circuit w.r.t.~$\hat{B}$, and let $B^*_s$ denote the MWB with minimum verification cost among all MWB's that contain the set $\hat{B}_s'$. By~\Cref{obs:existence} such an MWB $B^*_s$ always exists.
 Finally, let $Q^*_s$ denote a minimum-cardinality certificate for verifying $B_s^*$.

We show that, for each $Q_j$ with a correct fundamental circuit $C_{e_j}$, it holds 
\begin{equation}
\label{eq:error:1}
\frac{|Q_j|}{|Q_s^* \cup Q_j|} \le 2.    
\end{equation}

Since the $Q_j$'s are pairwise disjoint, each $Q_j$ with incorrect $C_{e_j}$ contributes one to $\eta_2$, and each $Q_j$ satisfies $|Q_j| \le c_{\max}$,  the inequality~\eqref{eq:error:1} implies
\begin{align*}
    |Q| \le 2 \cdot |Q^*_s| + \eta_2 \cdot c_{\max} \le 2 \cdot |Q^*| + 2 \cdot \eta_1 + \eta_2 \cdot c_{\max},
\end{align*}
where the second inequality follows from the definition of $\eta_1$. Hence, it remains to show that~\eqref{eq:error:1} holds for each $Q_j$ with a correct fundamental circuit $C_{e_j}$.

This proof will essentially be a repetition of the proof of~\Cref{lem:comp:knowntree}. We will exploit that each correct fundamental circuit $C_{e_j}$ with respect to $\hat{B}$ is also a fundamental circuit of $B^*_s$ by definition of $B^*_s$. Hence $Q^*_s$ has to satisfy the conditions of~\Cref{lemma: cert contains for Le Ue} for each such $C_{e_j}$.

 Fix an arbitrary $Q_j$ with correct $C_{e_j}$. Define $P_j = \bigcup_{j' < j} Q_{j'}$.
 We show that $Q_j$ satisfies~\eqref{eq:error:1} via case distinction over the cardinality of $Q_j$.
    \begin{enumerate}
        \item If $|Q_j| = 0$, then~\eqref{eq:error:1} holds trivially.
        \item If $|Q_j| = 1$, then the definition of the loop of Step 2b implies that we must have $Q_j = \{e_j\}$ as $e_j \in Q_j$ holds by Step 2bii of the algorithm. The fact that the loop of Step 2b is executed but only queries $e_j$ implies that all $g \in C_{e_j} \setminus (\{e\} \cup P_j)$ have $w_g \le U_g(P_j) \le L_{e_j}(P_j) = L_{e_j}$ (cf.~Step 2bi). However, then the fact that the loop of Step 2b is executed implies that there is some $g \in C_{e_h} \cap P_j$ with $U_g(P_j) = w_g \ge L_{e_j}(P_j) = L_{e_j}$. This in turn implies that $C_{e_j}$ either satisfies the first or third case of~\Cref{lemma: cert contains for Le Ue}. Hence, $e \in Q^*_s$ and $\frac{|Q_j|}{|Q_j \cap Q^*_s|}=1$.
        \item If $|Q_j| = 2$, then~\Cref{lemma: cert contains for Le Ue} implies $|Q_j \cap Q^*_s| \ge 1$. Hence, $\frac{|Q_j|}{|Q_j \cap Q^*_s|}= \le 2$
        \item If $|Q_j| > 2$, then the loop of Step 2b is executed at least twice. 
        For the two elements that are queried in the first iteration, we can argue that $Q^*_s$ has to contain at least one of them as in the previous case.
        
        Let $Q_j'$ denote the subset of $Q_j$ that is queried after the first iteration of this Step 2b loop. All $g \in Q_{j}'$ have to satisfy $U_g > w_{e_j}$, as otherwise the loop would terminate before $g$ is queried. This implies that all $g \in Q_j'$ are in $\hat{F}_{e_j}$ (defined as in~\Cref{lemma: cert contains for Le Ue}). Hence, $C_{e_j}$ either satisfies case 3 or 4 of~\Cref{lemma: cert contains for Le Ue} and, thus, $Q_j' \subseteq \hat{F}_{e_j} \subseteq Q^*$. We can conclude with $\frac{|Q_j|}{|Q_j \cap Q^*_s|}\le \frac{2+ |Q_j'|}{1+ |Q_j'|} \le 2$. \qedhere
    \end{enumerate}    
\end{proof}

\Cref{lem:basis:pred:1,lem:basis:pred:2} and the observation that the algorithm never executes more than~$n$ queries imply~\Cref{thm:basis:predictions}.

\end{document}